\documentclass[sn-mathphys,Numbered]{sn-jnl}

\usepackage{subcaption}
\captionsetup[subfigure]{labelfont=rm}
\usepackage{graphicx}%
\usepackage{multirow}%
\usepackage{amsmath,amssymb,amsfonts}%
\usepackage{amsthm}%
\usepackage{mathrsfs}%
\usepackage[title]{appendix}%
\usepackage{xcolor}%
\usepackage{textcomp}%
\usepackage{manyfoot}%
\usepackage{booktabs}%
\usepackage{enumitem}
\usepackage{bbm}
\usepackage{footnote}
\usepackage{ragged2e,booktabs,tabularx,adjustbox}
\setlength{\bibsep}{0.0pt}

\newcommand{\scrF}{\mathscr{F}}




\newtheorem{theorem}{Theorem}
%
\newtheorem{corollary}[theorem]{Corollary}

\raggedbottom

\begin{document}

\title[Article Title]{Fitting the seven-parameter Generalized Tempered Stable distribution to the financial data}


\author*[]{\fnm{Aubain} \sur{Nzokem}}\email{hilaire77@gmail.com}

\author[]{\fnm{Daniel} \sur{Maposa}}




\abstract{The paper proposes and implements a methodology to fit a seven-parameter Generalized Tempered Stable (GTS) distribution to financial data. The nonexistence of the mathematical expression of the GTS probability density function makes the maximum likelihood estimation (MLE) inadequate for providing parameter estimations. Based on the function characteristic and the fractional Fourier transform (FRFT), we provide a comprehensive approach to circumvent the problem and yield a good parameter estimation of the GTS probability. The methodology was applied to fit two heavily tailed data (Bitcoin and Ethereum returns) and two peaked data (S\&P 500 and SPY ETF returns). For each index, the estimation results show that the six-parameter estimations are statistically significant except for the local parameter, $\mu$. The goodness-of-fit was assessed through Kolmogorov-Smirnov, Anderson-Darling, and Pearson's chi-squared statistics.  While the two-parameter geometric Brownian motion (GBM) hypothesis is always rejected, the GTS distribution fits significantly with a very high p-value; and outperforms the Kobol, Carr-Geman-Madan-Yor, and Bilateral Gamma distributions.}

\keywords{Generalized Tempered Stable (GTS), Fractional Fourier Transform (FRFT), Function Characteristic, Kolmogorov-Smirnov (K-S), Maximum Likelihood Estimation (MLE)}



\maketitle
\section{Introduction}\label{sec1}
\noindent
Modeling the high-frequency asset return with the normal distribution is the underlying assumption in many financial tools, such as the Black-Scholes-Merton option pricing model and the risk metric variance-covariance technique to Value-at-Risk (VAR). However, substantial empirical evidence rejects the normal distribution for various asset classes and financial markets. The symmetric and rapidly decreasing tail properties of the normal distribution cannot describe the skewed and fat-tailed properties of the asset return distribution.\\

\noindent
The $\alpha$-stable distribution has been proposed \cite{ken1999levy,nolan2020univariate} as an alternative to the normal distribution for modeling asset return and many types of physical and economic systems. The theoretical and empirical argument is that the stable distribution generalizes the Central Limit Theorem regardless of the variance nature (finite or infinite)\cite{rachev2011financial,nzokem2024self}. There are two major drawbacks \cite{nolan2020univariate,borak2005stable}: firstly, the lack of closed formulas for densities and distribution functions, except for the normal distribution($\alpha=2$), Cauchy distribution ($\alpha=1$) and L\'evy distribution ($\alpha=\frac{1}{2}$)\cite{Tsallis1997}; secondly, most of the moments of the stable distribution are infinite. An infinite variance of the asset return leads to an infinite price for derivative instruments such as options.\\

\noindent
The Generalized Tempered Stable (GTS) distribution was developed to overcome the shortcomings of the two distributions, and the tails of the GTS distribution are heavier than the normal distribution but thinner than the stable distribution\cite{grabchak2010financial, kim2009new}. The general form of the GTS distribution can be defined by the following L\'evy measure ($V(dx)$) (\ref{eq:l1}):
 \begin{align}
V(dx) =\left(\frac{\alpha_{+}e^{-\lambda_{+}x}}{x^{1+\beta_{+}}} \boldsymbol{1}_{x>0} + \frac{\alpha_{-}e^{-\lambda_{-}|x|}}{|x|^{1+\beta_{-}}} \boldsymbol{1}_{x<0}\right) dx \label{eq:l1}
 \end{align}
\noindent
where $0\leq \beta_{+}\leq 1$, $0\leq \beta_{-}\leq 1$, $\alpha_{+}\geq 0$, $\alpha_{-}\geq 0$, $\lambda_{+}\geq 0$ and  $\lambda_{-}\geq 0$. More details on Tempered Stable distribution are provided \cite{kuchler2013tempered,rachev2011financial}.\\

\noindent
 The rich class of GTS distribution  (\ref{eq:l1}) has a myriad of applications ranging from financial to mathematical physics and economic systems. However, few studies \cite{massing2024, nzokem2022fitting, fallahgoul2021modelling} have covered the methods and techniques to estimate the parameters of the GTS distribution. This study aims to contribute to the literature by providing a methodology for fitting seven-parameter GTS distribution. As illustrations, the study used four historical prices: two heavily tailed data (Bitcoin and Ethereum returns) and two peaked data (S\&P 500 and SPY ETF returns). The GTS distribution is fitted to the underlying distribution of each data index and the goodness-of-fit analysis is carried out.  The main disadvantage of the GTS distribution is the lack of the closed form of the density, cumulative, and derivative functions. We use a computational algorithm, called the enhanced fast FRFT scheme \cite{nzokem2023enhanced}, to circumvent the problem.\\

\noindent
The rest of the paper is organized as follows: Section 2 provides some theoretical framework of the GTS distribution. Section 3 presents the multivariate maximum likelihood (ML) method and the analytic version of the two-parameter normal distribution. Section 4 presents the results of the GTS parameter estimations along with the associated statistical tests for the heavily-tailed Bitcoin and Ethereum returns. Section 5 fits the GTS distribution to the traditional indices S\&P 500 and SPY ETF returns, while Section 6 presents the results of the goodness-of-fit test. Section 7 provides the concluding remarks.

\section{Generalized Tempered Stable (GTS) Distribution}
\noindent
The L\'evy measure of the GTS distribution ($V(dx)$) is defined in (\ref{eq:l2}) as a product of a tempering function $q(x)$ and a L\'evy measure of the $\alpha$-stable distribution $V_{stable}(dx)$:

\begin{equation}
 \begin{aligned}
q(x) &= e^{-\lambda_{+}x} \boldsymbol{1}_{x>0} + e^{-\lambda_{-}|x|} \boldsymbol{1}_{x<0} \\
V_{stable}(dx) &=\left(\alpha_{+}\frac{1}{x^{1+\beta_{+}}} \boldsymbol{1}_{x>0} + \alpha_{-}\frac{1}{|x|^{1+\beta_{-}}} \boldsymbol{1}_{x<0} \right) dx \\ \label{eq:l2}
V(dx) =q(x)V_{stable}(dx)&=\left(\alpha_{+}\frac{e^{-\lambda_{+}x}}{x^{1+\beta_{+}}} \boldsymbol{1}_{x>0} + \alpha_{-}\frac{e^{-\lambda_{-}|x|}}{|x|^{1+\beta_{-}}} \boldsymbol{1}_{x<0}\right) dx
 \end{aligned}
 \end{equation}

\noindent
where $0\leq \beta_{+}\leq 1$, $0\leq \beta_{-}\leq 1$, $\alpha_{+}\geq 0$, $\alpha_{-}\geq 0$, $\lambda_{+}\geq 0$ and  $\lambda_{-}\geq 0$.\\
The six parameters that appear have important interpretations. $\beta_{+}$ and $\beta_{-}$ are the indexes of stability bounded below by 0 and above by 2 \cite{borak2005stable}. They capture the peakedness of the distribution similarly to the $\beta$-stable distribution, but the distribution tails are tempered. If $\beta$ increases (decreases), then the peakedness decreases (increases). $\alpha_{+}$ and $\alpha_{-}$ are the scale parameters, also called the process intensity \cite{boyarchenko2002non}; they determine the arrival rate of jumps for a given size. $\lambda_{+}$ and $\lambda_{-}$ control the decay rate on the positive and negative tails. Additionally, $\lambda_{+}$ and $\lambda_{-}$ are also skewness parameters. If $\lambda_{+}>\lambda_{-}$ ($\lambda_{+}<\lambda_{-}$), then the distribution is skewed to the left (right), and if $\lambda_{+}=\lambda_{-}$, then it is symmetric \cite{rachev2011financial, fallahgoul2019quantile}. $\alpha$ and $\lambda$ are related to the degree of peakedness and thickness of the distribution. If $\alpha$ increases (decreases), the peakedness and the thickness decrease (increase). Similarly, if $\lambda$ increases (decreases), then the peakedness increases (decreases) and the thickness decreases (increases) \cite{bianchi2019handbook}.  For more details on tempering function and L\'evy measure of tempered stable distribution, refer to \cite{kuchler2013tempered,rachev2011financial}.\\

\noindent
The activity process of the GTS distribution can be studied from the integral (\ref{eq:l3}) of the L\'evy measure (\ref{eq:l2}):
\begin{equation}
\begin{aligned}
 \int_{-\infty}^{+\infty} V(dx) =\begin{cases}
    +\infty   \quad \quad \quad \quad \quad \quad \quad \quad \quad \quad \quad  \quad \text{if }{0\leq \beta_{+} < 1 \wedge 0\leq\beta_{-} <1}    \\
  \alpha_{+}{\lambda_{+}}^{\beta_{+}}\Gamma(-\beta_{+}) +  \alpha_{-}{\lambda_{-}}^{\beta_{-}}\Gamma(-\beta_{-}) \quad \text{if }{\beta_{+} < 0\wedge\beta_{-} < 0}. \end{cases} \label{eq:l3}
 \end{aligned}
 \end{equation}

\noindent
As shown in (\ref{eq:l3}), if $\beta_{+}<0$ and $\beta_{-}<0$, $GTS(\beta_{+}, \beta_{-}, \alpha_{+},\alpha_{-}, \lambda_{+}, \lambda_{-})$ is of finite activity process and can be written as a compound Poisson \cite{barndorff2002financial}. When $0\leq \beta_{+}<1$ and $0\leq \beta_{-}< 1$, this L\'evy density ($V(dx)$) is not integrable as it goes off to infinity too rapidly as $x$ goes to zero \cite{barndorff2002financial}, which means in practice that there will be a large number of very small jumps. As shown in (\ref{eq:l3}), $GTS(\beta_{+}, \beta_{-}, \alpha_{+},\alpha_{-}, \lambda_{+}, \lambda_{-})$ is an infinite activity process with infinite jumps in any given time interval. \\

\noindent
In addition to the infinite activities process,  the variation of the process can be studied through the following integral:
\begin{equation*}
 \begin{aligned}
 \int_{-1}^{1} |x|V(dx)&= \int_{-1}^{0}|x|V(dx) + \int_{0}^{1} |x|V(dx) \\
  &=\alpha_{-}\lambda_{-}^{\beta_{-}-1}\gamma(1-\beta_{-},\lambda_{-}) + \alpha_{+}\lambda_{+}^{\beta_{+}-1}\gamma(1-\beta_{+},\lambda_{+})
  \end{aligned}
\end{equation*}
\noindent
where $\gamma(s,x)=\int_{0}^{x}y^{s-1}e^{-y}dy$ is the lower incomplete gamma function.\\

\indent { And we have:}
\begin{align}
  \int_{-1}^{1} |x|V(dx) <+ \infty  \hspace{5mm}
   \hbox{ if $0< \beta_{-}\leq 1$ \& $0< \beta_{+}\leq 1$}.\label{eq:l4}
\end{align}
\noindent
As shown in (\ref{eq:l4}), $GTS (\beta_{+}, \beta_{-}, \alpha_{+}, \alpha_{-}, \lambda_{+}, \lambda_{-})$ generates a finite variance process, which is contrary to the Brownian motion process. $GTS (\beta_{+}, \beta_{-}, \alpha_{+}, \alpha_{-}, \lambda_{+}, \lambda_{-})$ generates a type B L\'evy process \cite{barndorff2001levy}, which is a purely non-Gaussian infinite activity L\'evy process of finite variation whose sample paths have an infinite number of small jumps and a finite number of large jumps in any finite time interval.\\

\noindent
The GTS distribution can be denoted by $X\sim GTS(\beta_{+}, \beta_{-}, \alpha_{+},\alpha_{-}, \lambda_{+}, \lambda_{-})$ and $X=X_{+} - X_{-}$ with $X_{+} \geq 0$, $X_{-}\geq 0$. $X_{+}\sim TS(\beta_{+}, \alpha_{+},\lambda_{+})$ and $X_{-}\sim TS(\beta_{-}, \alpha_{-},\lambda_{-})$. By  adding the location parameter, the GTS distribution becomes $GTS(\mu, \beta_{+}, \beta_{-}, \alpha_{+}, \alpha_{-}, \lambda_{+}, \lambda_{-})$, and we have (\ref{eq:l5}):
 \begin{align}
Y=\mu + X=\mu + X_{+} - X_{-}, \quad \quad  Y\sim GTS(\mu, \beta_{+}, \beta_{-}, \alpha_{+}, \alpha_{-}, \lambda_{+}, \lambda_{-}). \label {eq:l5}
  \end{align}

\subsection{ GTS Distribution and Characteristic Exponent}
\begin{theorem}\label{lem5} \ \\
Consider a variable $Y \sim GTS(\mu, \beta_{+}, \beta_{-}, \alpha_{+},\alpha_{-}, \lambda_{+}, \lambda_{-})$. The characteristic exponent can be written as:
 \begin{equation}
\begin{aligned}
\resizebox{0.935\hsize}{!}{$\Psi(\xi)=\mu\xi i+\alpha_{+}\Gamma(-\beta_{+})\left((\lambda_{+} - i\xi)^{\beta_{+}}-{\lambda_{+}}^{\beta_{+}}\right)+\alpha_{-}\Gamma(-\beta_{-})\left((\lambda_{-}+i\xi)^{\beta_{-}}-{\lambda_{-}}^{\beta_{-}}\right)$}. \label{eq:l6}
 \end{aligned}
 \end{equation}
\end{theorem}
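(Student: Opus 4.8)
The plan is to read $\Psi$ off the L\'evy--Khintchine representation. As recorded above, $GTS(\mu,\beta_{+},\beta_{-},\alpha_{+},\alpha_{-},\lambda_{+},\lambda_{-})$ generates a type B L\'evy process: purely non-Gaussian and of finite variation, so that $\int_{|x|\leq1}|x|\,V(dx)<\infty$. Hence its characteristic function is $\mathbb{E}\!\left[e^{i\xi Y}\right]=e^{\Psi(\xi)}$ with
\begin{equation*}
\Psi(\xi)=i\mu\xi+\int_{-\infty}^{+\infty}\bigl(e^{i\xi x}-1\bigr)V(dx),
\end{equation*}
the location $\mu$ appearing as the genuine drift because in the finite-variation regime no compensating term is needed inside the integral. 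Inserting the explicit measure (\ref{eq:l2}) and splitting at the origin --- the change of variable $x\mapsto-x$ turning the $x<0$ part into an integral over $(0,\infty)$ --- the statement reduces to evaluating
\begin{equation*}
I_{+}(\xi)=\alpha_{+}\!\int_{0}^{\infty}\!\bigl(e^{i\xi x}-1\bigr)\frac{e^{-\lambda_{+}x}}{x^{1+\beta_{+}}}\,dx,\qquad I_{-}(\xi)=\alpha_{-}\!\int_{0}^{\infty}\!\bigl(e^{-i\xi x}-1\bigr)\frac{e^{-\lambda_{-}x}}{x^{1+\beta_{-}}}\,dx .
\end{equation*}

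\noindent
The core computation is the identity
\begin{equation*}
\int_{0}^{\infty}\bigl(e^{-ax}-e^{-bx}\bigr)x^{-1-\beta}\,dx=\Gamma(-\beta)\bigl(a^{\beta}-b^{\beta}\bigr),\qquad 0<\beta<1,\ \operatorname{Re}a\geq0,\ \operatorname{Re}b\geq0,
\end{equation*}
which I would prove by integration by parts using the antiderivative $-\beta^{-1}x^{-\beta}$ of $x^{-1-\beta}$: the boundary terms vanish (at $0$ because $e^{-ax}-e^{-bx}=O(x)$, so the product is $O(x^{1-\beta})\to0$ for $\beta<1$; at $\infty$ trivially), leaving $\beta^{-1}\!\int_{0}^{\infty}x^{-\beta}\bigl(be^{-bx}-ae^{-ax}\bigr)dx=\beta^{-1}\Gamma(1-\beta)\bigl(b^{\beta}-a^{\beta}\bigr)$, after which the recursion $\Gamma(1-\beta)=-\beta\,\Gamma(-\beta)$ finishes it. Specialising $a=\lambda_{+}-i\xi,\ b=\lambda_{+}$ gives $I_{+}(\xi)=\alpha_{+}\Gamma(-\beta_{+})\bigl((\lambda_{+}-i\xi)^{\beta_{+}}-\lambda_{+}^{\beta_{+}}\bigr)$, and $a=\lambda_{-}+i\xi,\ b=\lambda_{-}$ gives $I_{-}(\xi)=\alpha_{-}\Gamma(-\beta_{-})\bigl((\lambda_{-}+i\xi)^{\beta_{-}}-\lambda_{-}^{\beta_{-}}\bigr)$. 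Adding $i\mu\xi+I_{+}(\xi)+I_{-}(\xi)$ yields (\ref{eq:l6}).

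\noindent
The points that I expect to need genuine care, as opposed to the bookkeeping above, are threefold. \emph{Branch of the power:} since $\operatorname{Re}(\lambda_{+}-i\xi)=\lambda_{+}\geq0$ and $\operatorname{Re}(\lambda_{-}+i\xi)=\lambda_{-}\geq0$, the arguments stay in the closed right half-plane, so $z\mapsto z^{\beta}$ and $z\mapsto z^{\beta-1}$ are taken on the principal branch, and the evaluation $\int_{0}^{\infty}x^{-\beta}e^{-cx}dx=\Gamma(1-\beta)c^{\beta-1}$ for complex $c$ with $\operatorname{Re}c\geq0$ requires rotating the contour from the positive real axis to the ray through $c$ and a Jordan-type bound to discard the arc, first assuming $\lambda_{\pm}>0$ and then passing to $\lambda_{\pm}=0$ by continuity. \emph{Range of $\beta$:} the displayed identity is first obtained for $0<\beta<1$, after which both sides are holomorphic in $\beta$ on a neighbourhood of this interval --- for $\beta<0$ the L\'evy measure is finite and $\int_{0}^{\infty}e^{-ax}x^{-1-\beta}dx=\Gamma(-\beta)a^{\beta}$ converges directly at the origin --- so the formula extends by analytic continuation to the whole admissible parameter range, the apparent pole of $\Gamma(-\beta)$ at $\beta=0$ being cancelled by the vanishing of $a^{\beta}-b^{\beta}$ and hence removable, and the boundary value $\beta=1$ recovered as a limit. \emph{Legitimacy of the splitting:} interchanging the integral with the decomposition of $V$, and the integrability of $\bigl(e^{i\xi x}-1\bigr)V(dx)$ near the origin, follow from $\int_{|x|\leq1}|x|\,V(dx)<\infty$ together with $|e^{i\xi x}-1|\leq\min(2,|\xi||x|)$ and the exponential tempering at infinity. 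Of these, the contour rotation pinning down $c^{\beta-1}$ for complex $c$ is the only real obstacle; the rest is routine.
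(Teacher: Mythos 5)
Your proposal is correct, and it reaches the stated formula by a genuinely different computation than the paper's. Both arguments begin identically: invoke the L\'evy--Khintchine representation in the finite-variation (type B) regime, so that $\Psi(\xi)=i\mu\xi+\int(e^{i\xi x}-1)V(dx)$ with no compensator, and reduce to the two one-sided integrals $I_{\pm}(\xi)$. The divergence is in how those integrals are evaluated. The paper expands $e^{iy\xi}-1$ in its Taylor series, integrates term by term against $e^{-\lambda_{+}y}y^{-1-\beta_{+}}$ to produce $\alpha_{+}\lambda_{+}^{\beta_{+}}\Gamma(-\beta_{+})\sum_{k\geq1}\binom{\beta_{+}}{k}(-i\xi/\lambda_{+})^{k}$, and then recognizes the binomial series for $(1-i\xi/\lambda_{+})^{\beta_{+}}-1$. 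You instead write $I_{+}$ as $\int_{0}^{\infty}(e^{-ax}-e^{-bx})x^{-1-\beta}dx$ with $a=\lambda_{+}-i\xi$, $b=\lambda_{+}$, integrate by parts once to lower the singularity, and finish with the complex gamma integral $\int_{0}^{\infty}x^{-\beta}e^{-cx}dx=\Gamma(1-\beta)c^{\beta-1}$ plus the recursion $\Gamma(1-\beta)=-\beta\Gamma(-\beta)$; your bookkeeping here is correct (the boundary terms vanish exactly as you say, and the sign works out to $\Gamma(-\beta)(a^{\beta}-b^{\beta})$). What your route buys is uniformity in $\xi$: the paper's binomial series converges only for $|\xi|\leq\lambda_{+}$, so its identity strictly holds on a disc and must be extended by analytic continuation, a step the paper leaves implicit; your integration-by-parts derivation is valid for all real $\xi$ at once, at the price of justifying the contour rotation that pins down $c^{\beta-1}$ on the principal branch. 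Your additional remarks on the branch choice, the removable singularity at $\beta=0$ (consistent with the paper's Theorem~\ref{lem6}), and the legitimacy of splitting the integral are all correct and go beyond what the paper records.
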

\begin{proof} \ \\
\noindent
$V(dx)$ in (\ref{eq:l2}) is a L\'evy measure. The following relation is satisfied from (\ref{eq:l4}):
\begin{align*}
  \int_{-\infty}^{+\infty} Min(1,|x|)V(dx) <+ \infty.
\end{align*}
More details on the proof are provided in \cite{mca29030044}.\\

\noindent
The L\'evy-Khintchine representation \cite{barndorff2002financial} for non-negative L\'evy process is applied on $Y$. $Y=\mu +X =\mu + X_{+} - X_{-}$ and we have:
 \begin{equation}
 \begin{aligned}
\Psi(\xi)&= Log\left(Ee^{i Y\xi}\right)=i\mu\xi + Log\left(Ee^{i X_{+}\xi}\right) + Log\left(Ee^{-i X_{-}\xi}\right) \label {eq:l6a}\\
&= i\mu\xi + \int_{0}^{+\infty} \left(e^{iy\xi} -1\right)\frac{\alpha_{+}e^{-\lambda_{+}y}}{y^{1+\beta_{+}}}dy + \int_{0}^{+\infty} \left(e^{-iy\xi} -1\right)\frac{\alpha_{-}e^{-\lambda_{-}y}}{y^{1+\beta_{-}}}dy,  \end{aligned}
 \end{equation}

  \begin{equation}
 \begin{aligned}
 \int_{0}^{+\infty} \left(e^{iy\xi} -1\right)\frac{\alpha_{+}e^{-\lambda_{+}y}}{y^{1+\beta_{+}}}dy&=\alpha_{+}\lambda_{+}^{\beta_{+}}\Gamma(-\beta_{+})\sum_{k=1}^{+\infty}{\frac{\Gamma(k-\beta_{+})}{\Gamma(-\beta_{+})k!}(\frac{i\xi}{\lambda_{+}})^{k}}\\
 &=\alpha_{+}\lambda_{+}^{\beta_{+}}\Gamma(-\beta_{+})\sum_{k=1}^{+\infty}{\binom{\beta_{+}}{k}(-\frac{i\xi}{\lambda_{+}})^{k}}\\
 &=\alpha_{+}\Gamma(-\beta_{+}) \left((\lambda_{+} - i\xi)^{\beta_{+}} - \lambda_{+}^{\beta_{+}}\right). \label {eq:l6b}
 \end{aligned}
 \end{equation}
 Similarly, we have :
 \begin{align}
 \int_{0}^{+\infty} \left(e^{-iy\xi} -1\right)\frac{\alpha_{-}e^{-\lambda_{-}y}}{y^{1+\beta_{-}}}dy=\alpha_{-}\Gamma(-\beta_{-}) \left((\lambda_{-} + i\xi)^{\beta_{-}} - \lambda_{-}^{\beta_{-}}\right). \label{eq:l6c}
  \end{align}
 \noindent
 The expression in (\ref{eq:l6a}) becomes:
 \begin{align*}
\Psi(\xi)=i\mu\xi + \alpha_{+}\Gamma(-\beta_{+}) \left((\lambda_{+} - i\xi)^{\beta_{+}} - \lambda_{+}^{\beta_{+}}\right) + \alpha_{-}\Gamma(-\beta_{-}) \left((\lambda_{-} + i\xi)^{\beta_{-}} - \lambda_{-}^{\beta_{-}}\right).
\end{align*}
\end{proof}

\begin{theorem}\label{lem6} \ \\
Consider a variable $Y \sim GTS(\mu, \beta_{+}, \beta_{-}, \alpha_{+},\alpha_{-}, \lambda_{+}, \lambda_{-})$. \\
 If $(\beta_{-},\beta_{+}) \to (0,0)$, GTS becomes a Bilateral Gamma distribution with the following characteristic exponent:
   \begin{align}
\Psi(\xi)=\mu\xi i - \alpha_{+}\log\left(1-\frac{1}{\lambda_{+}}i\xi\right)- \alpha_{-}\log\left(1+\frac{1}{\lambda_{-}}i\xi\right). \label {eq:l7}
\end{align}
In addition to $(\beta_{-},\beta_{+}) \to (0,0)$, if $\alpha_{-}=\alpha_{+}=\alpha$, GTS becomes Variance-Gamma (VG) distribution  with parameter $(\mu, \delta, \sigma, \alpha, \theta)$
   \begin{align*}
\delta=\lambda_{-}-\lambda_{+} \ \  \  \sigma=1 \  \    \  \alpha=\alpha_{-}=\alpha_{+} \ \ \  \theta=\frac{1}{\lambda_{-}\lambda_{+}}
  \end{align*}
and the following characteristic exponent:
   \begin{align}
\Psi(\xi)=\mu\xi i - \alpha \log \left(1 - \frac{\lambda_{-}-\lambda_{+}}{\lambda_{+}\lambda_{-}} i\xi +\frac{1}{\lambda_{+}\lambda_{-}}\xi^2 \right). \label {eq:l8}
  \end{align}
\end{theorem}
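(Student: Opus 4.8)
The plan is to obtain both reductions directly from the closed-form characteristic exponent established in Theorem~\ref{lem5}, i.e.\ from~(\ref{eq:l6}), rather than re-deriving anything from the L\'evy measure. Since $\mu\xi i$ is untouched by the limit and the positive- and negative-jump terms are mirror images of one another, the whole argument reduces to evaluating
\[
\lim_{\beta_{+}\to 0}\ \alpha_{+}\Gamma(-\beta_{+})\left((\lambda_{+} - i\xi)^{\beta_{+}} - \lambda_{+}^{\beta_{+}}\right),
\]
the negative-jump term then following by replacing $\lambda_{+}-i\xi$ with $\lambda_{+}+i\xi$. Because $\operatorname{Re}(\lambda_{+}-i\xi)=\lambda_{+}>0$, the complex power is the principal one, $(\lambda_{+}-i\xi)^{\beta_{+}}=e^{\beta_{+}\log(\lambda_{+}-i\xi)}$, which is analytic in $\beta_{+}$ near $0$; this is what makes the (indeterminate $\infty\cdot 0$) limit well defined.

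First I would expand the two factors to first order in $\beta_{+}$. From $\Gamma(-\beta_{+})=-\Gamma(1-\beta_{+})/\beta_{+}$ and $\Gamma(1)=1$ one gets $\Gamma(-\beta_{+})=-\beta_{+}^{-1}\bigl(1+O(\beta_{+})\bigr)$, while factoring out $\lambda_{+}^{\beta_{+}}\to 1$ gives $(\lambda_{+}-i\xi)^{\beta_{+}}-\lambda_{+}^{\beta_{+}}=\lambda_{+}^{\beta_{+}}\bigl(e^{\beta_{+}\log(1-i\xi/\lambda_{+})}-1\bigr)=\beta_{+}\log\!\bigl(1-\tfrac{i\xi}{\lambda_{+}}\bigr)+O(\beta_{+}^{2})$. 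Multiplying, the powers of $\beta_{+}$ cancel and the limit equals $-\alpha_{+}\log\!\bigl(1-\tfrac{i\xi}{\lambda_{+}}\bigr)$; the symmetric computation yields $-\alpha_{-}\log\!\bigl(1+\tfrac{i\xi}{\lambda_{-}}\bigr)$. Adding back $\mu\xi i$ reproduces~(\ref{eq:l7}), the Bilateral Gamma characteristic exponent.

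For the Variance-Gamma specialization I would set $\alpha_{+}=\alpha_{-}=\alpha$ in~(\ref{eq:l7}) and merge the logarithms into $-\alpha\log\!\bigl[\bigl(1-\tfrac{i\xi}{\lambda_{+}}\bigr)\bigl(1+\tfrac{i\xi}{\lambda_{-}}\bigr)\bigr]$, then expand the product using $(i\xi)^{2}=-\xi^{2}$ to obtain $1-\tfrac{\lambda_{-}-\lambda_{+}}{\lambda_{+}\lambda_{-}}i\xi+\tfrac{1}{\lambda_{+}\lambda_{-}}\xi^{2}$, which is precisely~(\ref{eq:l8}). Matching the linear and quadratic coefficients $\tfrac{\lambda_{-}-\lambda_{+}}{\lambda_{+}\lambda_{-}}$ and $\tfrac{1}{\lambda_{+}\lambda_{-}}$ against the standard VG characteristic exponent evaluated at $\sigma=1$ then reads off $\delta=\lambda_{-}-\lambda_{+}$, $\theta=\tfrac{1}{\lambda_{+}\lambda_{-}}$, $\alpha=\alpha_{\pm}$, as claimed.

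The only genuinely delicate step is the first one: the product $\Gamma(-\beta_{+})\bigl((\lambda_{+}-i\xi)^{\beta_{+}}-\lambda_{+}^{\beta_{+}}\bigr)$ is an $\infty\cdot 0$ indeterminacy, so the theorem hinges on controlling both factors simultaneously. The resolution is to combine the simple-pole structure of $\Gamma$ at the origin with the analyticity of $\beta_{+}\mapsto(\lambda_{+}-i\xi)^{\beta_{+}}$, which makes the $O(\beta_{+}^{2})$ remainder honest rather than merely formal; convergence need only hold pointwise in $\xi$, after which L\'evy's continuity theorem guarantees that the limiting characteristic function is that of the Bilateral Gamma (respectively, Variance-Gamma) law.
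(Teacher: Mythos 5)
Your proposal is correct and follows essentially the same route as the paper: both start from the closed-form characteristic exponent of Theorem~\ref{lem5}, use $\Gamma(-\beta_{+})=-\Gamma(1-\beta_{+})/\beta_{+}$ to resolve the $\infty\cdot 0$ indeterminacy in the limit $\beta_{\pm}\to 0$, and then obtain the VG case by setting $\alpha_{+}=\alpha_{-}$ and merging the two logarithms. The only difference is one of detail: you write out the first-order expansion $(\lambda_{+}-i\xi)^{\beta_{+}}-\lambda_{+}^{\beta_{+}}=\beta_{+}\log\bigl(1-\tfrac{i\xi}{\lambda_{+}}\bigr)+O(\beta_{+}^{2})$ explicitly (and note the appeal to L\'evy's continuity theorem), whereas the paper simply asserts the limit in~(\ref{eq:l7a}).
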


\begin{proof}
 \begin{equation}
 \begin{aligned}
\Gamma(-\beta_{+})&=-\frac{\Gamma(1-\beta_{+})}{\beta_{+}} \\
 \lim_{\beta_{+} \to 0} \Gamma(-\beta_{+})\left((\lambda_{+} - i\xi)^{\beta_{+}} - \lambda_{+}^{\beta_{+}}\right)&=- \log\left(1-\frac{1}{\lambda_{+}}i\xi\right).\label {eq:l7a}
 \end{aligned}
  \end{equation}
  	
\noindent
Similarly, (\ref{eq:l7a}) works for  $\beta_{-} \to 0$, and  we have the characteristic exponent (\ref{eq:l7}).\\

\noindent
In addition, if $\alpha_{-}=\alpha_{+}=\alpha$, from (\ref{eq:l7}), the characteristic exponent becomes:
\begin{align*}
\Psi(\xi)=\mu\xi i - \alpha \log \left(1 - \frac{\lambda_{-}-\lambda_{+}}{\lambda_{+}\lambda_{-}} i\xi +\frac{1}{\lambda_{+}\lambda_{-}}\xi^2 \right),
  \end{align*}
\noindent
which is a Variance-Gamma (VG) distribution with parameter $(\mu,\lambda_{-}-\lambda_{+},1,\alpha,\frac{1}{\lambda_{-}\lambda_{+}})$. For more details on the VG model, refer to \cite{nzokem2022, madan1998variance}.
\end{proof}
 \begin{theorem}\label{lem7} (Cumulants $\kappa_{k}$)\\
 Consider a variable $Y \sim GTS(\mu, \beta_{+}, \beta_{-}, \alpha_{+},\alpha_{-}, \lambda_{+}, \lambda_{-})$. The cumulants $\kappa_{k}$ of the GTS distribution are defined as follows:
\begin{equation}
 \begin{aligned}
 \kappa_{0}&=0\\
\kappa_{1}&= \mu + \alpha_{+}{\frac{\Gamma(1-\beta_{+})}{\lambda_{+}^{1-\beta_{+}}}} - \alpha_{-}{\frac{\Gamma(1-\beta_{-})}{\lambda_{-}^{1-\beta_{-}}}} \\
 \kappa_{k}&=\alpha_{+}{\frac{\Gamma(k-\beta_{+})}{\lambda_{+}^{k-\beta_{+}}}} + (-1)^{k} \alpha_{-}{\frac{\Gamma(k-\beta_{-})}{\lambda_{-}^{k-\beta_{-}}}}  \quad  \quad  \forall k \in \mathbb{N} \setminus \{0,1\} . \label {eq:l9}
 \end{aligned}
\end{equation}
 \end{theorem}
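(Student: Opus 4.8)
The plan is to read the cumulants off the characteristic exponent obtained in Theorem~\ref{lem5}. Recall that the cumulants are the Taylor coefficients at the origin of the cumulant generating function $K(u)=\log E\!\left[e^{uY}\right]$, namely $\kappa_k=K^{(k)}(0)$, and that $K(u)=\Psi(-iu)$ whenever the moment generating function is finite in a neighbourhood of $0$. Because $\lambda_{+}>0$ and $\lambda_{-}>0$, the exponential tempering makes $E\!\left[e^{uY}\right]$ finite for $u$ in a neighbourhood of $0$, so $K$ is real‑analytic there and every $\kappa_k$ is well defined; this analyticity is the one point that genuinely relies on the positivity of $\lambda_{\pm}$. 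Substituting $\xi=-iu$ into \eqref{eq:l6} gives
\begin{equation*}
K(u)=\mu u+\alpha_{+}\Gamma(-\beta_{+})\left((\lambda_{+}-u)^{\beta_{+}}-\lambda_{+}^{\beta_{+}}\right)+\alpha_{-}\Gamma(-\beta_{-})\left((\lambda_{-}+u)^{\beta_{-}}-\lambda_{-}^{\beta_{-}}\right).
\end{equation*}

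Next I would differentiate $K$ a total of $k$ times and evaluate at $u=0$. The linear term $\mu u$ contributes $\mu$ when $k=1$ and nothing otherwise; the constants $-\lambda_{\pm}^{\beta_{\pm}}$ drop out for every $k\ge 1$; and for $k=0$ everything cancels, giving $\kappa_{0}=K(0)=0$. For the power terms one has
\begin{equation*}
\frac{d^{k}}{du^{k}}(\lambda_{+}-u)^{\beta_{+}}=(-1)^{k}\beta_{+}(\beta_{+}-1)\cdots(\beta_{+}-k+1)\,(\lambda_{+}-u)^{\beta_{+}-k},
\end{equation*}
and the same formula without the alternating sign for $(\lambda_{-}+u)^{\beta_{-}}$.

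The only real bookkeeping step is converting the falling factorial into gamma functions. Pulling a $-1$ out of each of the $k$ factors gives $(-1)^{k}\beta_{+}(\beta_{+}-1)\cdots(\beta_{+}-k+1)=(-\beta_{+})(1-\beta_{+})\cdots(k-1-\beta_{+})=\prod_{j=0}^{k-1}(j-\beta_{+})$, and applying $\Gamma(s+1)=s\Gamma(s)$ repeatedly shows $\prod_{j=0}^{k-1}(j-\beta_{+})=\Gamma(k-\beta_{+})/\Gamma(-\beta_{+})$. Hence $\alpha_{+}\Gamma(-\beta_{+})$ times the $k$‑th derivative of $(\lambda_{+}-u)^{\beta_{+}}$ at $u=0$ collapses to $\alpha_{+}\Gamma(k-\beta_{+})\lambda_{+}^{\beta_{+}-k}=\alpha_{+}\Gamma(k-\beta_{+})/\lambda_{+}^{k-\beta_{+}}$, while the negative branch yields the same expression with $+$ replaced by $-$ and an extra factor $(-1)^{k}$ from the chain rule on $(\lambda_{-}+u)^{\beta_{-}}$. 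Collecting the three contributions gives $\kappa_{1}=\mu+\alpha_{+}\Gamma(1-\beta_{+})/\lambda_{+}^{1-\beta_{+}}-\alpha_{-}\Gamma(1-\beta_{-})/\lambda_{-}^{1-\beta_{-}}$ and, for $k\ge 2$, $\kappa_{k}=\alpha_{+}\Gamma(k-\beta_{+})/\lambda_{+}^{k-\beta_{+}}+(-1)^{k}\alpha_{-}\Gamma(k-\beta_{-})/\lambda_{-}^{k-\beta_{-}}$, which is exactly \eqref{eq:l9}. An equivalent route is to match the coefficient of $u^{k}/k!$ in the generalized binomial series $(\lambda_{+}-u)^{\beta_{+}}=\lambda_{+}^{\beta_{+}}\sum_{k\ge 0}\binom{\beta_{+}}{k}(-u/\lambda_{+})^{k}$ already used in the proof of Theorem~\ref{lem5}; the gamma identity above is precisely what rewrites $\binom{\beta_{+}}{k}$ in terms of $\Gamma(k-\beta_{+})$. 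I do not expect any serious obstacle beyond this sign‑and‑gamma bookkeeping together with the short justification that $K$ is analytic at the origin.
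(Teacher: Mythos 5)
Your proof is correct and follows essentially the same route as the paper: both identify the cumulants as the Taylor coefficients of the characteristic exponent at the origin. The paper obtains that expansion by integrating the series for $e^{iy\xi}-1$ term by term in the L\'evy--Khintchine integral and comparing coefficients of $(i\xi)^k/k!$, whereas you differentiate the closed form \eqref{eq:l6} $k$ times and invoke $\prod_{j=0}^{k-1}(j-\beta)=\Gamma(k-\beta)/\Gamma(-\beta)$ --- the same gamma-function bookkeeping run in the opposite direction, as you yourself note.
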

\begin{proof} \ \\
\noindent
We reconsider the characteristic exponent $\Psi(\xi)$  in (\ref{eq:l6a}):
\begin{equation}
 \begin{aligned}
\Psi(\xi)&= i\mu\xi  + \int_{0}^{+\infty} \left(e^{iy\xi} -1\right)\frac{\alpha_{+}e^{-\lambda_{+}y}}{y^{1+\beta_{+}}}dy + \int_{0}^{+\infty} \left(e^{-iy\xi} -1\right)\frac{\alpha_{-}e^{\lambda_{-}y}}{y^{1+\beta_{-}}}dy \\
&= i\mu\xi + \alpha_{+}\sum_{k=1}^{+\infty}{\frac{\Gamma(k-\beta_{+})}{\lambda_{+}^{k-\beta_{+}}}\frac{(i\xi)^{k}}{k!}} + \alpha_{-}\sum_{k=1}^{+\infty}{\frac{\Gamma(k-\beta_{-})}{\lambda_{-}^{k-\beta_{-}}}\frac{(-i\xi)^{k}}{k!}}\\
&= i\mu\xi + \sum_{k=1}^{+\infty}{\frac{1}{k!}\left(\alpha_{+}\frac{\Gamma(k-\beta_{+})}{\lambda_{+}^{k-\beta_{+}}} + \alpha_{-}\frac{\Gamma(k-\beta_{-})}{\lambda_{-}^{k-\beta_{-}}}(-1)^{k}\right)(i\xi)^{k}}\\
&=\sum_{k=0}^{+\infty}\frac{\kappa_{k}}{k!}(i\xi)^{k}. \label {eq:l9a}
 \end{aligned}
\end{equation}
\noindent
Hence, the $k$-th order cumulant $\kappa_{k}$  is given by comparing the coefficients of both polynomial functions in  $i\xi$. For more details on the relationship between the characteristic exponent and cumulant functions, refer to \cite{kendall1946advanced, feller1971}.
\end{proof}
\subsection{GTS Distribution and L\'evy Process}
\begin{corollary} \ \\
Let $Y=\left(Y_{t}\right)$ be a L\'evy process on $\mathbb{R^{+}}$ generated by $GTS(\mu, \beta_{+}, \beta_{-}, \alpha_{+},\alpha_{-}, \lambda_{+}, \lambda_{-})$, then
\begin{equation}
 \begin{aligned}
Y_{t}  \sim GTS( t\mu, \beta_{+}, \beta_{-}, t\alpha_{+}, t\alpha_{-}, \lambda_{+}, \lambda_{-}) \quad \quad  \forall t \in \mathbb{R}^{+}. \label {eq:l10}
 \end{aligned}
\end{equation}
\end{corollary}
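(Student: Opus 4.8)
The plan is to work entirely at the level of characteristic functions and to exploit the defining property of a L\'evy process. Since $Y=\left(Y_{t}\right)$ is a L\'evy process generated by $GTS(\mu,\beta_{+},\beta_{-},\alpha_{+},\alpha_{-},\lambda_{+},\lambda_{-})$, its increments are stationary and independent, so that $E\left[e^{i\xi Y_{t}}\right]=e^{t\Psi(\xi)}$, where $\Psi$ is the characteristic exponent of $Y_{1}$ supplied by Theorem~\ref{lem5}. This reduces the corollary to a single claim: that $t\Psi$ is again the characteristic exponent of a GTS distribution, with the parameters listed in~(\ref{eq:l10}).

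Next I would simply multiply the expression~(\ref{eq:l6}) by $t$ and regroup the factors:
\begin{align*}
t\Psi(\xi)=(t\mu)\xi i+(t\alpha_{+})\Gamma(-\beta_{+})\left((\lambda_{+}-i\xi)^{\beta_{+}}-{\lambda_{+}}^{\beta_{+}}\right)+(t\alpha_{-})\Gamma(-\beta_{-})\left((\lambda_{-}+i\xi)^{\beta_{-}}-{\lambda_{-}}^{\beta_{-}}\right).
\end{align*}
The decisive observation is that $t$ enters~(\ref{eq:l6}) only as a linear multiplier of the location term $\mu\xi i$ and of the two scale parameters $\alpha_{+},\alpha_{-}$, whereas the stability indices $\beta_{\pm}$ and the tempering rates $\lambda_{\pm}$ occur nonlinearly and are left untouched. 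Hence the right-hand side above is, verbatim, the characteristic exponent~(\ref{eq:l6}) of a $GTS(t\mu,\beta_{+},\beta_{-},t\alpha_{+},t\alpha_{-},\lambda_{+},\lambda_{-})$ variable. Because $t>0$ and $\alpha_{\pm}\geq 0$, the rescaled parameters still obey the admissibility constraints $t\alpha_{\pm}\geq 0$, $0\leq\beta_{\pm}\leq 1$, $\lambda_{\pm}\geq 0$, so this tuple genuinely lies in the GTS family. By the uniqueness theorem for characteristic functions, $e^{t\Psi(\xi)}$ determines the law of $Y_{t}$, and the conclusion~(\ref{eq:l10}) follows for every $t\in\mathbb{R}^{+}$.

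I do not expect any real analytic obstacle here; the argument is essentially bookkeeping, and the only points deserving care are (i) confirming that $t$ multiplies exactly the $\mu$- and $\alpha_{\pm}$-terms of~(\ref{eq:l6}) and nothing else, and (ii) noting that the GTS parameter domain is stable under $(\mu,\alpha_{+},\alpha_{-})\mapsto(t\mu,t\alpha_{+},t\alpha_{-})$. A more probabilistic alternative would be to use the decomposition $Y_{t}=t\mu+X_{+,t}-X_{-,t}$ together with the known scaling of the one-sided tempered stable subordinators $X_{\pm,t}\sim TS(\beta_{\pm},t\alpha_{\pm},\lambda_{\pm})$, but the characteristic-exponent route above is the shortest and relies only on Theorem~\ref{lem5}.
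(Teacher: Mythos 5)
Your proof is correct and follows essentially the same route as the paper: both use the L\'evy/infinitely-divisible property to write the characteristic exponent of $Y_{t}$ as $t\Psi(\xi)$ and then read off from~(\ref{eq:l6}) that $t$ multiplies only $\mu$ and $\alpha_{\pm}$, yielding the stated GTS parameters. Your added remarks on the uniqueness theorem for characteristic functions and on the closure of the parameter domain under the rescaling are sound extra care that the paper leaves implicit.
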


\begin{proof}[Proof:]  \ \\
\noindent
Let $\Psi(\xi,t)$ be the characteristic exponent  of the L\'evy process  $Y=\left(Y_{t}\right)$. By applying the infinitely divisible property, we have:
 \begin{equation*}
 \begin{aligned}
\Psi(\xi,t)&= Log\left(Ee^{i Y_{t}\xi}\right)= t Log\left(Ee^{i X\xi}\right)\\
&=t\mu\xi i + t\alpha_{+}\Gamma(-\beta_{+})\left( (\lambda_{+} - i\xi)^{\beta_{+}} - \lambda_{+}^{\beta_{+}}\right) + t\alpha_{-}\Gamma(-\beta_{-})\left( (\lambda_{-} + i\xi)^{\beta_{-}} - \lambda_{-}^{\beta_{-}}\right)
  \end{aligned}
   \end{equation*}
and we deduce that  $Y_{t} \sim GTS(t\mu, \beta_{+}, \beta_{-}, t\alpha_{+},t\alpha_{-}, \lambda_{+}, \lambda_{-})$.
\end{proof}
 \begin{theorem}\label{lem7} (Asymptotic distribution of Generalized Tempered Stable distribution  process)\\
Let $Y={Y_{t}}$ be a L\'evy process on $\mathbb{R}$ generated by  $GTS(\mu, \beta_{+}, \beta_{-}, \alpha_{+},\alpha_{-}, \lambda_{+}, \lambda_{-})$.
Then  $Y_{t}$  converges in distribution to a L\'evy process driving by a normal distribution with mean $\kappa_{1}$ and variance $\kappa_{2}$

 \begin{equation}
 \begin{aligned}
Y_{t} \stackrel{d}{\rightarrow}N(t\kappa_{1},t\kappa_{2}) \quad &\text{as} \quad t \to +\infty \label {eq:l11}
  \end{aligned}
   \end{equation}
where
\begin{equation*}
 \begin{aligned}
\kappa_{1}&= \mu + \alpha_{+}{\frac{\Gamma(1-\beta_{+})}{\lambda_{+}^{1-\beta_{+}}}} - \alpha_{-}{\frac{\Gamma(1-\beta_{-})}{\lambda_{-}^{1-\beta_{-}}}} \\
 \kappa_{2}&=\alpha_{+}{\frac{\Gamma(2-\beta_{+})}{\lambda_{+}^{2-\beta_{+}}}} +  \alpha_{-}{\frac{\Gamma(2-\beta_{-})}{\lambda_{-}^{2-\beta_{-}}}}. \label {eq:l12}
  \end{aligned}
\end{equation*}
\end{theorem}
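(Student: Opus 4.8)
The plan is to prove the statement in its precise form: letting $Z_t:=(Y_t-t\kappa_1)/\sqrt{t}$, I would show $Z_t\stackrel{d}{\rightarrow}N(0,\kappa_2)$ as $t\to+\infty$, which is what $(\ref{eq:l11})$ encodes. The entire argument goes through characteristic functions and L\'evy's continuity theorem, so it is enough to prove that $\mathbb{E}\,e^{i\xi Z_t}\to e^{-\frac12\kappa_2\xi^2}$ for each fixed $\xi\in\mathbb{R}$.

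First I would assemble the ingredients already established. By the Corollary, $Y_t$ is infinitely divisible with characteristic exponent $t\,\Psi(\xi)$, with $\Psi$ as in $(\ref{eq:l6})$. Since $\xi\mapsto(\lambda_{\pm}\mp i\xi)^{\beta_{\pm}}$ is analytic on $\{|\xi|<\lambda_{\pm}\}$, the function $\Psi$ is analytic near $0$, and its Taylor series there is precisely the cumulant expansion $(\ref{eq:l9a})$, $\Psi(\xi)=\sum_{k\ge1}\tfrac{\kappa_k}{k!}(i\xi)^k$, with positive radius of convergence $r=\min(\lambda_{+},\lambda_{-})$ (this can also be read from the explicit $\kappa_k$ in $(\ref{eq:l9})$ via Stirling's formula). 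I would also note that $\kappa_2>0$, being the variance of $Y_1$ and finite by $(\ref{eq:l4})$, so the limit is a non-degenerate Gaussian.

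The main computation is then brief. The characteristic function of $Z_t$ is $\mathbb{E}\,e^{i\xi Z_t}=\exp\!\big(-i\sqrt{t}\,\kappa_1\xi+t\,\Psi(\xi/\sqrt{t})\big)$. Fixing $\xi$, for $t$ large enough that $|\xi|/\sqrt{t}<r$ I would insert the series and peel off its first two terms: $t\,\Psi(\xi/\sqrt{t})=i\sqrt{t}\,\kappa_1\xi-\frac12\kappa_2\xi^2+R_t(\xi)$ where $R_t(\xi)=\sum_{k\ge3}\tfrac{\kappa_k}{k!}\,t^{1-k/2}(i\xi)^k$. The $k=1$ term cancels the prefactor $-i\sqrt{t}\,\kappa_1\xi$ exactly, the $k=2$ term is $t$-independent, and the claim reduces to $R_t(\xi)\to0$.

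The one technical point — and the step I expect to be the main obstacle — is to bound $R_t(\xi)$ uniformly in the summation index. Writing $t^{1-k/2}=t^{-1/2}\,t^{(3-k)/2}$ gives $|R_t(\xi)|\le t^{-1/2}|\xi|^3\sum_{k\ge3}\tfrac{|\kappa_k|}{k!}\big(|\xi|/\sqrt{t}\big)^{k-3}$, and as soon as $|\xi|/\sqrt{t}\le r/2$ the sum is bounded by the finite constant $C:=\sum_{k\ge3}\tfrac{|\kappa_k|}{k!}(r/2)^{k-3}$ (finite because $r/2$ lies strictly inside the disc of convergence). Hence $|R_t(\xi)|\le C\,t^{-1/2}|\xi|^3\to0$, so $\mathbb{E}\,e^{i\xi Z_t}\to e^{-\frac12\kappa_2\xi^2}$ for every $\xi$; L\'evy's continuity theorem then yields $Z_t\stackrel{d}{\rightarrow}N(0,\kappa_2)$, i.e.\ $Y_t$ is asymptotically $N(t\kappa_1,t\kappa_2)$. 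An essentially equivalent alternative is to skip the series and Taylor-expand the closed form $(\ref{eq:l6})$ directly to second order in $\xi/\sqrt{t}$, with an explicit remainder bound on the terms $(\lambda_{\pm}\mp i\xi/\sqrt{t})^{\beta_{\pm}}$; working from the cumulants is cleaner because that bookkeeping is already done.
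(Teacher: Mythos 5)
Your proposal follows essentially the same route as the paper: expand $t\,\Psi(\xi/\sqrt{t})$ via the cumulant series from (\ref{eq:l9a}), cancel the $\kappa_1$ term against the centering, retain the $\kappa_2$ term, and show the tail $\sum_{k\ge 3}$ vanishes as $t\to+\infty$, concluding by L\'evy's continuity theorem. The only difference is that you actually justify the vanishing of the remainder with a uniform bound inside the radius of convergence $\min(\lambda_+,\lambda_-)$ of the cumulant series, a step the paper's proof in (\ref{eq:l12c}) merely asserts, so your version is, if anything, the more complete one.
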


\begin{proof}[Proof:]  \ \\
\noindent
The proof relies on the cumulant-generating function.  As in (\ref{eq:l9a}), the characteristic exponent ($\Psi(\xi)$)  can be written as follows:
  \begin{equation}
 \begin{aligned}
\Psi(\xi)=Log\left(Ee^{i Y\xi}\right)= \sum_{j=0}^{+\infty}\kappa_{j}\frac{(i\xi)^{j}}{j!}. \label{eq:l12a}
  \end{aligned}
   \end{equation}

\noindent
Let $\phi(\xi,t)$ be the characteristic function of the stochastic  process $\frac{{Y_{t}}-t\kappa_{1}}{\sqrt{t\kappa_{2}}}$ and we have:
  \begin{equation}
 \begin{aligned}
\phi(\xi,t)&= E\left(e^{i \frac{{Y_{t}}-t\kappa_{1}}{\sqrt{t\kappa_{2}}}\xi}\right) =e^{-i \frac{t\kappa_{1}}{\sqrt{t\kappa_{2}}}\xi}E\left(e^{i \frac{\xi}{\sqrt{t\kappa_{2}}}Y_{t}}\right)\\
&=e^{-i \frac{t\kappa_{1}}{\sqrt{t\kappa_{2}}}\xi}e^{t\Psi(\frac{\xi}{\sqrt{t\kappa_{2}}})}=e^{-i \frac{t\kappa_{1}}{\sqrt{t\kappa_{2}}}\xi}e^{\sum_{j=0}^{+\infty}\frac{t\kappa_{j}}{j!}(i\frac{\xi}{\sqrt{t\kappa_{2}}})^{j}}\\
&=e^{-\frac{\xi^{2}}{2} + \sum_{j=3}^{+\infty}\frac{t\kappa_{j}}{j!}(i\frac{\xi}{\sqrt{t\kappa_{2}}})^{j}}, \label{eq:l12b}
 \end{aligned}
   \end{equation}

 \begin{equation}
 \begin{aligned}
\lim_{t \to +\infty} \sum_{j=3}^{+\infty}\frac{t\kappa_{j}}{j!}(i\frac{\xi}{\sqrt{t\kappa_{2}}})^{j}=0 \ \
\lim_{t \to +\infty}\phi(\xi,t)&=\lim_{t \to +\infty}e^{-\frac{\xi^{2}}{2} + \sum_{j=3}^{+\infty}\frac{t\kappa_{j}}{j!}(i\frac{\xi}{\sqrt{t\kappa_{2}}})^{j}} \\
&=e^{-\frac{1}{2}\xi^2}. \label{eq:l12c}
 \end{aligned}
   \end{equation}
\end{proof}
\section{Multivariate Maximum Likelihood Method}
\subsection{ Maximum Likelihood Method: Numerical Approach}
\noindent
From a probability density function $f(x, V)$ with parameter $V=(\mu, \beta_{+}, \beta_{-}, \alpha_{+},\alpha_{-}, \lambda_{+}, \lambda_{-})$ and a sample data $x=\left(x_{j}\right)_{1 \leq j \leq m} $,  we define  the likelihood function, and its first and second derivatives as follows: 
  \begin{equation}
 \begin{aligned}
 L_{m}(x,V) &= \prod_{j=1}^{m} f(x_{j},V), \quad  l_{m}(x,V) = \sum_{j=1}^{m} log(f(x_{j},V))  \\
\frac{dl_{m}(x,V)}{dV_j} &= \sum_{i=1}^{m} \frac{\frac{df(x_{i},V)}{dV_j}}{f(x_{i},V)} \\
  \frac{d^{2}l_{m}(x,V)}{dV_{k}dV_{j}} &= \sum_{i=1}^{m} \left(\frac{\frac{d^{2}f(x_{i},V)}{dV_{k}dV_{j}}}{f(x_{i},V)}- \frac{\frac{df(x_{i},V)}{dV_{k}}}{f(x_{i},V)}\frac{\frac{df(x_{i},V)}{dV_j}}{f(x_{i},V)}\right). \label {eq:l31}
 \end{aligned}
   \end{equation}
\noindent
To perform the maximum of the likelihood function  ( $L_{m}(x, V)$), we need the gradient of the likelihood function ($\frac{dl_{m}(x, V)}{dV}$) also known as the score function, and the Hessian matrix ($\frac{d^{2}l_{m}(x, V)}{dV{dV}'}$), which is the variance-covariance matrix generated by the likelihood function. \\

\noindent
Given the parameters $V=(\mu, \beta_{+}, \beta_{-}, \alpha_{+},\alpha_{-}, \lambda_{+}, \lambda_{-})$ and the sample data set $X$, we have the following quantities (\ref{eq:l32}) from the previous development:
  \begin{align}
 I_{m}'(X,V) =\left(\frac{dl_{m}(x,V)}{dV_j}\right)_{1 \leq j \leq p},   \quad  \quad  I_{m}''(X,V) = \left(\frac{d^{2}l_{m}(x,V)}{dV_{k}dV_{j}}\right)_{\substack{{1 \leq k \leq p} \\ {1 \leq j \leq p}}}. \label {eq:l32}
 \end{align}
\noindent
We use a computational algorithm built as a composite of a standard FRFT to compute the likelihood function and its derivatives (\ref{eq:l31}) in the optimization process. More details on applying the composite of FRFTs for parameter estimations are provided in \cite{nzokem2022fitting, Nzokem_2022, Nzokem_Montshiwa_2023, Nzokem_2021}; for other computations (such probability density and cumulative functions), see \cite{nzokem2023european, mca29030044, eberlein2010analysis, Eberlein2014, cherubini2010fourier}.\\

\noindent
The computational algorithm  yields a local solution, $V$, and a negative semi-definite matrix, $I_{m}''(x,V)$, when the following  two conditions are satisfied:
  \begin{align}
 I_{m}'(x,V)=0, \quad \quad U^{T}\mathbf{I_{m}''(X,V)}U \leq 0\hspace{5mm},  \hbox{ $\forall U\in\mathbb{R}^{p}$}.\label{eq:l33}
 \end{align}
The solutions, $V$, in (\ref{eq:l33}) are provided by the Newton-Raphson iteration algorithm formula (\ref{eq:l34}):
  \begin{align}
V^{n+1}=V^{n}-{\left(I_{m}''(x,V^{n})\right)^{-1}}I_{m}'(x,V^{n}).\label{eq:l34}
 \end{align}
\noindent
More detail on maximum likelihood and Newton-Raphson iteration procedure are provided in \cite{giudici2013wiley}.

\subsection{Asymptotic Distribution of the Maximum Likelihood Estimator (MLE)}
\begin{theorem}\label{lem3a} (Cramer-Rao) \ \\
Let $T = T(X_{1},..., X_{m})$ be a statistic and write $E[T]= k(\theta)$. Then, under suitable (smoothness) assumptions,
\begin{align}
Var[T] \geq \frac{(\frac{dE[T]}{d\theta})^2}{m I(\theta)}. \label{eq:l35}
 \end{align}
\end{theorem}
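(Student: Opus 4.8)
The plan is to obtain the bound from the Cauchy--Schwarz inequality applied to the pair consisting of the statistic $T$ and the score function of the sample. First I would introduce the score $S(\theta) = \frac{d}{d\theta}\log L_m(X,\theta) = \sum_{i=1}^{m}\frac{d}{d\theta}\log f(X_i,\theta)$, which is exactly the quantity $I_m'(X,\theta)$ of (\ref{eq:l32}) specialized to a scalar parameter. The two facts I need about $S$ are $E[S(\theta)] = 0$ and $\mathrm{Var}[S(\theta)] = m\,I(\theta)$, where $I(\theta) = E\big[\big(\tfrac{d}{d\theta}\log f(X_1,\theta)\big)^2\big]$ is the per-observation Fisher information. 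The first identity comes from differentiating $\int f(x,\theta)\,dx = 1$ under the integral sign, which gives $E\big[\tfrac{d}{d\theta}\log f(X_i,\theta)\big] = 0$ for each $i$; the second then follows from the independence of the $X_i$, which annihilates the cross terms in the variance of the sum and leaves $m$ identical copies of $I(\theta)$.

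Next I would relate $\frac{d}{d\theta}E[T]$ to the covariance of $T$ and $S$. Writing $k(\theta) = E[T] = \int T(x)\,L_m(x,\theta)\,dx$ and differentiating under the integral sign yields $k'(\theta) = \int T(x)\,\frac{\partial L_m}{\partial\theta}(x,\theta)\,dx = \int T(x)\,S(\theta)\,L_m(x,\theta)\,dx = E[T\,S(\theta)]$. Because $E[S(\theta)] = 0$, this is precisely $\mathrm{Cov}\big(T,S(\theta)\big)$. Finally, Cauchy--Schwarz gives $\big(\mathrm{Cov}(T,S)\big)^2 \le \mathrm{Var}[T]\cdot\mathrm{Var}[S] = \mathrm{Var}[T]\cdot m\,I(\theta)$, that is, $\big(k'(\theta)\big)^2 \le m\,I(\theta)\,\mathrm{Var}[T]$, and rearranging gives (\ref{eq:l35}).

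The main obstacle, and the content concealed in the phrase ``under suitable (smoothness) assumptions,'' is the rigorous justification of the two interchanges of differentiation and integration: $\frac{d}{d\theta}\int f(x,\theta)\,dx = \int \frac{\partial f}{\partial\theta}(x,\theta)\,dx$ and $\frac{d}{d\theta}\int T(x)\,L_m(x,\theta)\,dx = \int T(x)\,\frac{\partial L_m}{\partial\theta}(x,\theta)\,dx$. I would state the standard regularity hypotheses that license these steps --- the support of $f(\cdot,\theta)$ not depending on $\theta$, differentiability of $\theta\mapsto f(x,\theta)$, and the existence of $\theta$-local integrable dominating functions for $\partial_\theta f$ and for $T\,\partial_\theta L_m$ --- so that the dominated convergence theorem applies. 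I would also assume $0 < I(\theta) < \infty$, which both makes the right-hand side of (\ref{eq:l35}) well defined and prevents the Cauchy--Schwarz step from degenerating. Once these conditions are in place, every remaining step is a routine computation.
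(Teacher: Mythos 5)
Your proof is correct and is the classical Cauchy--Schwarz argument (score function with zero mean and variance $m\,I(\theta)$, covariance identity $k'(\theta)=\mathrm{Cov}(T,S)$, then Cauchy--Schwarz); the paper does not prove this theorem itself but defers to its cited references, which present essentially this same argument. No gaps: the regularity conditions you flag (interchange of differentiation and integration, $0<I(\theta)<\infty$) are exactly the ``suitable smoothness assumptions'' the statement alludes to.
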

For the proof of Theorem \ref{lem3a} refer to \cite{van2007parameter,casella2024statistical}.\\

\begin{theorem}\label{lem3b} (Consistency Estimator) \ \\
Let $X_{1}, ..., X_{m}$ be independent and identically distributed (i.i.d) random variables with density $f(x|\theta)$ satisfying some regularity conditions \cite{lehmann1999elements}. Let $\theta$ be the true parameter, then there exists a sequence $\hat{\theta}_{m}=\theta_{m}(X_{1}, ..., X_{m})$ of local maxima of the likelihood function $L_{m}(\theta)$ which is consistent, that is, which satisfies
  \begin{equation}
 \begin{aligned}
\hat{\theta}_{m}  \stackrel{a.s.}{\rightarrow} \theta \quad &\text{as} \quad m\to +\infty. \label{eq:l35a}
 \end{aligned}
   \end{equation}
\end{theorem}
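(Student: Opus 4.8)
The plan is to follow Cramér's classical argument for the existence of a consistent root of the likelihood equation, adapting the bookkeeping to the $p$-dimensional parameter $\theta=V=(\mu,\beta_{+},\beta_{-},\alpha_{+},\alpha_{-},\lambda_{+},\lambda_{-})$. Fix the true value $\theta_{0}$ and a small radius $a>0$, and let $B_{a}$ be the open ball of radius $a$ about $\theta_{0}$. I want to show that, with probability one, for all sufficiently large $m$ the log-likelihood $l_{m}(\theta)=\sum_{i=1}^{m}\log f(x_{i},\theta)$ attains a local maximum in the interior of $B_{a}$; running this over a sequence $a_{k}\downarrow 0$ and keeping, for each $m$, the root of $I_{m}'(x,\theta)=0$ closest to $\theta_{0}$ inside the unit ball then produces a sequence $\hat\theta_{m}$ with $\hat\theta_{m}\to\theta_{0}$ a.s., which is exactly (\ref{eq:l35a}).

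First I would Taylor-expand the normalized log-likelihood ratio $\frac{1}{m}\bigl(l_{m}(\theta)-l_{m}(\theta_{0})\bigr)$ to third order in $\theta-\theta_{0}$ about $\theta_{0}$, using the regularity hypotheses that $\log f(x,\cdot)$ is thrice continuously differentiable on a neighbourhood of $\theta_{0}$ and that its third-order partials are dominated there by a function $M(x)$ with $E_{\theta_{0}}[M(X)]<\infty$. The expansion splits into three pieces. The linear term is $(\theta-\theta_{0})^{\top}\bigl(\frac{1}{m}\sum_{i}\nabla_{\theta}\log f(x_{i},\theta_{0})\bigr)$; by the strong law of large numbers together with the score identity $E_{\theta_{0}}[\nabla_{\theta}\log f(X,\theta_{0})]=0$ (which relies on differentiation under the integral sign), the average tends to $0$ a.s., so on $\partial B_{a}$ this term is $a^{2}\varepsilon_{m}$ with $\varepsilon_{m}\to 0$ a.s.\ for each fixed $a$. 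The quadratic term tends a.s.\ to $-\tfrac12(\theta-\theta_{0})^{\top}I(\theta_{0})(\theta-\theta_{0})$ by the strong law applied to the second partials plus the information identity $E_{\theta_{0}}[\nabla^{2}_{\theta}\log f]=-I(\theta_{0})$; assuming $I(\theta_{0})$ is positive definite, on $\partial B_{a}$ it is $\le-\tfrac12\lambda_{\min}(I(\theta_{0}))\,a^{2}$. The cubic remainder is bounded by $C\,a^{3}\cdot\frac{1}{m}\sum_{i}M(x_{i})$, which by the strong law is $\le C'a^{3}$ a.s.\ for large $m$.

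Combining the three bounds, on $\partial B_{a}$ one gets $\frac{1}{m}\bigl(l_{m}(\theta)-l_{m}(\theta_{0})\bigr)\le a^{2}\bigl(\varepsilon_{m}-\tfrac12\lambda_{\min}+C'a\bigr)$; choosing $a$ small enough that $C'a<\tfrac14\lambda_{\min}$ and then $m$ large enough that $\varepsilon_{m}<\tfrac14\lambda_{\min}$ forces this to be strictly negative for every $\theta$ on the sphere $\partial B_{a}$. Since $l_{m}$ is continuous, its maximum over the compact ball $\overline{B_{a}}$ is then attained in the interior, hence is a local maximum and a solution of $I_{m}'(x,\theta)=0$. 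Because all the convergences above come from the strong law, the exceptional null set does not depend on $a$, so the diagonal argument over $a_{k}\downarrow 0$ goes through and yields the strong consistency claimed; a routine measurable-selection remark is needed to turn ``the closest root'' into a genuine random variable $\hat\theta_{m}$.

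I expect the main obstacle to be not this probabilistic skeleton, which is textbook, but the verification of the ``regularity conditions'' for the particular GTS family. Since $f(x,V)$ has no closed form and is only accessible through the characteristic exponent (\ref{eq:l6}) and the (enhanced FRFT) Fourier inversion, one must justify, working from the inversion-integral representation of $f$ and its $\theta$-derivatives and controlling them via the analytic structure of $\Psi$: (i) differentiation under the integral sign, giving the mean-zero score and the information identity; (ii) the existence of an integrable dominating function $M$ for the third-order $\theta$-derivatives of $\log f$ near $\theta_{0}$; and (iii) the positive-definiteness of $I(\theta_{0})$, i.e.\ local identifiability of the seven parameters---delicate in view of the near-collinearity of $(\mu,\alpha_{\pm},\lambda_{\pm})$ already visible in the cumulant formulas (\ref{eq:l9}) and reflected in the $\mu$ estimate being statistically insignificant. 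These analytic estimates on the inversion integral, rather than the limit theorems, are where the real work lies.
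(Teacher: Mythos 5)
The paper gives no proof of this theorem at all—it simply defers to \cite{lehmann1999elements} and \cite{casella2024statistical}—and your argument is precisely the classical Cram\'er existence-of-a-consistent-root proof found in those references (third-order Taylor expansion of the normalized log-likelihood, SLLN plus the score and information identities, negativity on the sphere $\partial B_a$, then a diagonal argument over $a_k\downarrow 0$), correctly adapted to the seven-dimensional parameter. Your closing observation is also the right one: the theorem as stated merely assumes the ``regularity conditions,'' and neither you nor the paper verifies them for the GTS density defined only through Fourier inversion of $e^{\Psi(\xi)}$, which is where the genuinely nontrivial work would lie.
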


More details on the proof of Theorem \ref{lem3b} are provided in \cite{lehmann1999elements, casella2024statistical}.\\

\begin{theorem}\label{lem3c} (Asymptotic Efficiency and Normality) \ \\
Let $X_{1}, ..., X_{m}$ be independent and identically distributed (i.i.d) random variables with density $f(x |\theta)$ satisfying some regularity conditions in \cite{lehmann1999elements}.
There exists a solution $\hat{\theta}_{m}=\theta_{m}(X_{1}, ..., X_{m})$ of the likelihood equations which
is consistent, and any such solution satisfies:
 \begin{equation}
 \begin{aligned}
\hat{\theta}_{m} -\theta  \stackrel{d}{\rightarrow} N\left(0, I^{-1}_{m}(\theta)\right) \quad &\text{as} \quad m \to +\infty, \label{eq:l35a}
 \end{aligned}
 \end{equation}
where $\theta=(\theta_{1},...,\theta_{k})$ is the actual parameter and $I_{m}(\theta)$ is the Fisher information matrix.\\

\end{theorem}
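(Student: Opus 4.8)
The plan is to follow the classical Cram\'er-type argument: expand the likelihood (score) equations to first order about the true parameter $\theta$, isolate $\hat\theta_m-\theta$, and then combine the consistency result of Theorem~\ref{lem3b} with the multivariate central limit theorem, the weak law of large numbers, and Slutsky's theorem. Throughout I write $s(x,\theta)=\nabla_\theta\log f(x\mid\theta)$ for the per-observation score vector and $h(x,\theta)=\nabla_\theta^2\log f(x\mid\theta)$ for the per-observation Hessian, so that in the notation of~(\ref{eq:l32}) one has $I_m'(X,\theta)=\sum_{i=1}^m s(X_i,\theta)$ and $I_m''(X,\theta)=\sum_{i=1}^m h(X_i,\theta)$; I let $I_1(\theta)=\mathrm{Var}_\theta\big(s(X_1,\theta)\big)$ denote the Fisher information of a single observation, so that the full Fisher information satisfies $I_m(\theta)=m\,I_1(\theta)$ and $I_m^{-1}(\theta)=\tfrac1m I_1(\theta)^{-1}$.

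First I would take the consistent root $\hat\theta_m$ of $I_m'(X,\hat\theta_m)=0$ supplied by Theorem~\ref{lem3b} and Taylor-expand the map $\vartheta\mapsto I_m'(X,\vartheta)$ about $\theta$, evaluated at $\hat\theta_m$ (componentwise, with a mean-value or integral remainder):
\[
0=I_m'(X,\hat\theta_m)=I_m'(X,\theta)+\big(I_m''(X,\theta)+R_m\big)\,(\hat\theta_m-\theta),
\]
where $R_m$ gathers the higher-order terms. Rearranging and rescaling gives
\[
\sqrt m\,(\hat\theta_m-\theta)=-\Big(\tfrac1m I_m''(X,\theta)+\tfrac1m R_m\Big)^{-1}\,\tfrac{1}{\sqrt m}\,I_m'(X,\theta),
\]
so that the problem reduces to controlling the two factors on the right-hand side.

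For the numerator, the regularity conditions justify differentiating $\int f(x\mid\theta)\,dx=1$ under the integral sign, which yields $E_\theta\big(s(X_1,\theta)\big)=0$ and, on differentiating once more, the information identity $E_\theta\big(-h(X_1,\theta)\big)=I_1(\theta)$. Since the $X_i$ are i.i.d.\ with mean-zero scores of covariance $I_1(\theta)$, the multivariate central limit theorem gives $\tfrac{1}{\sqrt m}I_m'(X,\theta)\stackrel{d}{\rightarrow}N\big(0,I_1(\theta)\big)$. For the bracketed matrix, the weak law of large numbers together with the information identity gives $-\tfrac1m I_m''(X,\theta)\stackrel{p}{\rightarrow}I_1(\theta)$, while a domination hypothesis on the third-order log-derivatives in a neighbourhood of $\theta$, used in conjunction with $\hat\theta_m\stackrel{a.s.}{\rightarrow}\theta$, forces $\tfrac1m R_m\stackrel{p}{\rightarrow}0$; hence the bracket tends in probability to $-I_1(\theta)$, which is nonsingular by assumption. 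Slutsky's theorem then delivers
\[
\sqrt m\,(\hat\theta_m-\theta)\stackrel{d}{\rightarrow}I_1(\theta)^{-1}\,N\big(0,I_1(\theta)\big)=N\big(0,I_1(\theta)^{-1}\big),
\]
which is exactly the assertion of the theorem once one recalls that $I_m^{-1}(\theta)=\tfrac1m I_1(\theta)^{-1}$ is the covariance of the limiting law of $\hat\theta_m-\theta$; asymptotic efficiency is then immediate, since this limiting covariance attains the Cram\'er--Rao lower bound of Theorem~\ref{lem3a}.

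I expect the genuine obstacle to be the rigorous control of the remainder, i.e.\ showing $\tfrac1m I_m''(X,\theta_m^{*})\stackrel{p}{\rightarrow}-I_1(\theta)$ when $\theta_m^{*}$ is the \emph{random} intermediate point produced by the mean-value expansion rather than the fixed true value $\theta$. This is exactly where the regularity conditions of~\cite{lehmann1999elements} do the real work: one needs a locally uniform law of large numbers for the Hessian, typically obtained from an integrable envelope $\sup_{\vartheta\in U}\big|\partial^3\log f(x\mid\vartheta)\big|\le M(x)$ with $E_\theta M(X_1)<\infty$, combined with the consistency of $\hat\theta_m$ to guarantee $\theta_m^{*}\in U$ eventually. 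The remaining pieces — the CLT for the score, the two information identities, and the multivariate Slutsky step — are then routine once the differentiation-under-the-integral justifications permitted by those conditions are granted.
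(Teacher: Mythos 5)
Your argument is correct and is precisely the classical Cram\'er-type proof (score expansion, CLT for the score, uniform LLN for the Hessian, Slutsky) that the paper itself does not reproduce but instead delegates to the cited references \cite{lehmann1999elements, olive2014statistical, hall2023course}, where this is exactly the argument given. You also correctly identify the one genuinely delicate point --- the locally uniform control of the Hessian at the random intermediate point --- as the place where the regularity conditions are actually used, so nothing further is needed.
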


More details on the proof of Theorem \ref{lem3c} are provided in \cite{lehmann1999elements, olive2014statistical, hall2023course}. \\

\begin{theorem}\label{lem3d} (Likelihood Ratio Test) \ \\
Suppose the assumptions of Theorem \ref{lem3c} hold and that $(\hat{\theta}_{1n},\dotsc, \hat{\theta}_{kn})$ are consistent roots of the likelihood equations for $\theta= (\theta_{1},\dotsc, \theta_{k})$ . In addition, suppose that the corresponding assumptions hold for the parameter vector $(\theta_{r+1},\dotsc, \theta_{k})$ when $r<k$, and that $\hat{\hat{\theta}}_{r+1,n},\dotsc, \hat{\hat{\theta}}_{kn}$ are consistent roots of the likelihood equations for $(\theta_{r+1},\dotsc, \theta_{k})$  under null hypothesis. We consider the likelihood ratio statistic
 \begin{equation}
 \begin{aligned}
\frac{l_{m}(x,\hat{\theta})}{l_{m}(x,\hat{\hat{\theta}})}\label{eq:l36}
 \end{aligned}
 \end{equation}
where $\hat{\hat{\theta}}=(\theta_{1}, \dotsc, \theta_{r},\hat{\hat{\theta}}_{r+1,n},\dotsc, \hat{\hat{\theta}}_{kn})$.
Then under the null hypothesis $H_{0}$, if
 \begin{equation}
 \begin{aligned}
\Delta_{n} = l_{m}(x,\hat{\theta}) - l_{m}(x,\hat{\hat{\theta}}), \label{eq:l37}
 \end{aligned}
 \end{equation}
the statistic $2\Delta_{n}$ has a limiting $\chi^2_{r}$ distribution.\\
\end{theorem}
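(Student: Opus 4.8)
The plan is to reduce $2\Delta_n$ to a quadratic form in an asymptotically Gaussian vector by a second-order Taylor expansion of the log-likelihood about the true parameter, and then to identify that quadratic form as a $\chi^2_r$ variate through a Schur-complement (partitioned-information) computation.

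First I would fix notation. Let $\theta_0$ be the true value; partition $\theta=(\psi,\lambda)$ with $\psi=(\theta_1,\dots,\theta_r)$ the block pinned by $H$ (at its true value $\psi_0$) and $\lambda=(\theta_{r+1},\dots,\theta_k)$ the free block, and partition the per-observation Fisher information $I=I(\theta_0)$ conformably into $I_{\psi\psi},I_{\psi\lambda},I_{\lambda\psi},I_{\lambda\lambda}$. Write $S_m=\nabla l_m(x,\theta_0)$ for the score and $S_{m,\lambda}$ for its $\lambda$-block. Under the regularity conditions underlying Theorem~\ref{lem3c}, the CLT gives $m^{-1/2}S_m\xrightarrow{d}Z\sim N(0,I)$, the uniform LLN gives $-m^{-1}\nabla^2 l_m(x,\cdot)\to I$ locally uniformly near $\theta_0$, and Theorem~\ref{lem3b}, applied both to the full model and to the sub-model obtained by holding $\psi$ at $\psi_0$, makes the consistent roots $\hat\theta$ and $\hat{\hat\theta}$ lie in that neighbourhood with $\hat\theta-\theta_0=O_p(m^{-1/2})$ and $\hat{\hat\theta}-\theta_0=O_p(m^{-1/2})$.

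Next I would expand
\[
l_m(x,\theta)=l_m(x,\theta_0)+S_m^{\top}(\theta-\theta_0)-\tfrac12 m(\theta-\theta_0)^{\top}I(\theta-\theta_0)+o_p(1),
\]
the remainder being $o_p(1)$ uniformly on balls of radius $O(m^{-1/2})$ about $\theta_0$ (this is where the third-order domination hypotheses are used). Maximising the right-hand side over all $\theta$ recovers $l_m(x,\hat\theta)=l_m(x,\theta_0)+\tfrac1{2m}S_m^{\top}I^{-1}S_m+o_p(1)$, and maximising it subject to $\psi=\psi_0$ recovers $l_m(x,\hat{\hat\theta})=l_m(x,\theta_0)+\tfrac1{2m}S_{m,\lambda}^{\top}I_{\lambda\lambda}^{-1}S_{m,\lambda}+o_p(1)$; one must check that the true roots agree with these quadratic maximisers to order $o_p(m^{-1/2})$, which follows from consistency plus the uniform Hessian convergence. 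Subtracting,
\[
2\Delta_n=\tfrac1m\Big(S_m^{\top}I^{-1}S_m-S_{m,\lambda}^{\top}I_{\lambda\lambda}^{-1}S_{m,\lambda}\Big)+o_p(1)\ \xrightarrow{d}\ Z^{\top}I^{-1}Z-Z_\lambda^{\top}I_{\lambda\lambda}^{-1}Z_\lambda .
\]

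Finally I would identify the limit. Writing $I^{\psi\psi}=(I_{\psi\psi}-I_{\psi\lambda}I_{\lambda\lambda}^{-1}I_{\lambda\psi})^{-1}$ for the $\psi\psi$-block of $I^{-1}$, the Schur-complement formula gives the algebraic identity
\[
Z^{\top}I^{-1}Z-Z_\lambda^{\top}I_{\lambda\lambda}^{-1}Z_\lambda=U^{\top}I^{\psi\psi}U,\qquad U:=Z_\psi-I_{\psi\lambda}I_{\lambda\lambda}^{-1}Z_\lambda,
\]
and a one-line covariance computation with $Z\sim N(0,I)$ gives $U\sim N\big(0,(I^{\psi\psi})^{-1}\big)$. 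Hence $U$ is an $r$-dimensional Gaussian vector standardised by its own inverse covariance, so $U^{\top}I^{\psi\psi}U\sim\chi^2_r$, and the continuous mapping theorem yields $2\Delta_n\xrightarrow{d}\chi^2_r$. The main obstacle is the remainder control in the Taylor step — certifying that $\hat\theta$ and $\hat{\hat\theta}$ match the explicit quadratic-approximation maximisers up to $o_p(m^{-1/2})$ and that the discarded third-order terms vanish after multiplication by $m$; this is precisely what the unspecified ``regularity conditions'' (three-times differentiable log-density with an integrable third-derivative envelope, continuous positive-definite $I(\theta)$ near $\theta_0$, and the same for the constrained sub-model) are there to supply.
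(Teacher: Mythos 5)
Your proposal is correct, and it is essentially the standard Wilks-type argument: quadratic expansion of $l_m$ about $\theta_0$, identification of $l_m(x,\hat\theta)$ and $l_m(x,\hat{\hat\theta})$ with the unconstrained and constrained maxima of the quadratic approximation, and the Schur-complement identity reducing $Z^{\top}I^{-1}Z-Z_{\lambda}^{\top}I_{\lambda\lambda}^{-1}Z_{\lambda}$ to $U^{\top}I^{\psi\psi}U$ with $U\sim N\bigl(0,(I^{\psi\psi})^{-1}\bigr)$, hence $\chi^2_r$. The paper itself gives no proof and simply cites Lehmann (1999), where this is exactly the argument used, so there is nothing to reconcile; your explicit flagging of the remainder control (matching $\hat\theta$, $\hat{\hat\theta}$ to the quadratic maximisers up to $o_p(m^{-1/2})$) is precisely the point the unstated regularity conditions are meant to cover.
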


More details on the proof of Theorem \ref{lem3d} are provided in \cite{lehmann1999elements, vuong1989likelihood}.\\

\subsection{Asymptotic Test and Confidence Interval}
\noindent
The above results allow us to construct an asymptotically efficient estimator $\hat{\theta}_{m}=(\hat{\theta}_{1m},..., \hat{\theta}_{km})$ of  $\theta=(\theta_{1},\dotsc,\theta_{k})$ such that
 \begin{equation}
 \begin{aligned}
 (\hat{\theta}_{1m}-\theta_{1},..., \hat{\theta}_{km}-\theta_{k})
 \end{aligned}
 \end{equation}
 has a joint multivariate limit distribution with mean $(0,...,0)$  and covariance matrix $I^{-1}_{m}(\theta)=(J_{ij})$. In particular, we have:
 \begin{equation}
 \begin{aligned}
 \hat{\theta}_{jm}-\theta_{j} \stackrel{d}{\rightarrow} N(0,J_{jj}) \quad \text{as} \quad m \to +\infty. \label{eq:l37a}
 \end{aligned}
 \end{equation}

\noindent
One approach to constructing an asymptotically valid confidence interval for the parameters is via the asymptotic distribution of the ML estimator (\ref{eq:l36}). An approximate $(1 - \frac{\alpha}{2})$ confidence interval for $\hat{\theta}_{jm}$ can be written as follows:
 \begin{equation}
 \begin{aligned}
\hat{\theta}_{jm} \pm z(\frac{\alpha}{2})*\sqrt{J_{jj}} \quad  \text{as} \quad m \to +\infty, \label{eq:l37b}
 \end{aligned}
 \end{equation}
 where $z(\frac{\alpha}{2})$ is the $\frac{\alpha}{2}$ quantile of the standard normal distribution.

\subsection{Applications of the Log-likelihood Estimator to the Normal Distribution}
\noindent
We suppose the sample data $x=\left(x_{j}\right)_{1 \leq j \leq m}$ are independent observations and have a normal distribution \cite{MENSAH2023} with parameter $V(\mu,\sigma^2)$, that is,  $y \sim \mathcal{N}(\mu,\sigma^2) $, then the density is
\begin{align}
f(y |V)=(2\pi\sigma^2)^{-\frac{1}{2}} exp \left(-\frac{(y-\mu)^2}{2\sigma^2} \right). \label{eq:l37c}\end{align}
\noindent
The log-likelihood function in (\ref{eq:l31}) becomes
\begin{align}
 l_{m}(x|V) = \sum_{j=1}^{m} log(f(x_{j} | V)) = -\frac{m}{2} log(2\pi\sigma^2) -\frac{1}{2\sigma^2} \sum_{j=1}^{m}(x_{j}-\mu)^2. \label{eq:l38}\end{align}

\noindent
The first-order derivatives of the log-likelihood function with respect to $\mu$ and $\sigma^2$  in (\ref{eq:l31}) becomes
  \begin{equation}
 \begin{aligned}
 I_{m}'(X,V) =\begin{pmatrix}
 \frac{dl_{m}(x,V)}{d\mu}\\
 \frac{dl_{m}(x,V)}{d\sigma^2}
\end{pmatrix}=\begin{pmatrix}
\frac{1}{\sigma^2} \sum_{j=1}^{m}(x_{j} - \mu)\\
 \frac{1}{2\sigma^4} \sum_{j=1}^{m}(x_{j}-\mu)^2 - \frac{m }{2\sigma^2}. \label{eq:l38a}
\end{pmatrix}
 \end{aligned}
 \end{equation}

 \noindent
By setting $ I_{m}'(X,V) =0$, we have
\begin{align}
 \hat{\mu}=\frac{1}{m}\sum_{j=1}^{m}x_{j} \quad \quad   \hat{\sigma}^2=\frac{1}{m}\sum_{j=1}^{m}(x_{j}-\hat{\mu})^2. \label{eq:l39b}
 \end{align}

\noindent
The second-order derivative of the log-likelihood function with respect to $\mu$ and $\sigma^2$ in (\ref{eq:l31}) becomes
 \begin{equation}
 \begin{aligned}
 I_{m}''(X,V) &= \begin{pmatrix}
\frac{d^2l_{m}(x,V)}{d\mu^2}  &  \frac{dl_{m}(x,V)}{d\mu d\sigma^2} \\
\frac{dl_{m}(x,V)}{d\sigma^2 d\mu} & \frac{d^2l_{m}(x,V)}{(d\sigma^2)^2}
\end{pmatrix}\\
&=\begin{pmatrix}
 -\frac{m}{\sigma^2}  & -\frac{\sum_{j=1}^{m}(x_{j} - \mu)}{\sigma^4} \\
-\frac{\sum_{j=1}^{m}(x_{j} - \mu)}{2\sigma^4} &  - \frac{1}{\sigma^6} \sum_{j=1}^{m}(x_{j}-\mu)^2 +\frac{m}{2\sigma^4}.
\end{pmatrix}
 \label {eq:l39c}
  \end{aligned}
\end{equation}

Refer to \cite{casella2024statistical} for more details.\\

\noindent
We have the Fisher information matrix and the inverse:
 \begin{equation}
 \begin{aligned}
 I_{m}(V) = -E\left(I_{m}''(X,V)\right)=\begin{pmatrix}
 \frac{m}{\sigma^2}  & 0 \\
0 & \frac{m}{2 \sigma^4},
\end{pmatrix} \quad \quad  I_{m}^{-1}(V) =\begin{pmatrix}
 \frac{\sigma^2}{m}  & 0 \\
0 & \frac{2 \sigma^4}{m}
\end{pmatrix}.\label{eq:l39d}
 \end{aligned}
 \end{equation}

\begin{corollary}\label{lem34}  \ \\
The limiting distribution of the MLE is given by:
 \begin{equation}
 \begin{aligned}
\begin{pmatrix} \hat{\mu} \\ \hat{\sigma}^2
\end{pmatrix}   \stackrel{d}{\rightarrow} N\left(\begin{pmatrix} \mu \\ \sigma^2 \end{pmatrix},\begin{pmatrix} \frac{\sigma^2}{m}  & 0 \\ 0 & \frac{2 \sigma^4}{m}\end{pmatrix}\right),   \quad  \text{as} \quad m \to +\infty. \label{eq:l39e} 
 \end{aligned}
 \end{equation}
\end{corollary}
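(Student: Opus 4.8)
The plan is to derive the statement as a direct specialization of Theorem~\ref{lem3c} to the two-parameter Gaussian family $V=(\mu,\sigma^2)$ with density~(\ref{eq:l37c}). First I would verify that this location--scale family satisfies the regularity conditions referenced in Theorem~\ref{lem3c}: on the open parameter space $\mathbb{R}\times(0,+\infty)$ the density is strictly positive and infinitely differentiable in $(\mu,\sigma^2)$, its support $\mathbb{R}$ does not depend on the parameters, differentiation under the integral sign is legitimate, and the Fisher information matrix is finite and nonsingular. The last point is read off directly from the explicit Hessian in~(\ref{eq:l39c}): taking expectations gives the diagonal matrix $I_m(V)=\mathrm{diag}(m/\sigma^2,\,m/(2\sigma^4))$ of~(\ref{eq:l39d}), which is invertible for every $\sigma^2>0$.

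Next I would identify the consistent root of the likelihood equations. Setting $I_m'(X,V)=0$ in~(\ref{eq:l38a}) yields the unique solution $(\hat{\mu},\hat{\sigma}^2)$ of~(\ref{eq:l39b}), and by the strong law of large numbers $\hat{\mu}\stackrel{a.s.}{\to}\mu$ and $\hat{\sigma}^2\stackrel{a.s.}{\to}\sigma^2$; thus $(\hat{\mu},\hat{\sigma}^2)$ is the consistent sequence of roots whose existence Theorem~\ref{lem3b} guarantees and to which Theorem~\ref{lem3c} applies. (Whether $\hat{\sigma}^2$ uses the divisor $m$ or $m-1$ is immaterial for the limit, since the two versions differ by the factor $m/(m-1)\to 1$.)

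Then I would invoke Theorem~\ref{lem3c} with $\theta=(\mu,\sigma^2)$: the consistent root satisfies $(\hat{\mu}-\mu,\hat{\sigma}^2-\sigma^2)\stackrel{d}{\to}N\big(0,I_m^{-1}(V)\big)$ as $m\to+\infty$. Substituting the inverse information matrix $I_m^{-1}(V)=\mathrm{diag}(\sigma^2/m,\,2\sigma^4/m)$ already computed in~(\ref{eq:l39d}) gives precisely the limit law~(\ref{eq:l39e}), which completes the argument.

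The only point where I would be careful is the off-diagonal entry of the information matrix: one must check that $E\!\left[\partial^2 l_m/\partial\mu\,\partial\sigma^2\right]=0$, which holds because $E\big[\sum_{j=1}^{m}(x_j-\mu)\big]=0$ in~(\ref{eq:l39c}). This is what makes $I_m(V)$ diagonal, so that the limiting covariance matrix decouples $\hat{\mu}$ and $\hat{\sigma}^2$ and no cross-covariance term appears in~(\ref{eq:l39e}). With this in hand the corollary is immediate; no genuinely hard step remains, as the content lies entirely in checking the hypotheses of Theorem~\ref{lem3c} and in the (already performed) inversion of the Fisher information matrix.
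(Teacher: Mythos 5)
Your proposal is correct and follows exactly the route the paper intends: the paper does not write out a proof (it only cites a reference), but the preceding displays (\ref{eq:l38a})--(\ref{eq:l39d}) compute the score, the MLE, the Hessian, and the Fisher information precisely so that Corollary \ref{lem34} follows by specializing Theorem \ref{lem3c} and inverting $I_m(V)$, which is what you do. Your added checks (regularity of the Gaussian family, consistency of the root, vanishing of the off-diagonal entry) are the right details to supply and contain no gaps.
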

The proof of Corollary \ref{lem34} comes from Theorem \ref{lem3c}, Equation (\ref{eq:l35a}).

\clearpage
\newpage
\section{Fitting Tempered Stable Distribution to Cryptocurrencies: Bitcoin BTC and Ethereum \label{estim}}
\subsection{Data Summaries}
 \noindent
 Bitcoin was the first cryptocurrency created in 2009 by Satoshi Nakamoto. The idea behind Bitcoin was to create
a peer-to-peer electronic payment system that allows online payments to be sent directly from one party to another without going through a financial institution\cite{nakamoto2008bitcoin}. Since its inception, Bitcoin has grown in popularity and adoption and is now viewed as a viable legal tender in some countries. Bitcoin is currently used more as an investment tool, a risk-diversified tool, and less as a medium of exchange, a store of value, or a unit of account \cite{mca29030044}. \\

  \noindent
Bitcoin (BTC) and Ethereum (ETH) prices were extracted from CoinMarketCap. The period spans from April 28, 2013, to July 04, 2024, for Bitcoin, and from August 07, 2015, to July 04, 2024, for Ethereum.\\

The daily price dynamics are provided in Fig \ref{fig1}. The prices have an increasing trend, even after having major significant increases and decreases over the studied period. Fig \ref{fig10} and Fig \ref{fig11}  show that Bitcoin outperforms Ethereum, which is the second-largest cryptocurrency by market capitalization after Bitcoin.
 \begin{figure}[ht]
\vspace{-0.4cm}
    \centering
\hspace{-1cm}
  \begin{subfigure}[b]{0.45\linewidth}
    \includegraphics[width=\linewidth]{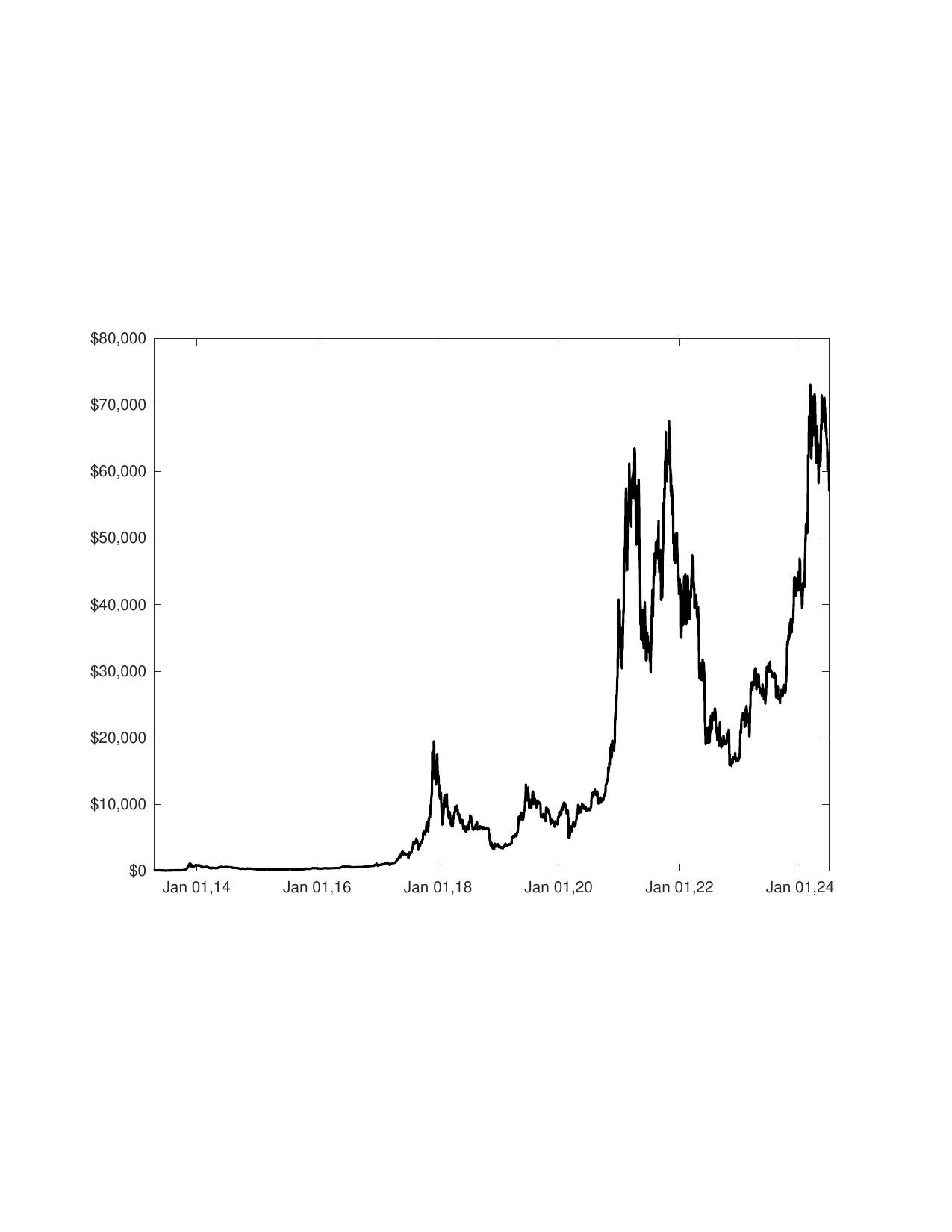}
\vspace{-0.6cm}
     \caption{Bitcoin (BTC) Daily Price}
         \label{fig10}
  \end{subfigure}
  \begin{subfigure}[b]{0.45\linewidth}
    \includegraphics[width=\linewidth]{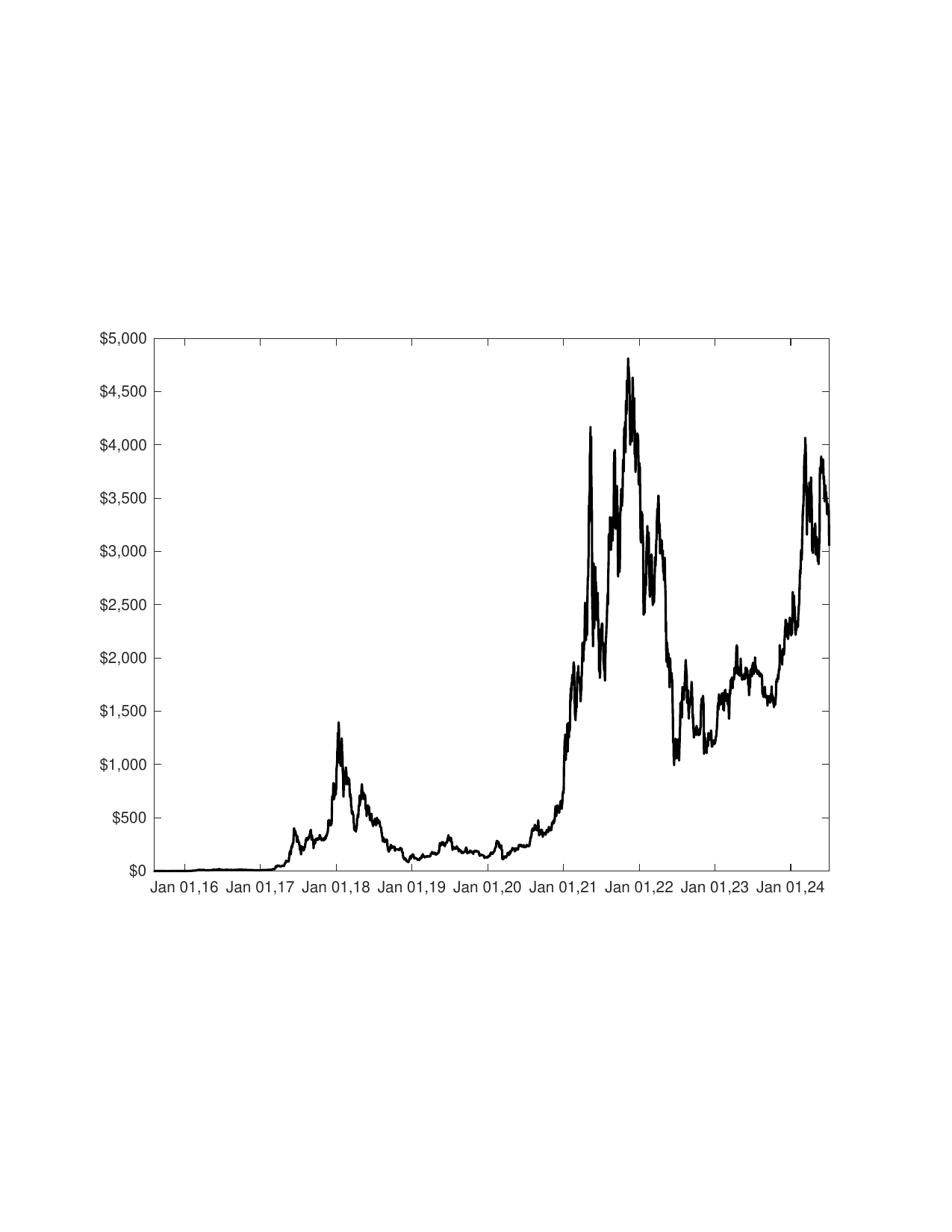}
\vspace{-0.5cm}
     \caption{Ethereum (ETH) Daily Price}
         \label{fig11}
          \end{subfigure}
\vspace{-0.6cm}
  \caption{Daily Price}
  \label{fig1}
\vspace{-0.6cm}
\end{figure}

\noindent
Let $m$, the number of observations, and $S_{j}$, the daily observed price on the day $t_{j}$ with $j=1,\dots,m$. The daily return $(y_{j})$ is computed as follows:
\begin{align}
y_{j}=\log(S_{j}/S_{j-1}) \hspace{10 mm}  \hbox{ $j=2,\dots,m$}.\label{eq:l40}
 \end{align}
\noindent  
As shown in Fig \ref{fig21} and Fig \ref{fig22}, the daily return reaches the lowest level ($-46\%$ for Bitcoin and $-55\%$ for Ethereum) in the first quarter of $2020$ amid the coronavirus pandemic and massive disruptions in the global economy. Nine values were identified as outliers and removed from the data set to avoid a negative impact on the GTS model estimation and the empirical statistics. 
\begin{figure}[ht]
\vspace{-0.2cm}
    \centering
  \begin{subfigure}[b]{0.4\linewidth}
    \includegraphics[width=\linewidth]{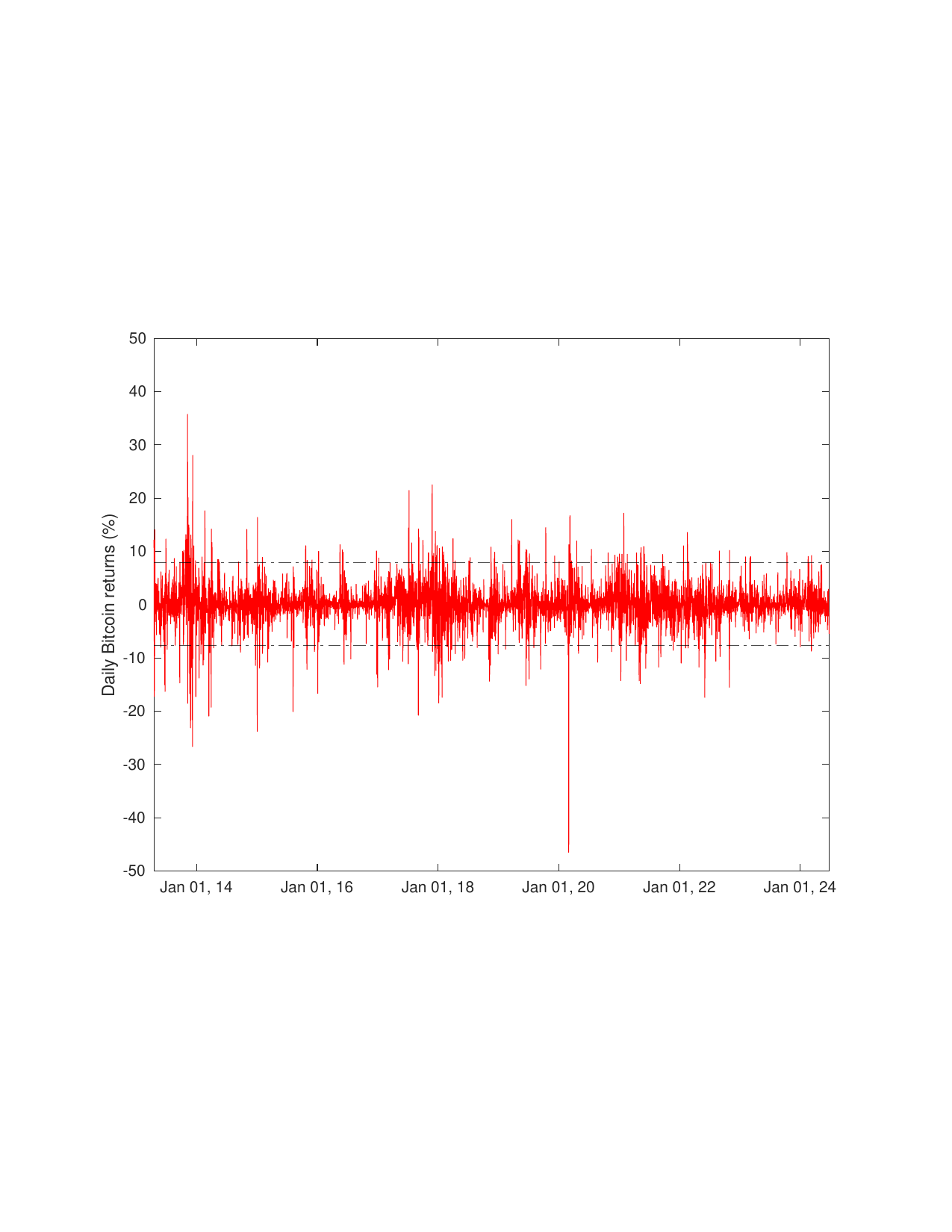}
\vspace{-0.5cm}
     \caption{Daily Bitcoin Return}
         \label{fig21}
  \end{subfigure}
  \begin{subfigure}[b]{0.4\linewidth}
    \includegraphics[width=\linewidth]{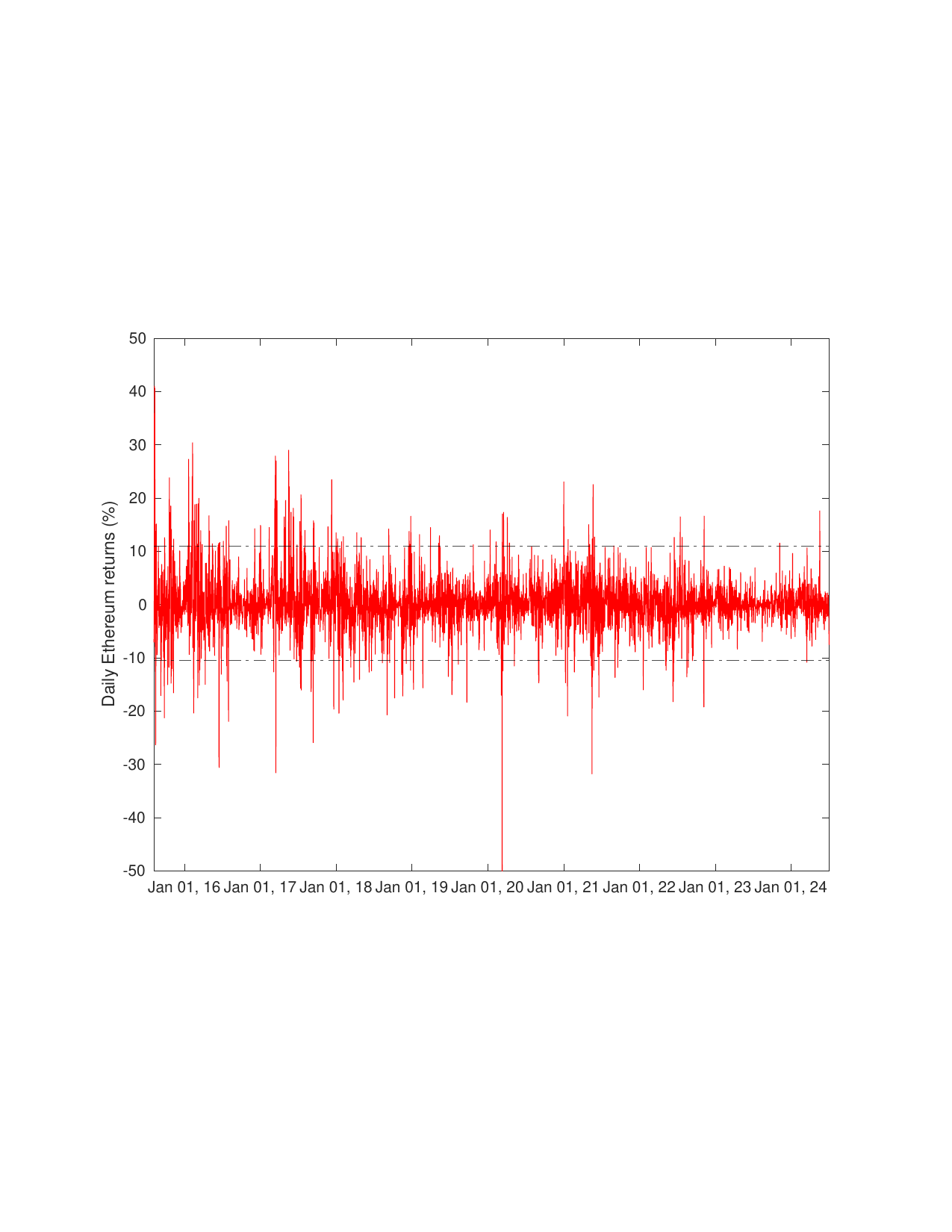}
\vspace{-0.5cm}
     \caption{ Daily Ethereum Return }
         \label{fig22}
          \end{subfigure}
\vspace{-0.6cm}
  \caption{Daily Return}
  \label{fig2}
\vspace{-0.6cm}
\end{figure}
\subsection{Multidimensional Estimation Results for Cryptocurrencies}
\noindent
 The results of the GTS parameter estimation are summarised in Table \ref{tab1} for Bitcoin and Table \ref{tab2} for Ethereum data. The brackets are the asymptotic standard errors computed using the inverse of the Hessian matrix built in (\ref{eq:l31}). The ML estimate of $\mu$ is negative for both Bitcoin and Ethereum, while others are positive, as expected in the literature. The asymptotic standard error for $\mu$ is quite large and suggests that $\mu$ is not statistically significant at 5\%.\\
 \noindent
 The log-likelihood, Akaike's information Criteria (AIC), and Bayesian information criteria (BIK) statistics show that the GTS distribution with seven parameters performs better than the two-parameter Normal distribution (GBM). A comprehensive and detailed examination of the statistical significance of the results will be carried out in Section \ref{test}.\\
 \noindent
 Table \ref{tab1} provides a summary of the estimation results for Bitcoin returns. The skewness parameters ($\lambda_{+}$, $\lambda_{-}$) are statistically significant at 5\%.
  \begin{table}[ht]
 \vspace{-0.6cm}
\centering
\caption{Maximum Likelihood  GTS Parameter Estimation for Bitcoin}
\label{tab1}
\begin{tabular}{@{} c|c|c|ccccc @{}}
\toprule
\multicolumn{1}{c|}{\textbf{Model}} & \multicolumn{1}{c|}{\textbf{Parameter}} & \multicolumn{1}{c|}{\textbf{Estimate}} & \multicolumn{1}{c|}{\textbf{Std Err}} & \multicolumn{1}{c|}{\textbf{z}} & \multicolumn{1}{c|}{\textbf{$Pr(Z >|z|)$}} & \multicolumn{2}{c}{\textbf{[95\% Conf.Interval]}}  \\ \toprule

\multirow{10}{*}{\textbf{GTS}} & \multirow{1}{*}{\textbf{$\mu$}} &\multirow{1}{*}{-0.121571} & \multirow{1}{*}{(0.375)} & \multirow{1}{*}{-0.32} & \multirow{1}{*}{7.5E-01} & \multirow{1}{*}{-0.856} & \multirow{1}{*}{0.613}  \\

 & \multirow{1}{*}{\textbf{$\beta_{+}$}} & \multirow{1}{*}{0.315548} & \multirow{1}{*}{(0.136)} & \multirow{1}{*}{2.33} & \multirow{1}{*}{2.0E-02} & \multirow{1}{*}{0.050} & \multirow{1}{*}{0.581}  \\

 &\multirow{1}{*}{ \textbf{$\beta_{-}$}} & \multirow{1}{*}{0.406563} & \multirow{1}{*}{(0.117)} & \multirow{1}{*}{3.48} & \multirow{1}{*}{4.9E-04} & \multirow{1}{*}{0.178} & \multirow{1}{*}{0.635}  \\

 & \multirow{1}{*}{\textbf{$\alpha_{+}$}} & \multirow{1}{*}{0.747714} & \multirow{1}{*}{(0.047)} & \multirow{1}{*}{15.76} & \multirow{1}{*}{6.2E-56} & \multirow{1}{*}{0.655} & \multirow{1}{*}{0.841}  \\

 & \multirow{1}{*}{\textbf{$\alpha_{-}$}} & \multirow{1}{*}{0.544565} & \multirow{1}{*}{(0.037)} & \multirow{1}{*}{14.56} & \multirow{1}{*}{4.8E-48} & \multirow{1}{*}{0.471} & \multirow{1}{*}{0.618}  \\

 &\multirow{1}{*}{\textbf{$\lambda_{+}$}} & \multirow{1}{*}{0.246530} & \multirow{1}{*}{(0.036)} & \multirow{1}{*}{6.91} & \multirow{1}{*}{4.9E-12} & \multirow{1}{*}{0.177} & \multirow{1}{*}{0.316}  \\

 & \multirow{1}{*}{\textbf{$\lambda_{-}$}} & \multirow{1}{*}{0.174772} & \multirow{1}{*}{(0.026)} & \multirow{1}{*}{6.69} & \multirow{1}{*}{2.2E-11} & \multirow{1}{*}{0.124} & \multirow{1}{*}{0.226}  \\ \cmidrule(l){2-8}

  & \multirow{1}{*}{\textbf{Log(ML)}} & \multirow{1}{*}{-10606} & \multirow{1}{*}{} & \multirow{1}{*}{} & \multirow{1}{*}{} & \multirow{1}{*}{} & \multirow{1}{*}{}  \\
 & \multirow{1}{*}{\textbf{AIC}} & \multirow{1}{*}{21227} & \multirow{1}{*}{} & \multirow{1}{*}{} & \multirow{1}{*}{} & \multirow{1}{*}{} &  \multirow{1}{*}{}  \\
 & \multirow{1}{*}{\textbf{BIK}} & \multirow{1}{*}{21271} & \multirow{1}{*}{} & \multirow{1}{*}{} & \multirow{1}{*}{} & \multirow{1}{*}{} &  \multirow{1}{*}{} \\ \toprule
 \multirow{5}{*}{\textbf{GBM}} & \multirow{1}{*}{\textbf{$\mu$}} & \multirow{1}{*}{0.151997} & \multirow{1}{*}{(0.060)} & \multirow{1}{*}{2.51} & \multirow{1}{*}{1.2E-02} & \multirow{1}{*}{0.033} & \multirow{1}{*}{0.271} \\
 & \multirow{1}{*}{\textbf{$\sigma$}} & \multirow{1}{*}{3.865132} & \multirow{1}{*}{(0.330)} & \multirow{1}{*}{11.69} & \multirow{1}{*}{7.2E-32} & \multirow{1}{*}{3.217} & \multirow{1}{*}{4.513}  \\ \cmidrule(l){2-8}
 & \multirow{1}{*}{\textbf{Log(ML)}} & \multirow{1}{*}{-11313} & \multirow{1}{*}{} & \multirow{1}{*}{} & \multirow{1}{*}{} & \multirow{1}{*}{} &  \multirow{1}{*}{}   \\
 & \multirow{1}{*}{\textbf{AIC}} & \multirow{1}{*}{22630} & \multirow{1}{*}{} & \multirow{1}{*}{} & \multirow{1}{*}{} & \multirow{1}{*}{} &  \multirow{1}{*}{}   \\
  & \multirow{1}{*}{\textbf{BIK}} & \multirow{1}{*}{22638} & \multirow{1}{*}{} & \multirow{1}{*}{} & \multirow{1}{*}{} & \multirow{1}{*}{} &  \multirow{1}{*}{}  \\ \bottomrule
\end{tabular}
\vspace{-0.5cm}
\end{table}

The difference is positive and statistically significant, which proves that the Bitcoin return is asymmetric and skewed to the left. The process intensity parameters ($\alpha_{+}$, $\alpha_{-}$) are statistically significant at 5\%. Similarly, the difference is positive and statistically significant, which shows Bitcoin is more likely to produce positive returns than negative returns. The index of stability parameters ($\beta_{+}$, $\beta_{-}$) are both statistically significant at 5\%.\\
However, the difference is positive but not statistically significant. The GTS distribution with $\beta=\beta_{+}=\beta_{-}$, called Kobol distribution, was fitted as well, and the estimation results are presented in Appendix \ref{eq:an3}. As shown in Table \ref{tabc1}, all the parameters are statistically significant at 5\%, and have the expected positive sign. However, the likelihood ratio test in Table \ref{tab6} shows that the GTS distribution is not significantly different from the Kobol distribution as the p-value ($69.6\%$) is large. Refer to \cite{boyarchenko2002non} for more details on Kobol distribution\\

\noindent
The parameters for Ethereum returns data are statistically significant at 5\%, except $\mu$ and $\beta_{-}$. The difference ($\lambda_{+} - \lambda_{-}$) in skewness parameters is negative and not statistically significant, showing that the Ethereum return is asymmetric and skewed to the right. Similarly, the difference ($\alpha_{+} - \alpha_{-}$) in the intensity parameters is positive and not statistically significant, as shown the confidence interval in Table \ref{tab2}. Contrary to the Bitcoin return, the Ethereum return has a larger process intensity, which provides evidence that Ethereum has a lower level of peakedness and a higher level of thickness. 
\begin{table}[ht]
\vspace{-0.5cm}
\centering
\caption{Maximum Likelihood  GTS Parameter Estimation for Ethereum}
\label{tab2}
\begin{tabular}{@{} c|c|c|ccccc@{}}
\toprule
\multicolumn{1}{c|}{\textbf{Model}} & \multicolumn{1}{c|}{\textbf{Param}} & \multicolumn{1}{c|}{\textbf{Estimate}} & \multicolumn{1}{c|}{\textbf{Std Err}} & \multicolumn{1}{c|}{\textbf{z}} & \multicolumn{1}{c|}{\textbf{$Pr(Z >|z|)$}} & \multicolumn{2}{c}{\textbf{[95\% Conf.Interval]}} \\ \toprule

\multirow{10}{*}{\textbf{GTS}} & \multirow{1}{*}{\textbf{$\mu$}} &  \multirow{1}{*}{-0.4854} & \multirow{1}{*}{(1.008)} & \multirow{1}{*}{-0.48} & \multirow{1}{*}{6.3E-01} & \multirow{1}{*}{-2.461} & \multirow{1}{*}{1.491} \\

 & \multirow{1}{*}{\textbf{$\beta_{+}$}} & \multirow{1}{*}{0.3904} & \multirow{1}{*}{(0.164)} & \multirow{1}{*}{2.38} & \multirow{1}{*}{1.7E-02} & \multirow{1}{*}{0.069} & \multirow{1}{*}{0.712} \\

 &\multirow{1}{*}{ \textbf{$\beta_{-}$}} & \multirow{1}{*}{0.4045} & \multirow{1}{*}{(0.210)} & \multirow{1}{*}{1.93} & \multirow{1}{*}{5.4E-02} & \multirow{1}{*}{-0.007} & \multirow{1}{*}{0.816} \\

 & \multirow{1}{*}{\textbf{$\alpha_{+}$}} & \multirow{1}{*}{0.9582} & \multirow{1}{*}{(0.106)} & \multirow{1}{*}{9.01} & \multirow{1}{*}{1.1E-19} & \multirow{1}{*}{0.750} & \multirow{1}{*}{1.167} \\

 & \multirow{1}{*}{\textbf{$\alpha_{-}$}} & \multirow{1}{*}{0.8005} & \multirow{1}{*}{(0.110)} & \multirow{1}{*}{7.25} & \multirow{1}{*}{4.2E-13} & \multirow{1}{*}{0.584} & \multirow{1}{*}{1.017} \\

 &\multirow{1}{*}{\textbf{$\lambda_{+}$}} & \multirow{1}{*}{0.1667} & \multirow{1}{*}{(0.029)} & \multirow{1}{*}{5.72} & \multirow{1}{*}{1.1E-08} & \multirow{1}{*}{0.110} & \multirow{1}{*}{0.224} \\

 & \multirow{1}{*}{\textbf{$\lambda_{-}$}} & \multirow{1}{*}{0.1708} & \multirow{1}{*}{(0.036)} & \multirow{1}{*}{4.71} & \multirow{1}{*}{2.5E-06} & \multirow{1}{*}{0.110} & \multirow{1}{*}{0.242} \\ \cmidrule(l){2-8}

  & \multirow{1}{*}{\textbf{Log(ML)}} &  \multirow{1}{*}{-9552} & \multirow{1}{*}{} & \multirow{1}{*}{} & \multirow{1}{*}{} & \multirow{1}{*}{}  \\
 & \multirow{1}{*}{\textbf{AIC}} &  \multirow{1}{*}{19119} & \multirow{1}{*}{} & \multirow{1}{*}{} & \multirow{1}{*}{} & \multirow{1}{*}{} & \multirow{1}{*}{} \\
 & \multirow{1}{*}{\textbf{BIK}} &  \multirow{1}{*}{19162} & \multirow{1}{*}{} & \multirow{1}{*}{} & \multirow{1}{*}{} & \multirow{1}{*}{} &\multirow{1}{*}{}  \\ \toprule

 \multirow{5}{*}{\textbf{GBM}} & \multirow{1}{*}{\textbf{$\mu$}}&  \multirow{1}{*}{0.267284} & \multirow{1}{*}{(0.091)} & \multirow{1}{*}{2.93} & \multirow{1}{*}{3.4E-03} & \multirow{1}{*}{0.088} & \multirow{1}{*}{0.446} \\
 & \multirow{1}{*}{\textbf{$\sigma$}} & \multirow{1}{*}{5.205539} & \multirow{1}{*}{(0.672)} & \multirow{1}{*}{7.74} & \multirow{1}{*}{1.0E-14} & \multirow{1}{*}{3.887} & \multirow{1}{*}{6.524} \\ \cmidrule(l){2-8}
 & \multirow{1}{*}{\textbf{Log(ML)}}  & \multirow{1}{*}{-9960} & \multirow{1}{*}{} & \multirow{1}{*}{} & \multirow{1}{*}{} & \multirow{1}{*}{} & \multirow{1}{*}{}  \\
 & \multirow{1}{*}{\textbf{AIC}} & \multirow{1}{*}{19925} & \multirow{1}{*}{} & \multirow{1}{*}{} & \multirow{1}{*}{} & \multirow{1}{*}{} & \multirow{1}{*}{}  \\
  & \multirow{1}{*}{\textbf{BIK}} & \multirow{1}{*}{19933} & \multirow{1}{*}{} & \multirow{1}{*}{} & \multirow{1}{*}{} & \multirow{1}{*}{} & \multirow{1}{*}{}  \\ \bottomrule
\end{tabular}%
\vspace{-0.3cm}
\end{table}

\noindent
We consider the following constraints $\alpha=\alpha_{+} =\alpha_{-}$ and $\beta=\beta_{+} = \beta_{-}$, which is the Carr–Geman–Madan–Yor (CGMY) distribution, also called Classical Tempered Stable Distribution. The CGMY distribution was fitted as well, and the estimation results are presented in Appendix \ref{eq:an4}. As shown in Table \ref{tabd1}, all the parameters are statistically significant at 5\%, and have the expected positive sign. However, the likelihood ratio test in Table \ref{tab6} shows, with a high p-value ($35.3\%$), That the GTS distribution is not significantly different from the  CGMY distribution, and the null hypothesis can not be rejected. Refer to \cite{carr2003stochastic, rachev2011financial} for more details on CGMY distribution\\

\noindent
Table \ref{tab1} and Table \ref{tab2} summarized the last row of Table \ref{taba1} and Table \ref{taba2} respectively in appendix \ref{eq:an1}, which describe the convergence process of the GTS parameter for Bitcoin and Ethereum data. The convergence process was obtained using the Newton-Raphson iteration algorithm (\ref{eq:l34}). Each row has eleven columns made of the iteration number, the seven parameters \textbf{$\mu$}, \textbf{$\beta_{+}$}, \textbf{$\beta_{-}$}, \textbf{$\alpha_{+}$}, \textbf{$\alpha_{-}$}, \textbf{$\lambda_{+}$}, \textbf{$\lambda_{-}$}, and three statistical indicators: the log-likelihood (\textbf{$Log(ML)$}), the norm of the partial derivatives (\textbf{$||\frac{dLog(ML)}{dV}||$}), and the maximum value of the eigenvalues (\textbf{$Max Eigen Value$}). The statistical indicators aim at checking if the two necessary and sufficient conditions described in (\ref{eq:l33}) are all met. \textbf{$Log(ML)$} displays the value of the Naperian logarithm of the likelihood function $L(x, V)$, as described in (\ref{eq:l31}); \textbf{$||\frac{dLog(ML)}{dV}||$} displays the value of the norm of the first derivatives (\textbf{$\frac{dl(x, V)}{dV_j}$}) described in (\ref{eq:l32}); and \textbf{$Max Eigen Value$} displays the maximum value of the seven eigenvalues generated by the Hessian matrix (\textbf{$\frac{d^{2}l(x, V)}{dV_{k}dV_{j}}$}), as described in (\ref{eq:l32}).\\

\noindent
Similarly, Table \ref{tabc2} and Table \ref{tabd2} describe the Convergence process of the Kobol distribution parameter for Bitcoin returns and the CGMY distribution parameter for Ethereum returns.\\

\noindent
GTS parameter estimation in Table \ref{tab1} and Table \ref{tab2} are used to evaluate the impact of each parameter on the GTS probability density function. As shown in Fig \ref{fig111} and  Fig \ref{fig22bis}, the effect of the GTS parameters on the probability density function has the same patterns on Bitcoin and Ethereum returns. However, the magnitudes are different. As shown in Fig \ref{fig11a} and Fig \ref{fig11b}, \textbf{$\beta_{-}$} (\textbf{$\alpha_{-}$} ) has a higher effect on the probability density function (pdf) than \textbf{$\beta_{+}$} (\textbf{$\alpha_{+}$}). However, \textbf{$\lambda_{-}$} and \textbf{$\lambda_{+}$}  in both graphs seem symmetric and have the same impact.

\begin{figure}[ht]
\vspace{-0.5cm}
    \centering
  \begin{subfigure}[b]{0.3\linewidth}
    \includegraphics[width=\linewidth]{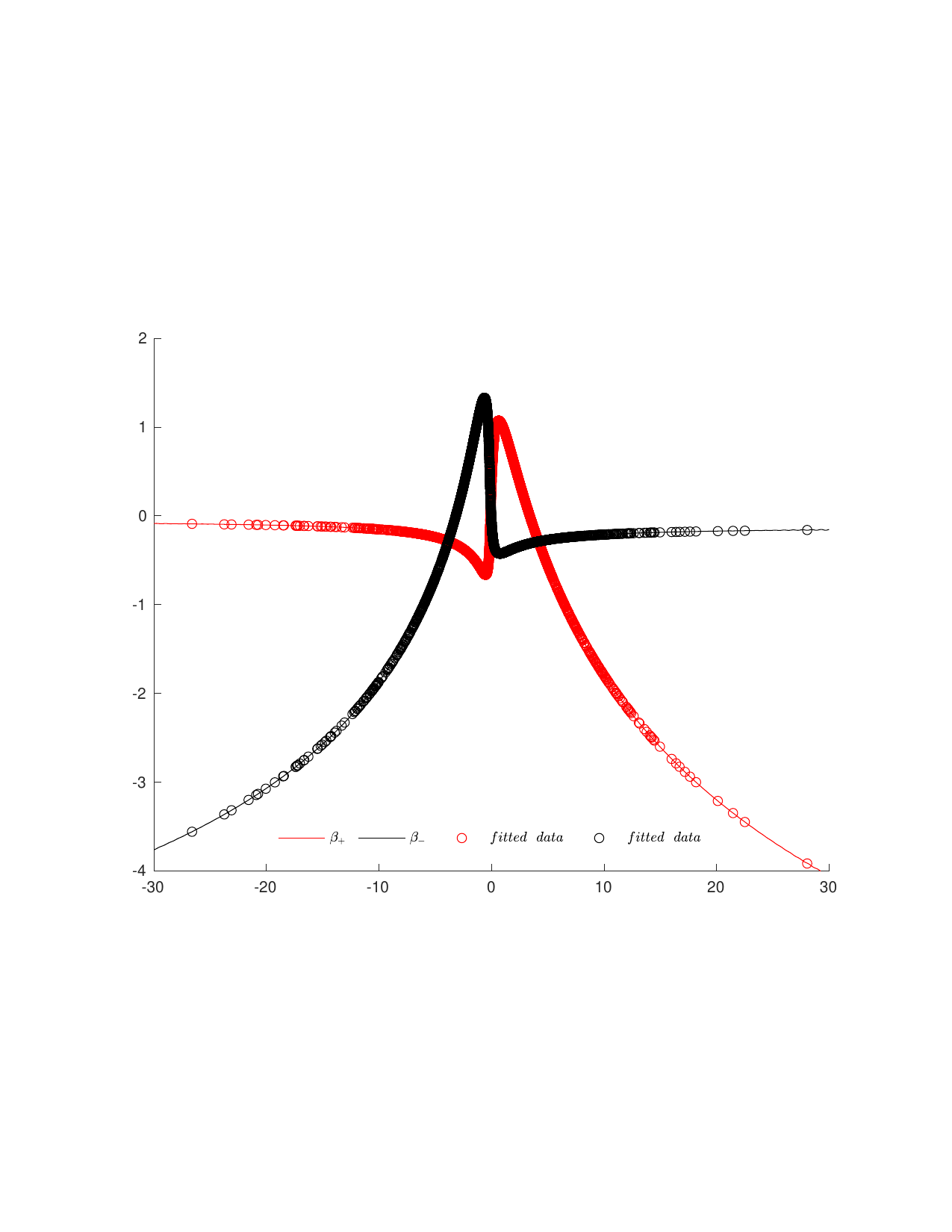}
\vspace{-0.5cm}
     \caption{$V_{j}=\beta_{+}, V_{j}=\beta_{-}$}
         \label{fig11a}
  \end{subfigure}
  \begin{subfigure}[b]{0.3\linewidth}
    \includegraphics[width=\linewidth]{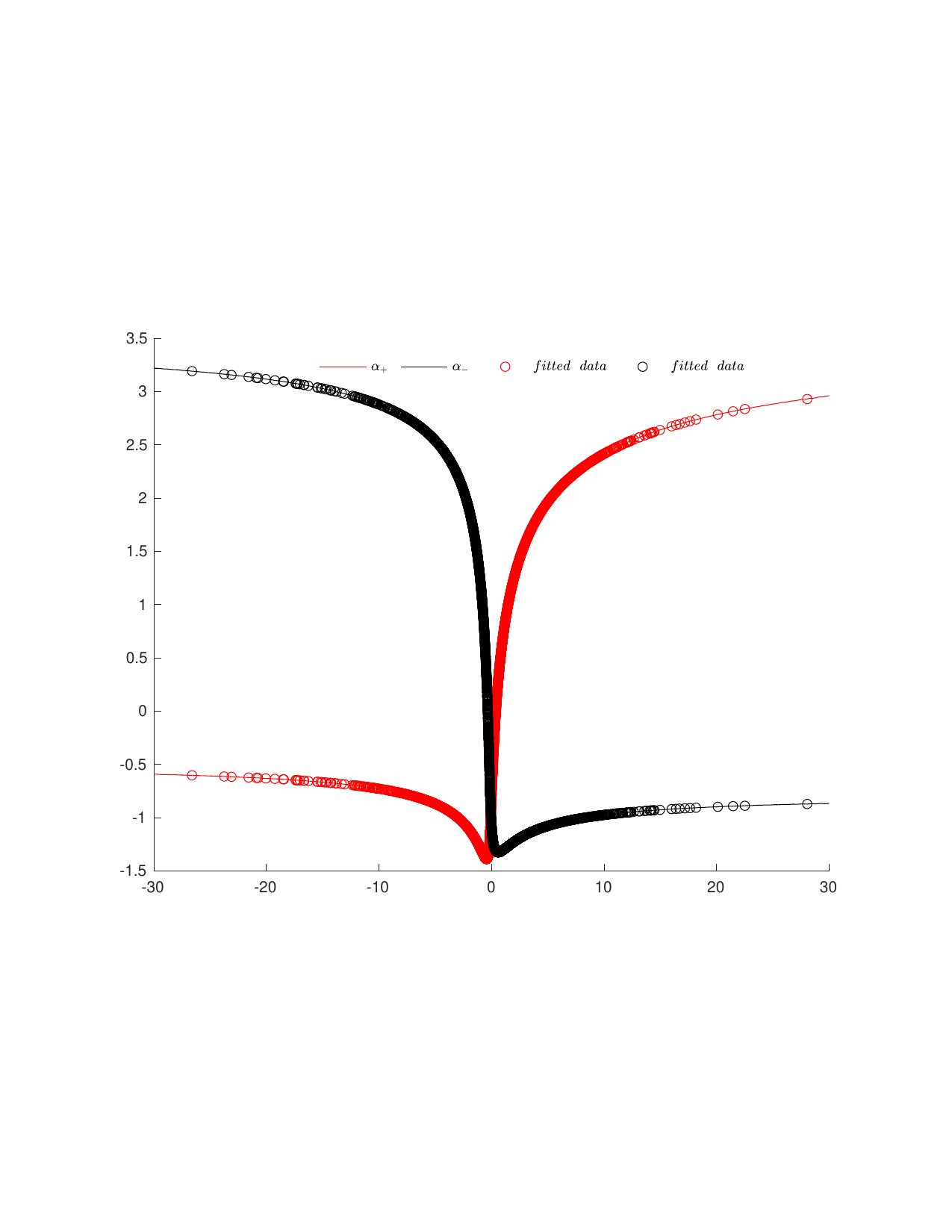}
\vspace{-0.5cm}
     \caption{$V_{j}=\alpha_{+}, V_{j}=\alpha_{-}$}
         \label{fig11b}
          \end{subfigure}
  \begin{subfigure}[b]{0.3\linewidth}
    \includegraphics[width=\linewidth]{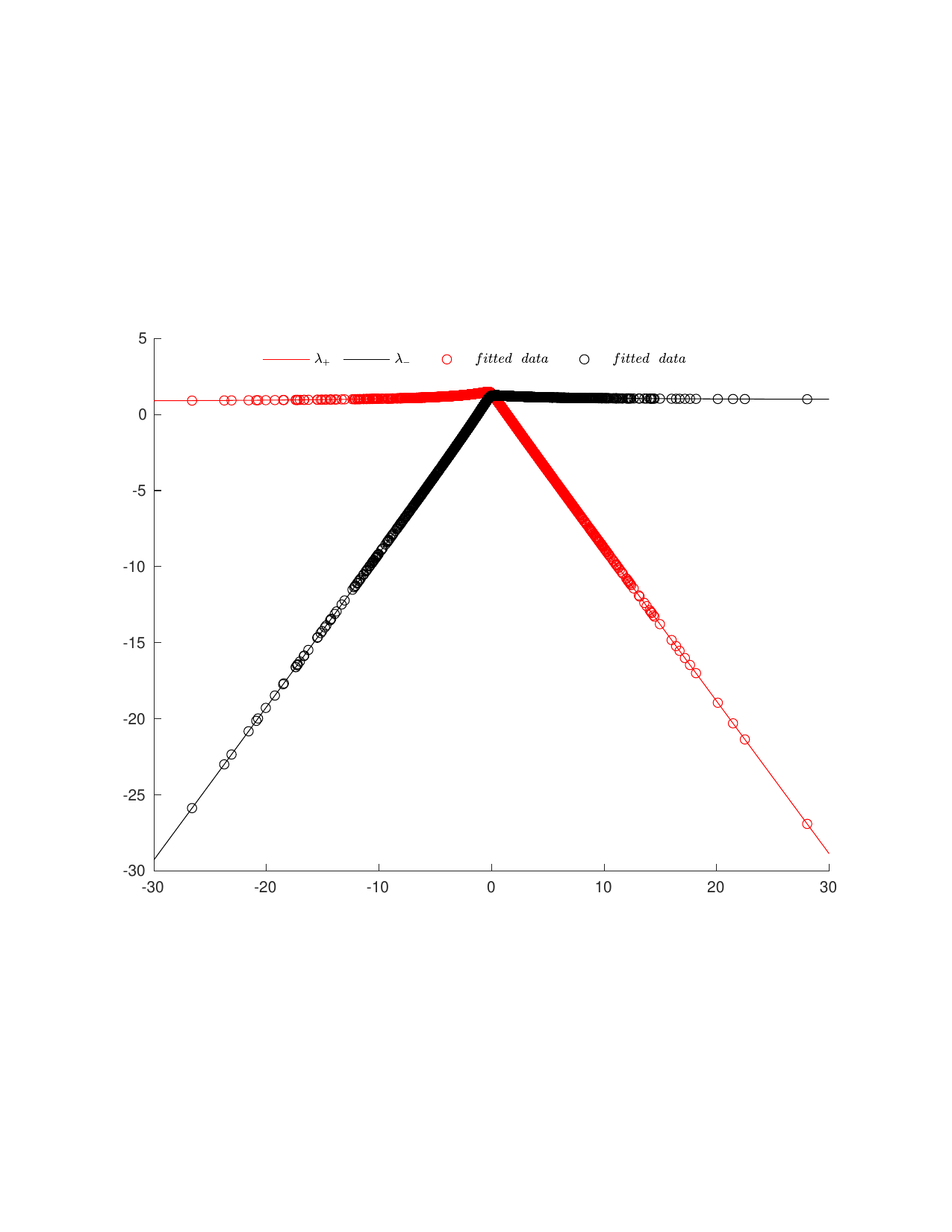}
\vspace{-0.5cm}
     \caption{$V_{j}=\lambda_{+}, V_{j}=\lambda_{-}$}
         \label{fig11c}
          \end{subfigure}
\vspace{-0.2cm}
  \caption{$\frac{\frac{df(x, V)}{dV_j}}{f(x, V)}$: Effect of parameters on the GTS probability density (Bitcoin Returns)}
  \label{fig111}
\vspace{-0.3cm}
\end{figure}

\begin{figure}[ht]
\vspace{-0.5cm}
    \centering
  \begin{subfigure}[b]{0.3\linewidth}
    \includegraphics[width=\linewidth]{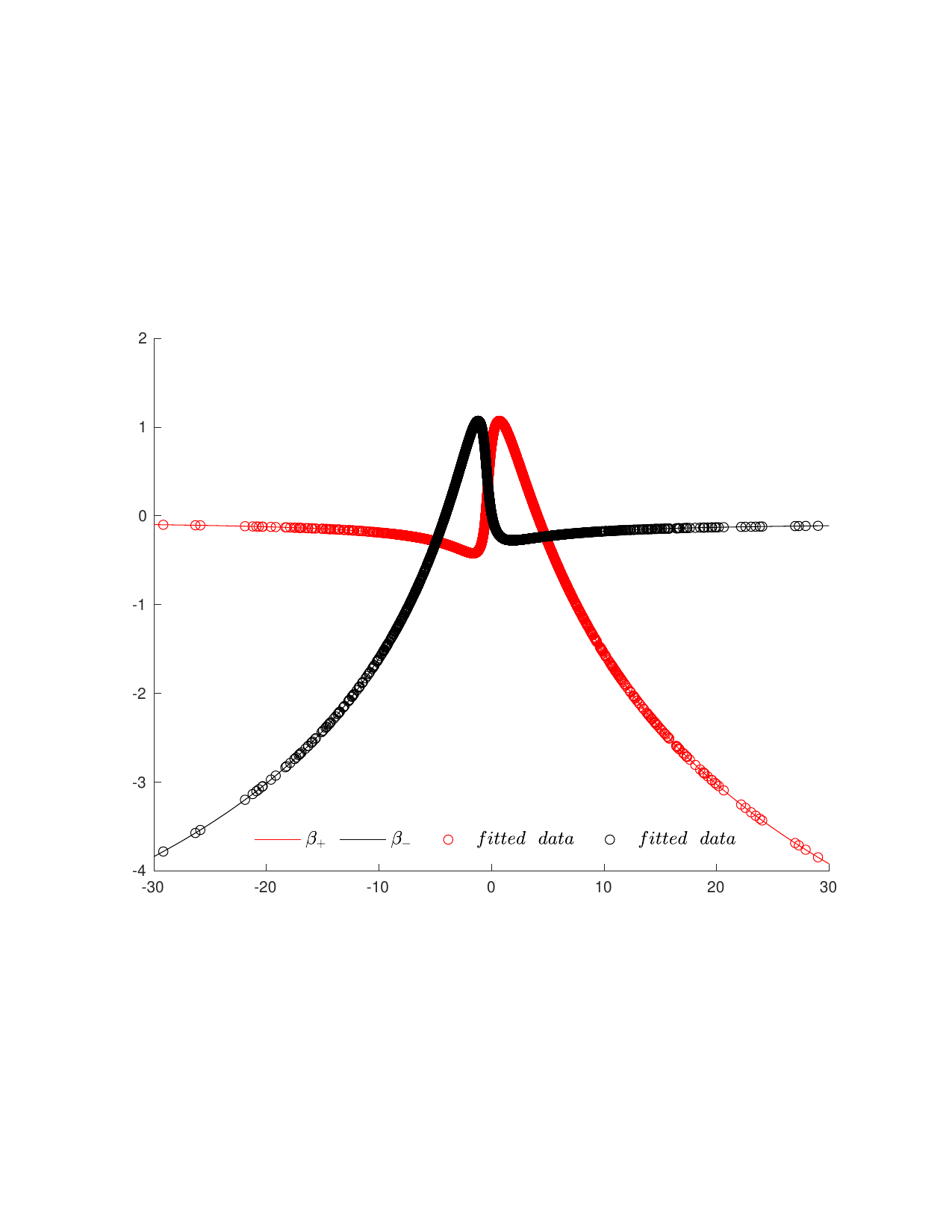}
\vspace{-0.5cm}
     \caption{$V_{j}=\beta_{+}, V_{j}=\beta_{-}$}
         \label{fig22a}
  \end{subfigure}
  \begin{subfigure}[b]{0.3\linewidth}
    \includegraphics[width=\linewidth]{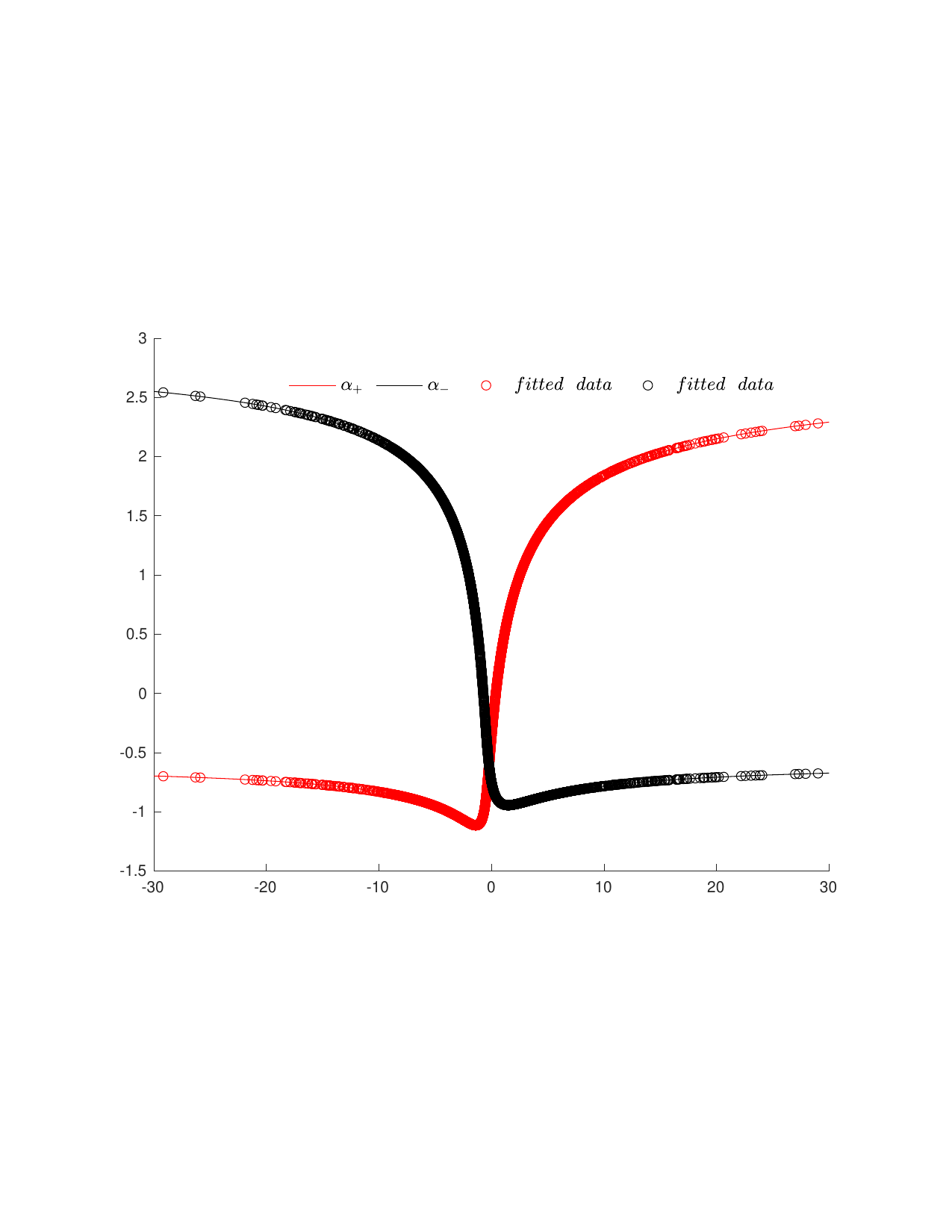}
\vspace{-0.5cm}
     \caption{$V_{j}=\alpha_{+}, V_{j}=\alpha_{-}$}
         \label{fig22b}
          \end{subfigure}
  \begin{subfigure}[b]{0.3\linewidth}
    \includegraphics[width=\linewidth]{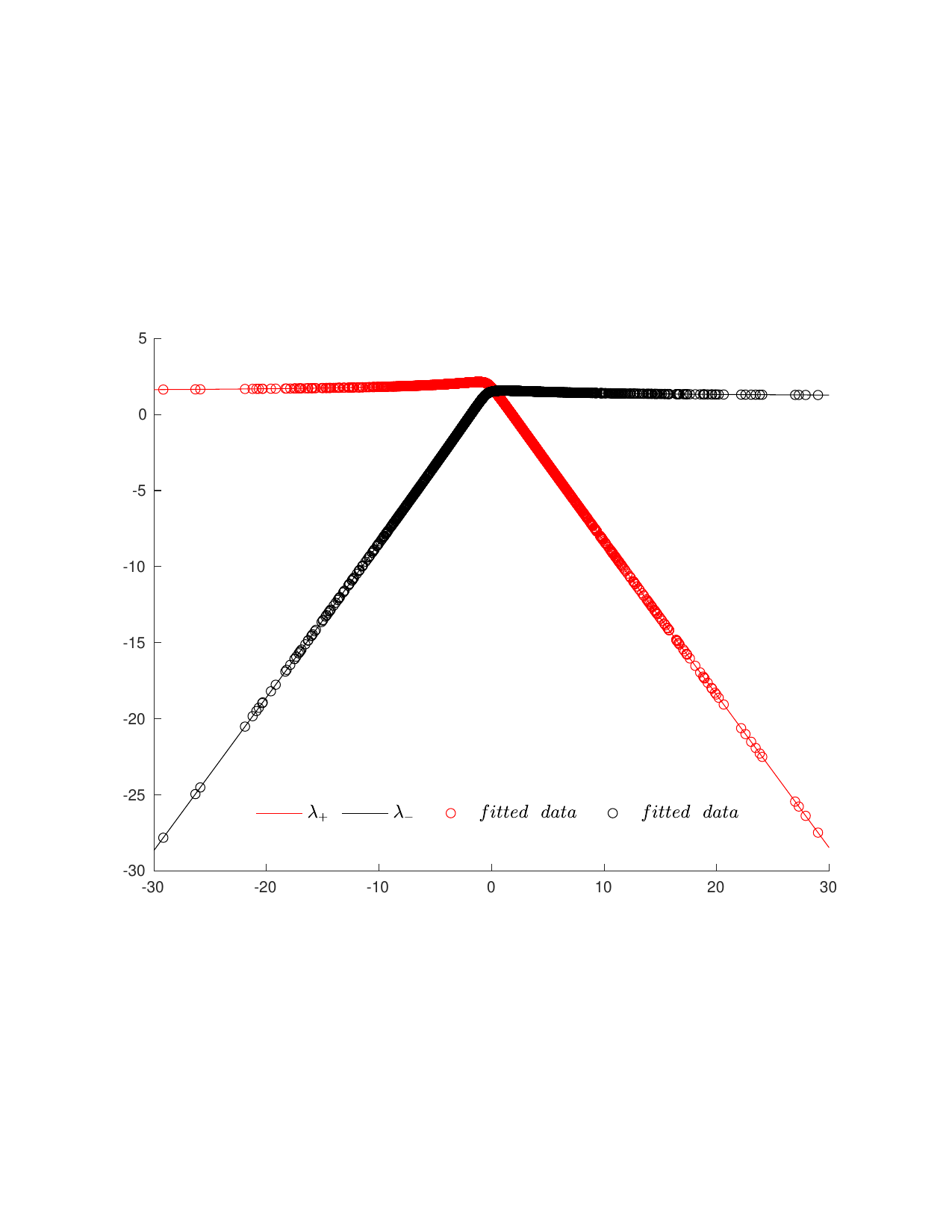}
\vspace{-0.5cm}
     \caption{$V_{j}=\lambda_{+}, V_{j}=\lambda_{-}$}
         \label{fig22c}
          \end{subfigure}
\vspace{-0.2cm}
  \caption{$\frac{\frac{df(x, V)}{dV_j}}{f(x, V)}$: Effect of parameters on the GTS probability density (Ethereum Returns)}
  \label{fig22bis}
\vspace{-0.6cm}
\end{figure}


\subsection{Evaluation of the Method of Moments}
\noindent
The Method of Moments estimates the parameters of the GTS distribution by equating empirical moments and the theoretical moments of the GTS distribution. We empirically estimate the $k^{th}$ moments ($m_{k} = E(x^{k})$), based on sample data $x=\left(x_{j}\right)_{1 \leq j \leq m} $ as follows:
 \begin{equation}
 \begin{aligned}
\hat{m}_{k} = \frac{1}{M}\sum_{j=1}^{m} x^{k}_{j} \quad  \text{for}  \quad k=1,\dotsc,7. \label{eq:l37ba}
 \end{aligned}
 \end{equation}
 On the other side, the Cumulants ($\kappa_{k}$) in theorem \ref{lem7}  can be related to the moment of the GTS distribution by the following relationship \cite{smith1995, Poloskov_2021, rota2000283}:
  \begin{equation}
 \begin{aligned}
m_{k} = E(x^{k}) = \sum_{j=1}^{k-1} \binom{k-1}{j-1} \kappa_{j} m_{k-j} + \kappa_{k} \quad  \text{for}  \quad k=1,\dotsc,7. \label{eq:l37bb}
 \end{aligned}
 \end{equation}
 The method of moments estimator for $V=(\mu, \beta_{+}, \beta_{-}, \alpha_{+},\alpha_{-}, \lambda_{+}, \lambda_{-})$ is defined as the solution to the following system of equations:
  \begin{equation}
 \begin{aligned}
\hat{m}_{k} = m_{k} \quad  \text{for}  \quad k=1,\dotsc,7. \label{eq:l37bc}
 \end{aligned}
 \end{equation}
 \noindent
The system of equations (\ref{eq:l37bc}) is often not analytically solvable. For the conditions of existence and uniqueness of the solution, refer to \cite{kuchler2013tempered}.\\

\noindent
Maximum likelihood GTS parameter estimation in Table \ref{tab1} and Table \ref{tab2} are used to evaluate the system of equations in (\ref{eq:l37bc}).  As shown in Table \ref{tab3}, the solution of the maximum Likelihood method satisfies at a certain extent the equations for the first four moments: $\hat{m}_{1}$, $\hat{m}_{2}$, $\hat{m}_{3}$, $\hat{m}_{4}$ in the system (\ref{eq:l37bc}). The $7^{th}$ moment equation has the highest relative error: 89.9\% for the Bitcoin BTC and 68.3\% for the Ethereum. Therefore, the maximum likelihood GTS parameter estimation is not the same as the GTS parameter estimation from the method of moments.\\

\noindent
In addition to the method of moments estimations, the lower relative errors in Table \ref{tab3} show that empirical and theoretical standard deviation ($\sigma$), skewness, and kurtosis seem to be consistent for Bitcoin and Ethereum. The empirical \& theoretical statistics show that the average Ethereum daily return is greater and more volatile than the Bitcoin daily returns. Both assets are thicker than the Normal distribution. However, the daily return of Bitcoin is skewed to the left, whereas the daily return of Ethereum is skewed to the right.
\begin{table}[ht]
\vspace{-0.3cm}
\centering
 \caption{Evaluation of the Method of Moments}
\label{tab3}
\setlength{\tabcolsep}{0.3mm}
\begin{tabular}{@{}lcclccl@{}}
\toprule
 & \multicolumn{3}{c}{\textbf{Bitcoin BTC}} & \multicolumn{3}{c}{\textbf{Ethereum}} \\ \midrule
\multirow{1}{*}{} & \multirow{1}{*}{\textbf{Empirical(1)}} & \multirow{1}{*}{\textbf{Theoretical(2)}} & \multirow{1}{*}{\textbf{$\frac{(1)-(2)}{2}$}} & \multirow{1}{*}{\textbf{Empirical(1)}} & \multirow{1}{*}{\textbf{Theoretical(2)}} & \multirow{1}{*}{\textbf{$\frac{(1)-(2)}{2}$}} \\ \midrule
\multirow{1}{*}{\textbf{Sample size}} & \multirow{1}{*}{4083} & \multirow{1}{*}{} & \multirow{1}{*}{} & \multirow{1}{*}{3246} & \multirow{1}{*}{} & \multirow{1}{*}{} \\
\multirow{1}{*}{\textbf{$\hat{m}_{1}$}} & \multirow{1}{*}{0.152} & \multirow{1}{*}{0.152} & \multirow{1}{*}{0.0\%} & \multirow{1}{*}{0.267} & \multirow{1}{*}{0.267} & \multirow{1}{*}{0.0\%} \\
\multirow{1}{*}{\textbf{$\hat{m}_{2}$}} & \multirow{1}{*}{14.960} & \multirow{1}{*}{15.020} & \multirow{1}{*}{0.4\%} & \multirow{1}{*}{27.161} & \multirow{1}{*}{27.388} & \multirow{1}{*}{0.8\%} \\
\multirow{1}{*}{\textbf{$\hat{m}_{3}$}} & \multirow{1}{*}{-11.320} & \multirow{1}{*}{-15.640} & \multirow{1}{*}{27.6\%} & \multirow{1}{*}{55.363} & \multirow{1}{*}{57.867} & \multirow{1}{*}{4.3\%} \\
\multirow{1}{*}{\textbf{$\hat{m}_{4}$}} & \multirow{1}{*}{2033} & \multirow{1}{*}{2256} & \multirow{1}{*}{9.8\%} & \multirow{1}{*}{5267} & \multirow{1}{*}{6307} & \multirow{1}{*}{16.5\%} \\
\multirow{1}{*}{\textbf{$\hat{m}_{5}$}} & \multirow{1}{*}{-5823} & \multirow{1}{*}{-15480} & \multirow{1}{*}{62.3\%} & \multirow{1}{*}{22368} & \multirow{1}{*}{32518} & \multirow{1}{*}{31.2\%} \\
\multirow{1}{*}{\textbf{$\hat{m}_{6}$}} & \multirow{1}{*}{670695} & \multirow{1}{*}{1123215} & \multirow{1}{*}{40.2\%} & \multirow{1}{*}{2114788} & \multirow{1}{*}{4361562} & \multirow{1}{*}{51.5\%} \\
\multirow{1}{*}{\textbf{$\hat{m}_{7}$}} & \multirow{1}{*}{-1997196} & \multirow{1}{*}{-19777988} & \multirow{1}{*}{89.9\%} & \multirow{1}{*}{12411809} & \multirow{1}{*}{39253001} & \multirow{1}{*}{68.3\%} \\
\multirow{1}{*}{\textbf{Standard deviation \footnotemark[1]}} & \multirow{1}{*}{3.865} & \multirow{1}{*}{3.873} & \multirow{1}{*}{0.2\%} & \multirow{1}{*}{5.206} & \multirow{1}{*}{5.226} & \multirow{1}{*}{0.4\%} \\
\multirow{1}{*}{\textbf{Skewness \footnotemark[2]}} & \multirow{1}{*}{-0.314} & \multirow{1}{*}{-0.387} & \multirow{1}{*}{18.8\%} & \multirow{1}{*}{0.238} & \multirow{1}{*}{0.252} & \multirow{1}{*}{5.2\%} \\
\multirow{1}{*}{\textbf{Kurtosis \footnotemark[3]}} & \multirow{1}{*}{9.154} & \multirow{1}{*}{10.082} & \multirow{1}{*}{9.2\%} & \multirow{1}{*}{7.112} & \multirow{1}{*}{8.385} & \multirow{1}{*}{15.2\%} \\
\multirow{1}{*}{\textbf{Max value}} & \multirow{1}{*}{28.052} & \multirow{1}{*}{} & \multirow{1}{*}{} & \multirow{1}{*}{29.013} & \multirow{1}{*}{} & \multirow{1}{*}{} \\
\multirow{1}{*}{\textbf{Min Value}} & \multirow{1}{*}{-26.620}& \multirow{1}{*}{} & \multirow{1}{*}{} & \multirow{1}{*}{-29.174} & \multirow{1}{*}{} & \multirow{1}{*}{} \\ \bottomrule
\end{tabular}%
\vspace{-0.3cm}
\end{table}
\footnotetext[1]{$\sigma=\sqrt{\kappa_{2}}$}
\footnotetext[2]{Skewness is estimated as $\frac{\kappa_{3}}{\kappa_{2}^{3/2}}$}
\footnotetext[3]{Kurtosis is estimated as $3+\frac{\kappa_{4}}{\kappa_{2}^{2}}$;  $\kappa_{1}$, $\kappa_{2}$ and $\kappa_{2}$ are defined in (\ref{eq:l12})}

\clearpage
\newpage
\section{Fitting Tempered Stable Distribution to Traditional Indices: S\&P 500 and SPY EFT\label{estim1}}
\subsection{Data Summaries}
\noindent
The Standard $\&$ Poor’s 500 Composite Stock Price Index, also known as the S$\&$P 500, is a stock index that tracks the share prices of 500 of the largest public companies with stocks listed on the New York Stock Exchange (NYSE) and the Nasdaq in the United States. It was introduced in 1957 and often treated as a proxy for describing the overall health of the stock market or the United States (US) economy. The SPDR S\&P 500   ETF (SPY), also known as the SPY ETF,  is an Exchange-Traded Fund (ETF)that tracks the performance of the S\&P 500. SPY ETF provides a mutual fund's diversification, the stock's flexibility, and lower trading fees. The data were extracted from Yahoo Finance. The historical prices span from 04 January 2010 to 22 July 2024 and were adjusted for splits and dividends.\\

\noindent
The daily price dynamics are provided in Fig \ref{fig3}. Prices have an increasing trend, even after being temporally disrupted in the first quarter of 2020 by the coronavirus pandemic. The S\&P 500 is priced in thousands of US dollars,  whereas the SPY ETF is in hundreds of US dollars. The SPY ETF is cheaper and provides all the attributes of the S\&P 500 index, as shown in  Fig \ref{fig31} and  Fig \ref{fig32}.
\begin{figure}[ht]
    \centering
\hspace{-1cm}
  \begin{subfigure}[b]{0.45\linewidth}
    \includegraphics[width=\linewidth]{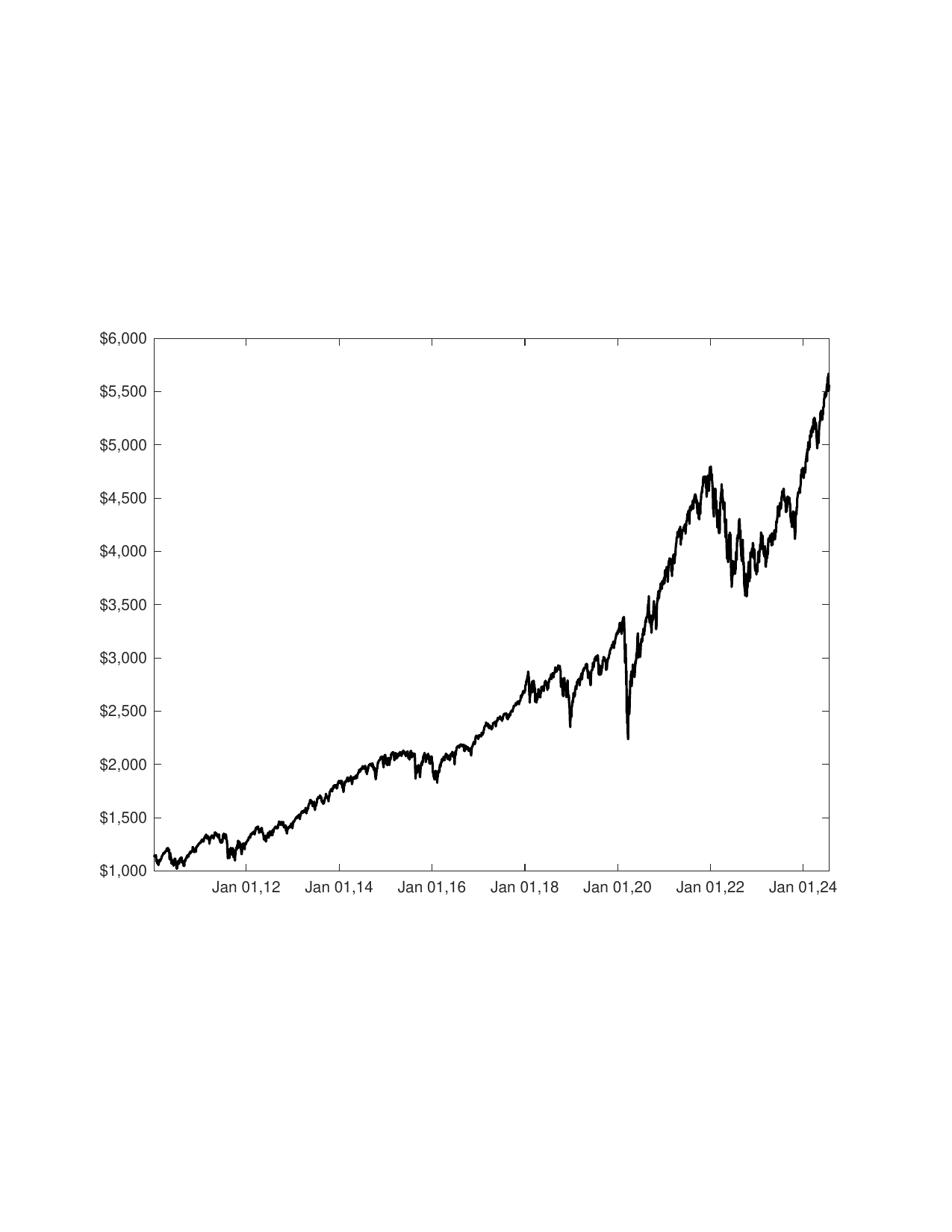}
\vspace{-0.6cm}
     \caption{S\&P 500 Daily Price}
         \label{fig31}
  \end{subfigure}
  \begin{subfigure}[b]{0.45\linewidth}
    \includegraphics[width=\linewidth]{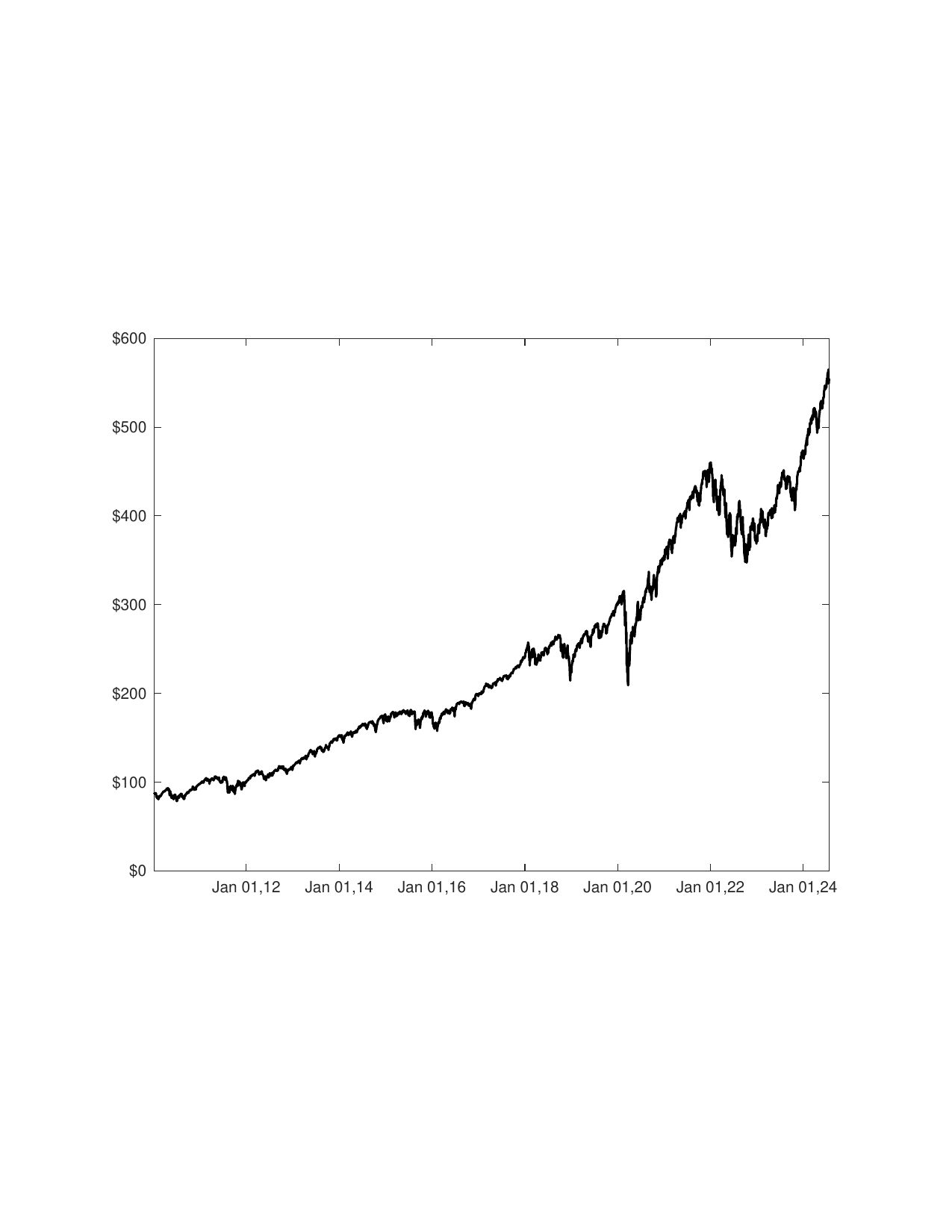}
\vspace{-0.6cm}
     \caption{SPY EFT Daily Price}
         \label{fig32}
          \end{subfigure}
\vspace{-0.3cm}
  \caption{Daily Price}
  \label{fig3}
\vspace{-0.6cm}
\end{figure}

\noindent
Let the number of observations $m$, and the daily observed price $S_{j}$ on day $t_{j}$ with $j=1,\dots,m$; $t_{1}$ is the first observation date (January 04, 2010) and $t_{m}$ is the last observation date (July 22, 2024). The daily return, $y_{j}$, is computed as in (\ref{eq:l42}):\\
\begin{align}
y_{j}=\log(S_{j}/S_{j-1}) \hspace{10 mm}  \hbox{ $j=2,\dots,m$}.\label{eq:l42}
 \end{align}
\noindent  
SPY ETF aims to mirror the performance of the S\&P 500. Fig \ref{fig41} and Fig \ref{fig42} look similar, which is consistent with the goal of SPY ETF. As shown in  Fig \ref{fig41} and Fig \ref{fig42}, the daily return reaches the lowest level ($-12.7\%$ for S\&P 500 and $-11.5\%$ for SPY ETF) in the first quarter of $2020$ amid the coronavirus pandemic and massive disruptions in the global economy. Nine values were identified as outliers and removed from the data set to avoid a negative impact on the GTS model estimation and the empirical statistics. 
\begin{figure}[ht]
\vspace{-0.4cm}
    \centering
  \begin{subfigure}[b]{0.42\linewidth}
    \includegraphics[width=\linewidth]{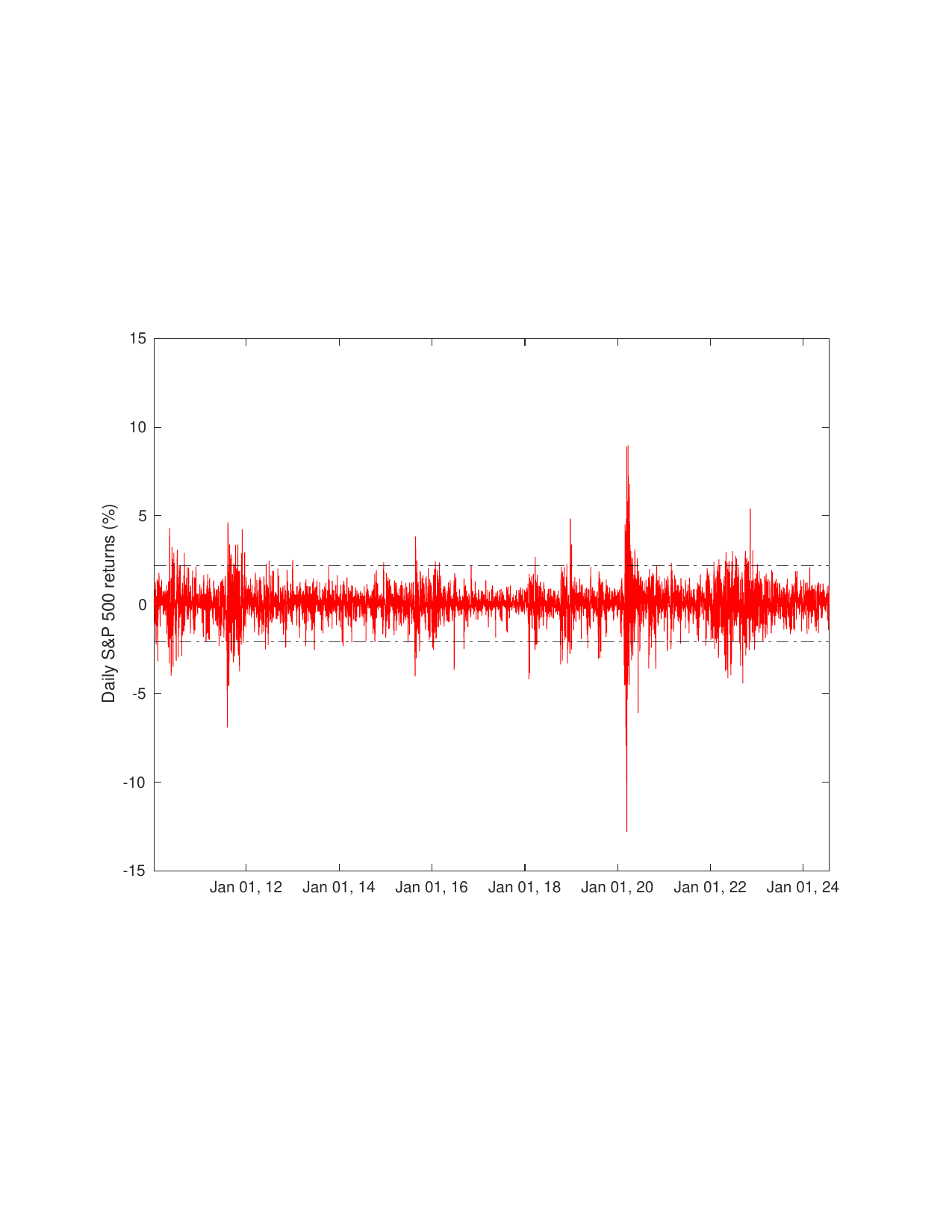}
\vspace{-0.5cm}
     \caption{Daily S\&P500 return}
         \label{fig41}
  \end{subfigure}
  \begin{subfigure}[b]{0.42\linewidth}
    \includegraphics[width=\linewidth]{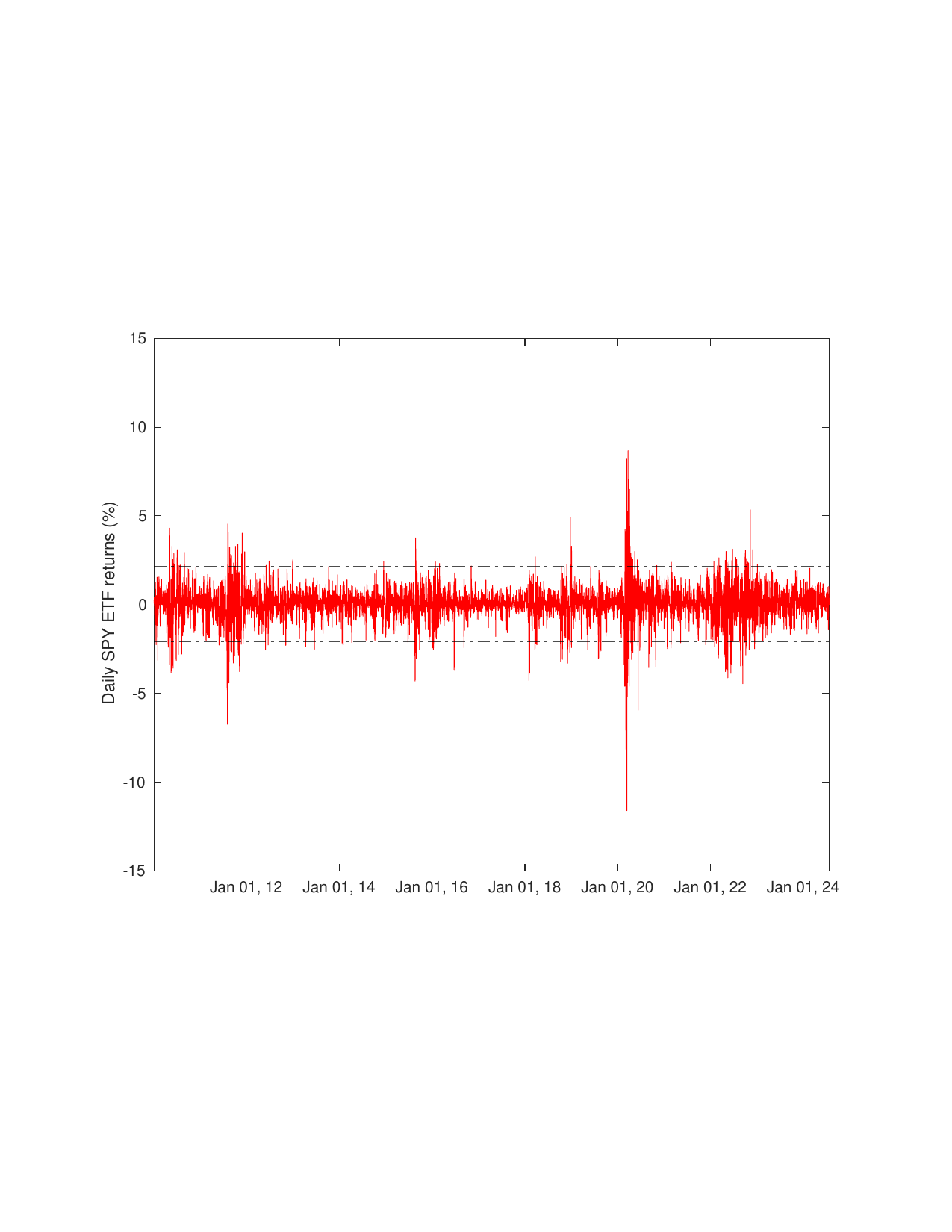}
\vspace{-0.5cm}
     \caption{ Daily SPY ETF return }
         \label{fig42}
          \end{subfigure}
\vspace{-0.4cm}
  \caption{Daily Return}
  \label{fig4}
\vspace{-0.8cm}
\end{figure}

\subsection{Multidimensional Estimation Results for Traditional Indices}
 The estimation results are provided in Table \ref{tab4} for  S\&P 500  return data and Table \ref{tab5} for SPY EFT return data.  As previously, the log-likelihood, AIC, and BIK statistics suggest that the GTS distribution with seven parameters performs better than the two-parameter Normal distribution (GBM). \\
  \noindent  
As shown in both Tables \ref{tab4} \& \ref{tab5}, the ML estimate of $\mu$ is negative, while others are positive, as expected in the literature. The asymptotic standard error for $\mu$, $\beta_{+}$  and $\beta_{-}$ are pretty large and it results that  $\mu$, $\beta_{+}$  and $\beta_{-}$ are not significantly different from zero.
 \begin{table}[ht]
 \vspace{-0.6cm}
\centering
\caption{Maximum Likelihood  GTS Parameter Estimation for S\&P 500 Index}
\label{tab4}
\begin{tabular}{@{} c|c|c|ccccc @{}}
\toprule
\multicolumn{1}{c|}{\textbf{Model}} & \multicolumn{1}{c|}{\textbf{Param}} & \multicolumn{1}{c|}{\textbf{Estimate}} & \multicolumn{1}{c|}{\textbf{Std Err}} & \multicolumn{1}{c|}{\textbf{z}} & \multicolumn{1}{c|}{\textbf{$Pr(Z >|z|)$}} & \multicolumn{2}{c}{\textbf{[95\% Conf.Interval]}}  \\ \toprule

\multirow{10}{*}{\textbf{GTS}} & \multirow{1}{*}{\textbf{$\mu$}} & \multirow{1}{*}{-0.249408} & \multirow{1}{*}{(0.208)} & \multirow{1}{*}{-1.20} & \multirow{1}{*}{2.3E-01} & \multirow{1}{*}{-0.658} & \multirow{1}{*}{0.159} \\

 & \multirow{1}{*}{\textbf{$\beta_{+}$}} & \multirow{1}{*}{0.328624} & \multirow{1}{*}{(0.308)} & \multirow{1}{*}{1.07} & \multirow{1}{*}{2.9E-01} & \multirow{1}{*}{-0.275} & \multirow{1}{*}{0.932}  \\

 &\multirow{1}{*}{ \textbf{$\beta_{-}$}} & \multirow{1}{*}{0.088640} & \multirow{1}{*}{(0.176)} & \multirow{1}{*}{0.50} & \multirow{1}{*}{6.1E-01} & \multirow{1}{*}{-0.256} & \multirow{1}{*}{0.433}  \\

 & \multirow{1}{*}{\textbf{$\alpha_{+}$}} & \multirow{1}{*}{0.792426} & \multirow{1}{*}{(0.350)} & \multirow{1}{*}{2.26} & \multirow{1}{*}{2.4E-02} & \multirow{1}{*}{0.106} & \multirow{1}{*}{1.479} \\

 & \multirow{1}{*}{\textbf{$\alpha_{-}$}} & \multirow{1}{*}{0.542250} & \multirow{1}{*}{(0.107)} & \multirow{1}{*}{5.09} & \multirow{1}{*}{3.6E-07} & \multirow{1}{*}{0.333} & \multirow{1}{*}{0.751}  \\

 &\multirow{1}{*}{\textbf{$\lambda_{+}$}} & \multirow{1}{*}{1.279743} & \multirow{1}{*}{(0.348)} & \multirow{1}{*}{3.68} & \multirow{1}{*}{2.4E-04} & \multirow{1}{*}{0.597} & \multirow{1}{*}{1.962}  \\

 & \multirow{1}{*}{\textbf{$\lambda_{-}$}} & \multirow{1}{*}{0.937133} & \multirow{1}{*}{(0.144)} & \multirow{1}{*}{6.50} & \multirow{1}{*}{8.0E-11} & \multirow{1}{*}{0.655} & \multirow{1}{*}{1.220} \\ \cmidrule(l){2-8}

  & \multirow{1}{*}{\textbf{Log(ML)}} & \multirow{1}{*}{-4920} & \multirow{1}{*}{} & \multirow{1}{*}{} & \multirow{1}{*}{} & \multirow{1}{*}{} & \multirow{1}{*}{}   \\
 & \multirow{1}{*}{\textbf{AIC}} & \multirow{1}{*}{9851} & \multirow{1}{*}{} & \multirow{1}{*}{} & \multirow{1}{*}{} & \multirow{1}{*}{} & \multirow{1}{*}{}  \\
 & \multirow{1}{*}{\textbf{BIK}} & \multirow{1}{*}{9898} & \multirow{1}{*}{} & \multirow{1}{*}{} & \multirow{1}{*}{} & \multirow{1}{*}{} & \multirow{1}{*}{} \\ \toprule

 \multirow{5}{*}{\textbf{GBM}} & \multirow{1}{*}{\textbf{$\mu$}} & \multirow{1}{*}{0.044875} & \multirow{1}{*}{(0.018)} & \multirow{1}{*}{2.51} & \multirow{1}{*}{1.2E-02} & \multirow{1}{*}{0.010} & \multirow{1}{*}{0.080}  \\
 & \multirow{1}{*}{\textbf{$\sigma$}} & \multirow{1}{*}{1.081676} & \multirow{1}{*}{(0.027)} & \multirow{1}{*}{39.53} & \multirow{1}{*}{0.000} & \multirow{1}{*}{1.028} & \multirow{1}{*}{1.135}  \\ \cmidrule(l){2-8}
 & \multirow{1}{*}{\textbf{Log(ML)}} & \multirow{1}{*}{-5330} & \multirow{1}{*}{} & \multirow{1}{*}{} & \multirow{1}{*}{} & \multirow{1}{*}{} &\multirow{1}{*}{}  \\
 & \multirow{1}{*}{\textbf{AIC}} & \multirow{1}{*}{10665} & \multirow{1}{*}{} & \multirow{1}{*}{} & \multirow{1}{*}{} & \multirow{1}{*}{} & \multirow{1}{*}{}   \\
  & \multirow{1}{*}{\textbf{BIK}} & \multirow{1}{*}{10677} & \multirow{1}{*}{} & \multirow{1}{*}{} & \multirow{1}{*}{} & \multirow{1}{*}{} & \multirow{1}{*}{}  \\ \bottomrule
\end{tabular}%
\vspace{-0.5cm}
\end{table}
\newpage
 \noindent  
However, other parameters have larger t-statistics ($|z|>2$) and are statistically significant at 5\%. Except for the index of stability parameters ($\beta_{+}$, $\beta_{-}$),  the estimation results for S\&P 500 and SPY ETF indexes show that the difference in skewness parameters ($\lambda_{+}$, $\lambda_{-}$) and intensity parameters ($\alpha_{+}$, $\alpha_{-}$) are positive but are not statistically significant.
\begin{table}[ht]
\centering
\caption{Maximum Likelihood  GTS Parameter Estimation for SPY EFT Data}
\label{tab5}
\begin{tabular}{@{} c|c|c|ccccc @{}}
\toprule
\multicolumn{1}{c|}{\textbf{Model}} & \multicolumn{1}{c|}{\textbf{Param}} & \multicolumn{1}{c|}{\textbf{Estimate}} & \multicolumn{1}{c|}{\textbf{Std Err}} & \multicolumn{1}{c|}{\textbf{z}} & \multicolumn{1}{c|}{\textbf{$Pr(Z >|z|)$}} & \multicolumn{2}{c}{\textbf{[95\% Conf.Interval]}} \\ \toprule

\multirow{10}{*}{\textbf{GTS}} & \multirow{1}{*}{\textbf{$\mu$}} &  \multirow{1}{*}{-0.260643} & \multirow{1}{*}{(0.135)} & \multirow{1}{*}{-1.94} & \multirow{1}{*}{5.3E-02} & \multirow{1}{*}{-0.524} & \multirow{1}{*}{0.003} \\

 & \multirow{1}{*}{\textbf{$\beta_{+}$}} & \multirow{1}{*}{0.340880} & \multirow{1}{*}{(0.189)} & \multirow{1}{*}{1.80} & \multirow{1}{*}{7.1E-02} & \multirow{1}{*}{-0.030} & \multirow{1}{*}{0.711} \\

 &\multirow{1}{*}{ \textbf{$\beta_{-}$}}  & \multirow{1}{*}{0.022212} & \multirow{1}{*}{(0.212)} & \multirow{1}{*}{0.10} & \multirow{1}{*}{9.2E-01} & \multirow{1}{*}{-0.393} & \multirow{1}{*}{0.437} \\

 & \multirow{1}{*}{\textbf{$\alpha_{+}$}}  & \multirow{1}{*}{0.787757} & \multirow{1}{*}{(0.225)} & \multirow{1}{*}{3.50} & \multirow{1}{*}{4.6E-04} & \multirow{1}{*}{0.347} & \multirow{1}{*}{1.229} \\

 & \multirow{1}{*}{\textbf{$\alpha_{-}$}} & \multirow{1}{*}{0.597110} & \multirow{1}{*}{(0.141)} & \multirow{1}{*}{4.22} & \multirow{1}{*}{2.4E-05} & \multirow{1}{*}{0.320} & \multirow{1}{*}{0.874} \\

 &\multirow{1}{*}{\textbf{$\lambda_{+}$}} & \multirow{1}{*}{1.288555} & \multirow{1}{*}{(0.226)} & \multirow{1}{*}{5.70} & \multirow{1}{*}{1.2E-08} & \multirow{1}{*}{0.846} & \multirow{1}{*}{1.731} \\

 & \multirow{1}{*}{\textbf{$\lambda_{-}$}}  & \multirow{1}{*}{1.014353} & \multirow{1}{*}{(0.177)} & \multirow{1}{*}{5.74} & \multirow{1}{*}{9.4E-09} & \multirow{1}{*}{0.668} & \multirow{1}{*}{1.361} \\ \cmidrule(l){2-8}

  & \multirow{1}{*}{\textbf{Log(ML)}} &  \multirow{1}{*}{-4893} & \multirow{1}{*}{} & \multirow{1}{*}{} & \multirow{1}{*}{} & \multirow{1}{*}{} & \multirow{1}{*}{}  \\
 & \multirow{1}{*}{\textbf{AIC}} & \multirow{1}{*}{9800} & \multirow{1}{*}{} & \multirow{1}{*}{} & \multirow{1}{*}{} & \multirow{1}{*}{} & \multirow{1}{*}{} \\
 & \multirow{1}{*}{\textbf{BIK}} & \multirow{1}{*}{9843} & \multirow{1}{*}{} & \multirow{1}{*}{} & \multirow{1}{*}{} & \multirow{1}{*}{} & \multirow{1}{*}{} \\ \toprule

 \multirow{5}{*}{\textbf{GBM}} & \multirow{1}{*}{\textbf{$\mu$}} & \multirow{1}{*}{0.054344} & \multirow{1}{*}{(0.017)} & \multirow{1}{*}{3.13} & \multirow{1}{*}{1.8E-03} & \multirow{1}{*}{0.020} &  \multirow{1}{*}{0.088} \\
 & \multirow{1}{*}{\textbf{$\sigma$}} & \multirow{1}{*}{1.050217} & \multirow{1}{*}{(0.026)} & \multirow{1}{*}{40.71} & \multirow{1}{*}{0.000} & \multirow{1}{*}{1.000} &  \multirow{1}{*}{1.101} \\ \cmidrule(l){2-8}
 & \multirow{1}{*}{\textbf{Log(ML)}} & \multirow{1}{*}{-54275} & \multirow{1}{*}{} & \multirow{1}{*}{} & \multirow{1}{*}{} & \multirow{1}{*}{} & \multirow{1}{*}{} \\
 & \multirow{1}{*}{\textbf{AIC}} & \multirow{1}{*}{10554} & \multirow{1}{*}{} & \multirow{1}{*}{} & \multirow{1}{*}{} & \multirow{1}{*}{} & \multirow{1}{*}{} \\
  & \multirow{1}{*}{\textbf{BIK}} & \multirow{1}{*}{10566} & \multirow{1}{*}{} & \multirow{1}{*}{} & \multirow{1}{*}{} & \multirow{1}{*}{} & \multirow{1}{*}{}  \\ \bottomrule
\end{tabular}%
\vspace{-0.3cm}
\end{table}

\noindent  
The hypothesis with $\beta_{+}=\beta_{-}=0$  was considered by fitting the S\&P 500 and SPY ETF indexes to the Bilateral Gamma distribution. The estimation results are summarised in  Appendix \ref{eq:an5} \& \ref{eq:an6}. As shown in Table \ref{tabe1} and Table \ref{tabf1}, the skewness parameters ($\lambda_{+}$, $\lambda_{-}$) are positive and statistically significant, and the difference ($\lambda_{+} - \lambda_{-}$) is also positive and statistically significant, which prove that S\&P 500 and SPY ETF returns are skewed to the left. We have the same statistical features for the intensity parameters ($\alpha_{+}$, $\alpha_{-}$), and Both indexes are more likely to produce positive returns than negative returns. Refer to \cite{KUCHLER2008261, nzokem2021fitting} for more details on Bilateral Gamma distribution.\\

\noindent 
The likelihood ratio test in Table \ref{tab6} shows that, even with non-statistically significant parameters,  the GTS distribution fits significantly better than the Bilateral Gamma distribution for both S\&P 500 and SPY ETF indexes. Contrary to the AIC statistics, the BIK statistics do not provide the same information. A comprehensive and detailed examination of the statistical significance of the results is carried out in Section \ref{test}.\\

\noindent
Table \ref{tab4} and Table \ref{tab5} summarized the last row of Table \ref{taba3} and Table \ref{taba4}, respectively in appendix \ref{eq:an1}, which describe the convergence process of the GTS parameter for Bitcoin, Ethereum, S\&P 500 index and SPY ETF return data. The convergence process was obtained using the Newton-Raphson iteration algorithm (\ref{eq:l34}). Each row has eleven columns made of the iteration number, the seven parameters \textbf{$\mu$}, \textbf{$\beta_{+}$}, \textbf{$\beta_{-}$}, \textbf{$\alpha_{+}$}, \textbf{$\alpha_{-}$}, \textbf{$\lambda_{+}$}, \textbf{$\lambda_{-}$}, and three statistical indicators: the log-likelihood (\textbf{$Log(ML)$}), the norm of the partial derivatives (\textbf{$||\frac{dLog(ML)}{dV}||$}), and the maximum value of the eigenvalues (\textbf{$Max Eigen Value$}). The statistical indicators aim at checking if the two necessary and sufficient conditions described in (\ref{eq:l33}) are all met. \textbf{$Log(ML)$} displays the value of the Naperian logarithm of the likelihood function $L(x, V)$, as described in (\ref{eq:l31}); \textbf{$||\frac{dLog(ML)}{dV}||$} displays the value of the norm of the first derivatives (\textbf{$\frac{dl(x, V)}{dV_j}$}) described in Equation (\ref{eq:l32}); and \textbf{$Max Eigen Value$} displays the maximum value of the seven eigenvalues generated by the Hessian matrix (\textbf{$\frac{d^{2}l(x, V)}{dV_{k}dV_{j}}$}), as described in (\ref{eq:l32}).\\

\noindent
Similarly, Table \ref{tabe2} and Table \ref{tabf2} describe the convergence process of the Bilateral Gamma distribution parameter for S\&P 500 index and SPY ETF return data.
\begin{table}[ht]
\vspace{-0.5cm}
 \caption{ Likelihood Ratio Test Statistic \& P-value}
\label{tab6}
\centering
\setlength{\tabcolsep}{0.8mm}
\begin{tabular}{@{}c|c|c|c|ccc@{}}
\toprule
 \multicolumn{1}{c|}{} & \multicolumn{1}{c|}{} & \multicolumn{1}{c|}{\textbf{GTS}} & \multicolumn{1}{c|}{\textbf{GTS variants}} & \multicolumn{1}{c|}{\textbf{{$\chi^{2}$}-Value}} & \multicolumn{1}{c|}{\textbf{df}} & \multicolumn{1}{c}{\textbf{P-Value}} \\ \midrule
\textbf{}    & \textbf{Log(ML)} & -10606.73 & -10606.81 & 0.1525 & 1 & 0.6962 \\
\textbf{Bitcoin}           & \textbf{AIC}    & 21227.47 & 21225.62 &  &  &   \\
\textbf{}           & \textbf{BIK}    & 21271.67 & 21263.51 &  &  & \\ \midrule
\textbf{}   & \textbf{Log(ML)} & -9552.86 & -9553.90 & 2.0810 & 2 & 0.3533 \\
\textbf{Ethereum}           & \textbf{AIC}    & 19119.72 & 19117.81 &  &  &   \\
\textbf{}           & \textbf{BIK}    & 19162.32 & 19148.23 &  &  &  \\ \midrule
\textbf{} & \textbf{Log(ML)} & -4920.52 & -4924.62 & 8.1828 & 2 & 0.0167\\
\textbf{S\&P   500}           & \textbf{AIC}    & 9851.06 & 9859.24 &  &  &        \\
\textbf{}           & \textbf{BIK}    & 9898.49 & 9890.26 &  &  &   \\ \midrule

\textbf{}  & \textbf{Log(ML)} & -4893.21 & -4898.67 & 10.9234 & 2 & 0.0042 \\
\textbf{SPY   ETF}           & \textbf{AIC}    & 9800.42 & 9807.34 &  &  &        \\
\textbf{}           & \textbf{BIK}    &  9843.84 & 9838.36 &  &  &       \\ \bottomrule
\end{tabular}%
\end{table}

\noindent
GTS parameter estimation in Table \ref{tab4} and Table \ref{tab5} were used to evaluate the impact of the parameters on the GTS probability density function. As shown in Fig \ref{fig55} and  Fig \ref{fig66}, the effect of the GTS parameters on the probability density function generated by S\&P 500  and SPY ETF have the same patterns. As shown in Fig \ref{fig5a} and Fig \ref{fig5b}, based on the  S\&P 500  return data, \textbf{$\beta_{+}$} (\textbf{$\alpha_{+}$} ) has a higher effect on the probability density function than \textbf{$\beta_{-}$} (\textbf{$\alpha_{-}$}). However, \textbf{$\lambda_{-}$} and \textbf{$\lambda_{+}$}  in Fig \ref{fig5c} are symmetric and have the same impact.

\begin{figure}[ht]
\vspace{-0.3cm}
    \centering
  \begin{subfigure}[b]{0.3\linewidth}
    \includegraphics[width=\linewidth]{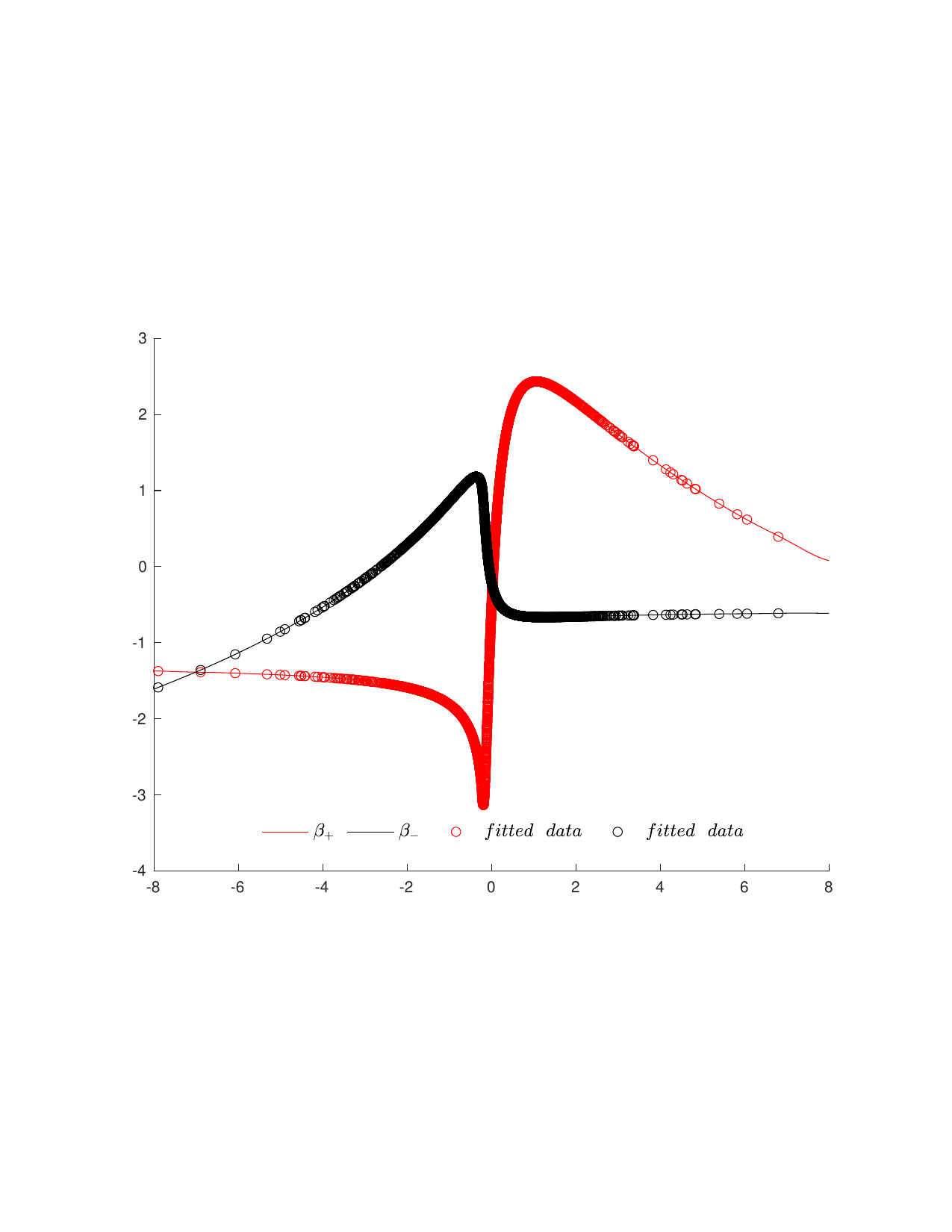}
\vspace{-0.5cm}
     \caption{$V_{j}=\beta_{+}, V_{j}=\beta_{-}$}
         \label{fig5a}
  \end{subfigure}
  \begin{subfigure}[b]{0.3\linewidth}
    \includegraphics[width=\linewidth]{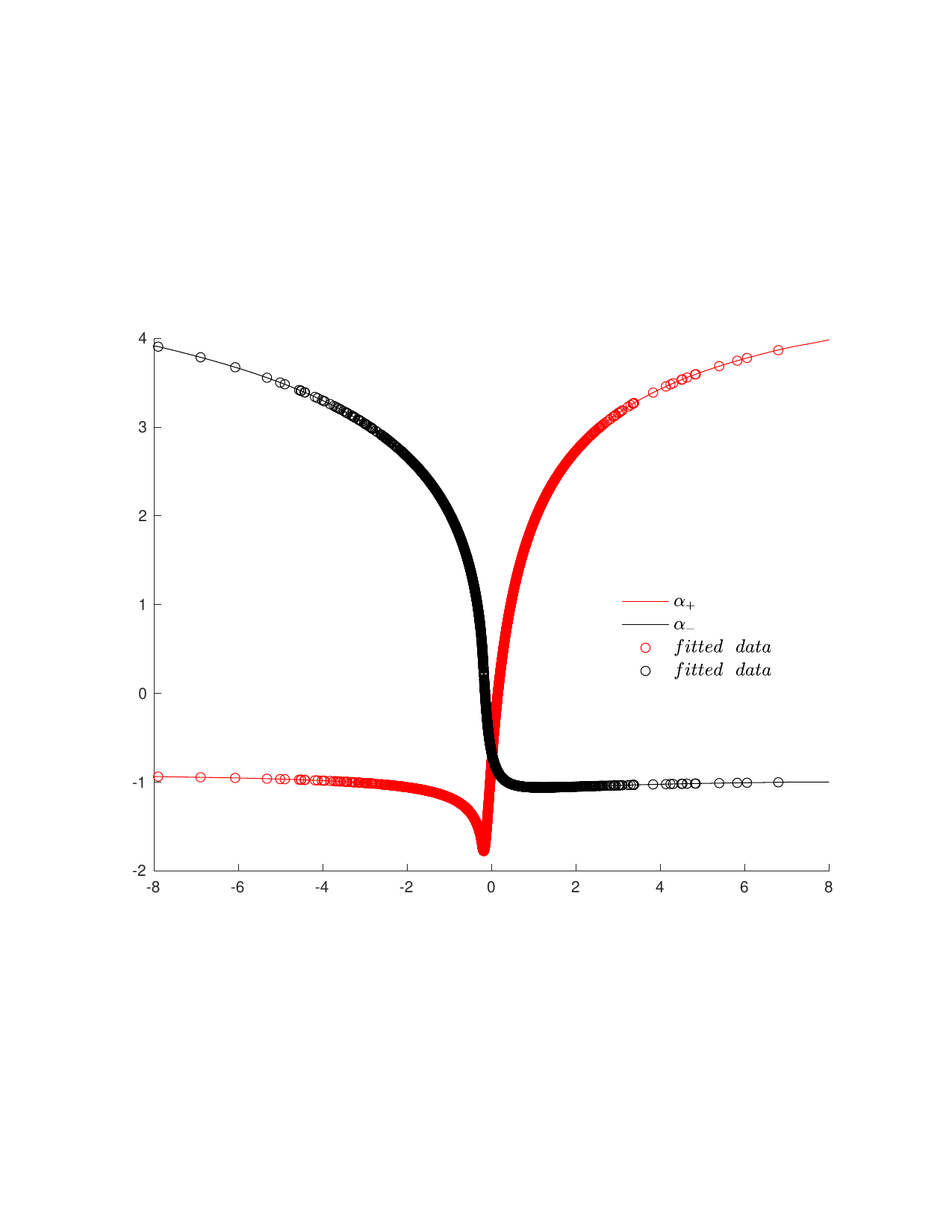}
\vspace{-0.5cm}
     \caption{$V_{j}=\alpha_{+}, V_{j}=\alpha_{-}$}
         \label{fig5b}
          \end{subfigure}
  \begin{subfigure}[b]{0.3\linewidth}
    \includegraphics[width=\linewidth]{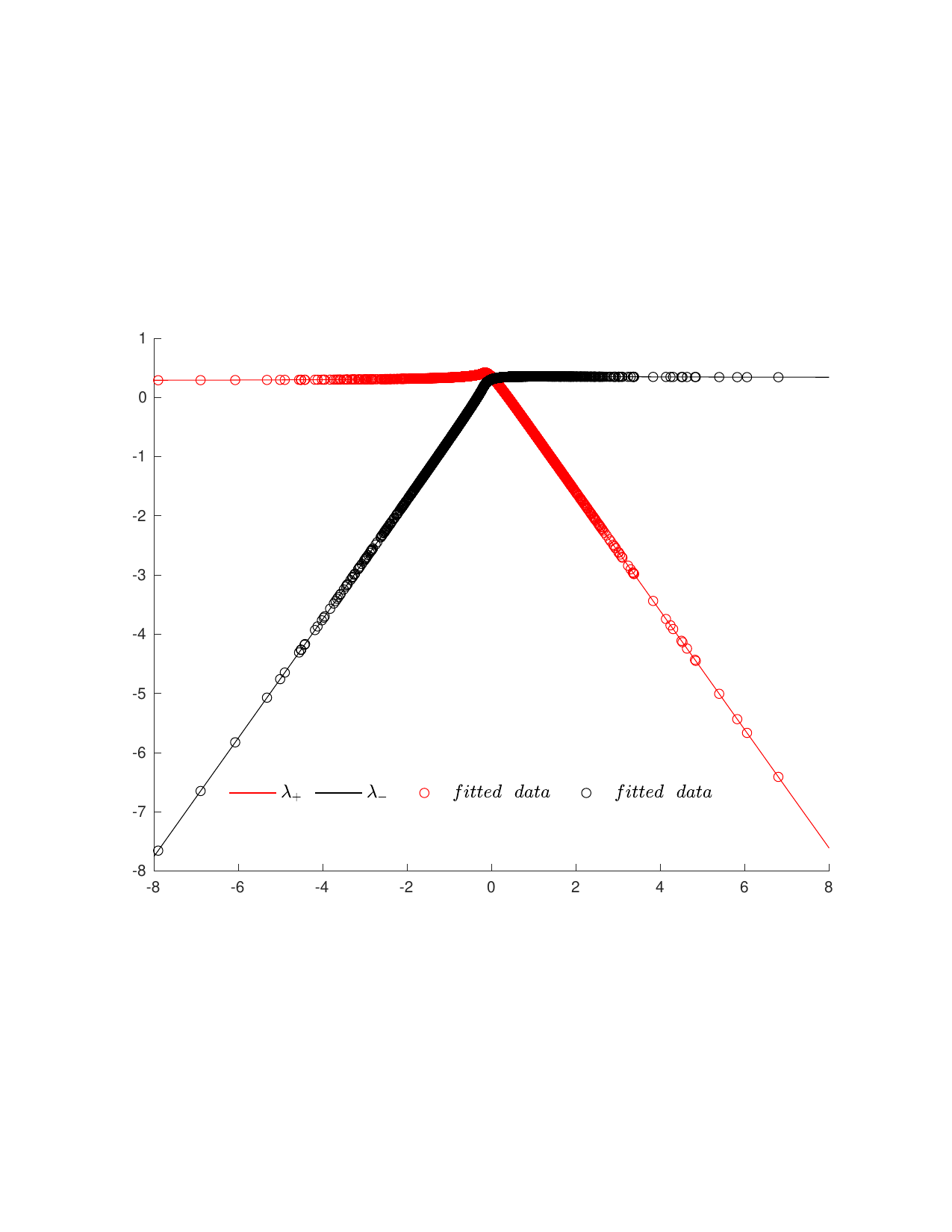}
\vspace{-0.5cm}
     \caption{$V_{j}=\lambda_{+}, V_{j}=\lambda_{-}$}
         \label{fig5c}
          \end{subfigure}
\vspace{-0.3cm}
  \caption{$\frac{\frac{df(x,V)}{dV_j}}{f(x,V)}$:  Effect of Parameters on the GTS Probability Density (S\&P 500 Index)}
  \label{fig55}
\vspace{-0.5cm}
\end{figure}

\begin{figure}[ht]
\vspace{-0.5cm}
    \centering
  \begin{subfigure}[b]{0.3\linewidth}
    \includegraphics[width=\linewidth]{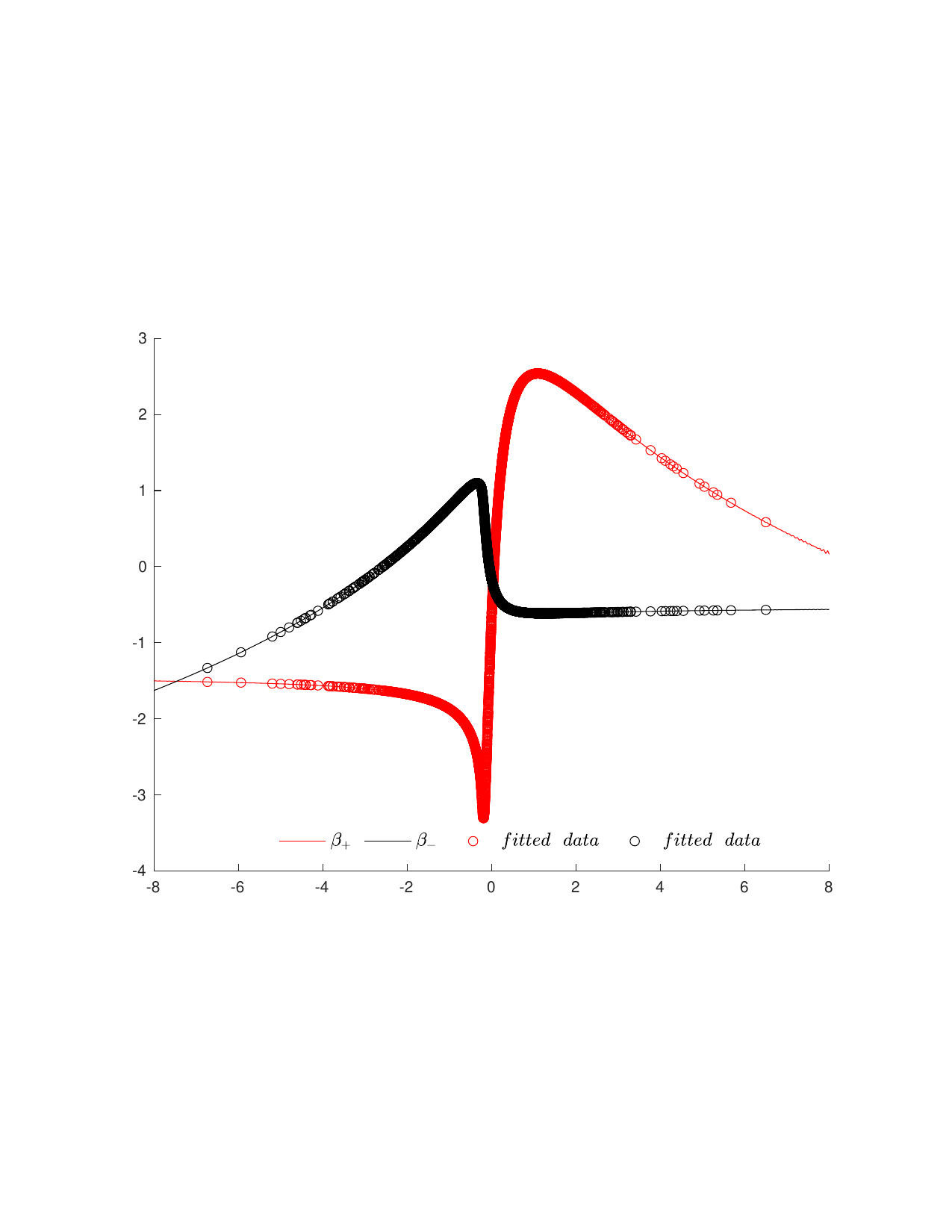}
\vspace{-0.5cm}
     \caption{$V_{j}=\beta_{+}, V_{j}=\beta_{-}$}
         \label{fig6a}
  \end{subfigure}
  \begin{subfigure}[b]{0.3\linewidth}
    \includegraphics[width=\linewidth]{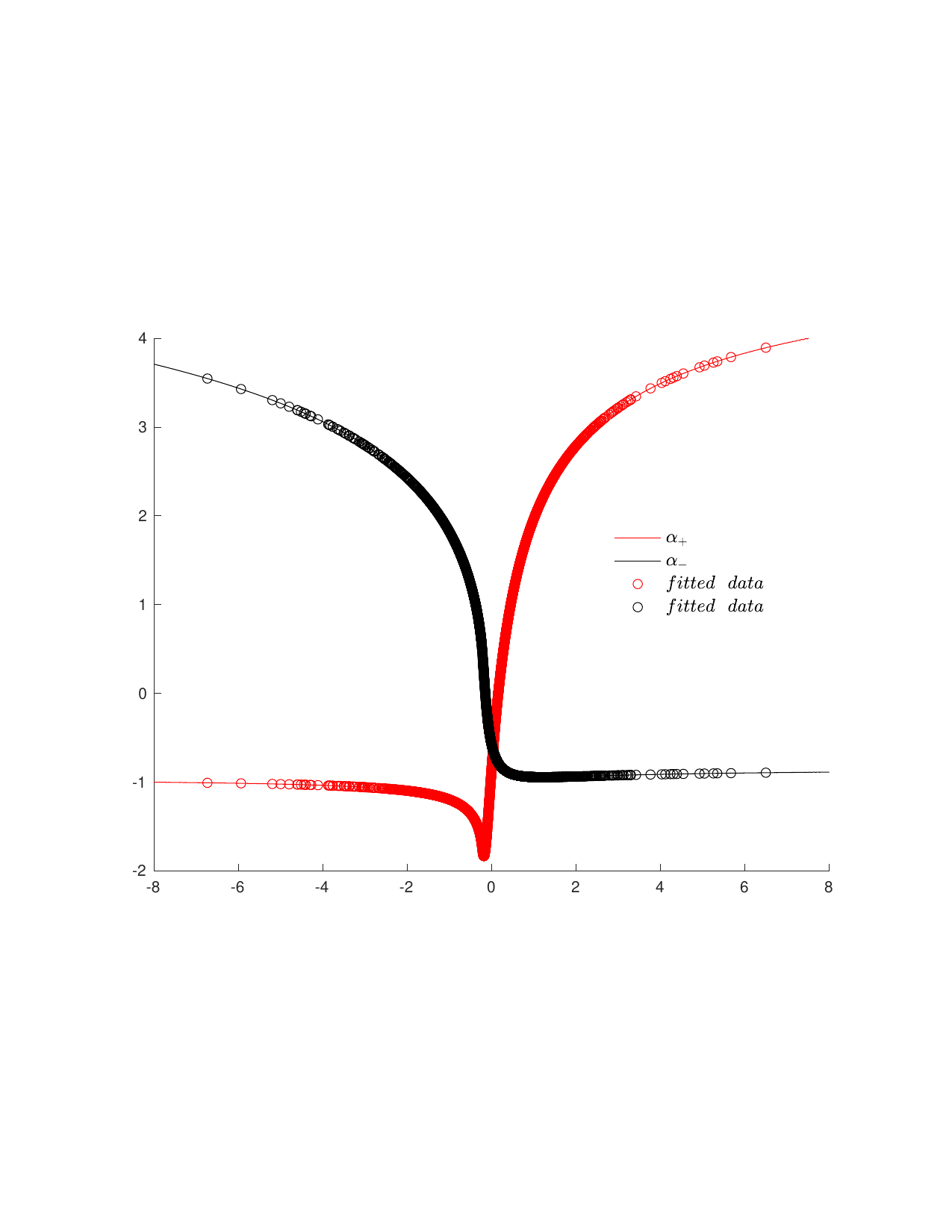}
\vspace{-0.5cm}
     \caption{$V_{j}=\alpha_{+}, V_{j}=\alpha_{-}$}
         \label{fig6b}
          \end{subfigure}
  \begin{subfigure}[b]{0.3\linewidth}
    \includegraphics[width=\linewidth]{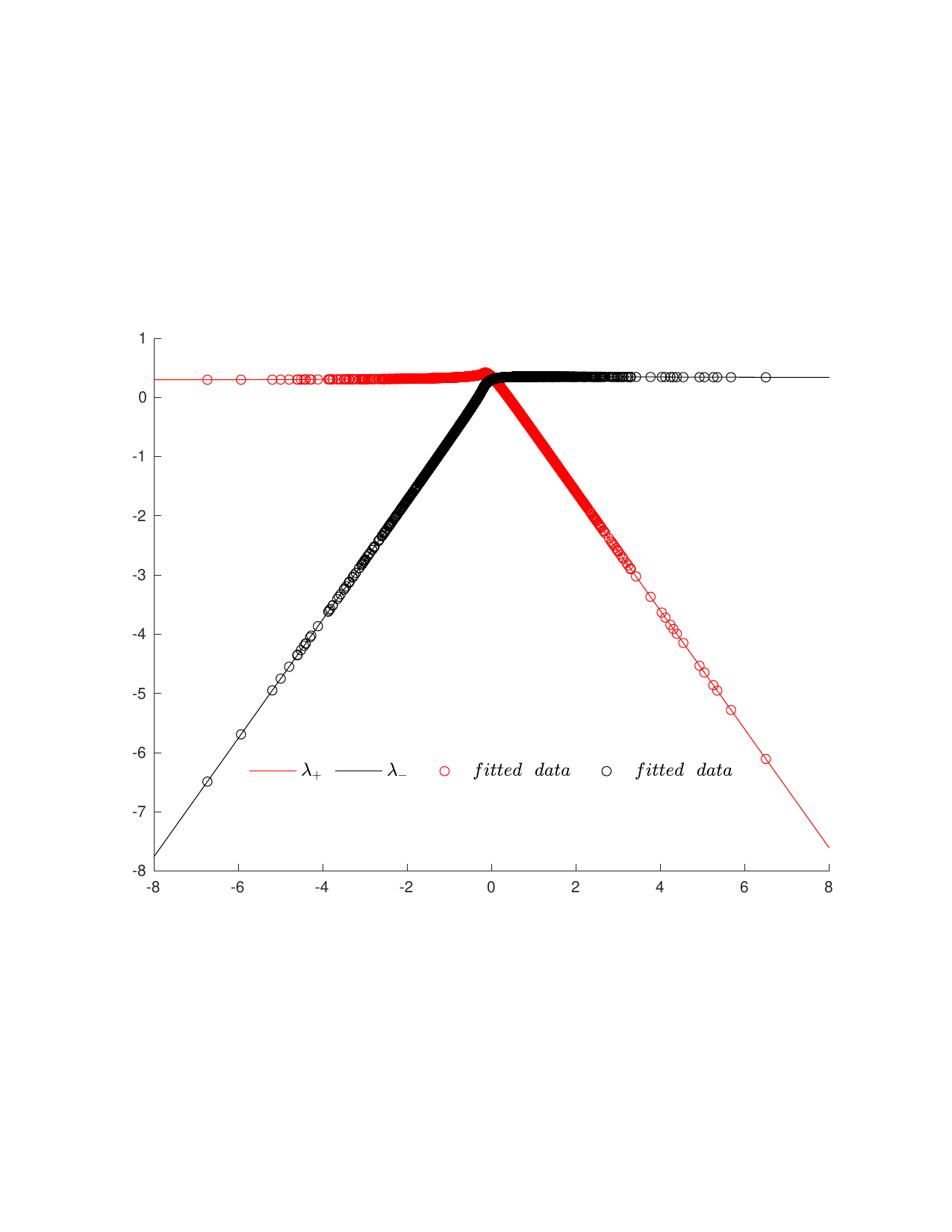}
\vspace{-0.5cm}
     \caption{$V_{j}=\lambda_{+}, V_{j}=\lambda_{-}$}
         \label{fig6c}
          \end{subfigure}
\vspace{-0.3cm}
  \caption{$\frac{\frac{df(x, V)}{dV_j}}{f(x, V)}$: Effect of Parameters on the GTS Probability Density (SPY EFT )}
  \label{fig66}
\vspace{-0.5cm}
\end{figure}
\subsection{Evaluation of the Methods of Moments}
\noindent
Maximum likelihood GTS parameter estimation in Table \ref{tab4} and Table \ref{tab5} are used to evaluate the system of equations in (\ref{eq:l37bc}).  As shown in Table \ref{tab7}, the solution of the maximum likelihood method satisfies at a certain extent the equations for the following four moments: $\hat{m}_{1}$, $\hat{m}_{2}$, $\hat{m}_{4}$, $\hat{m}_{5}$ in the system (\ref{eq:l37bc}). As for Bitcoin and Ethereum, the $7^{th}$ moment equation has the highest relative error: 53.3\% for S\&P 500 index and -85.9\% for SPY ETF. Therefore, the maximum likelihood GTS parameter estimation is not the GTS parameter estimation from the method of moments.
\begin{table}[ht]
\vspace{-0.6cm}
\centering
 \caption{Evaluation of the Methods of Moment}
\label{tab7}
\setlength{\tabcolsep}{0.3mm}
\begin{tabular}{@{}lcccccc@{}}
\toprule
 & \multicolumn{3}{c}{\textbf{S\&P 500 Index}} & \multicolumn{3}{c}{\textbf{SPY ETF}} \\ \midrule
\multirow{1}{*}{} & \multirow{1}{*}{\textbf{Empirical(1)}} & \multirow{1}{*}{\textbf{Theoretical(2)}} & \multirow{1}{*}{\textbf{$\frac{(1)-(2)}{2}$}} & \multirow{1}{*}{\textbf{Empirical(1)}} & \multirow{1}{*}{\textbf{Theoretical(2)}} & \multirow{1}{*}{\textbf{$\frac{(1)-(2)}{2}$}} \\ \midrule
\multirow{1}{*}{\textbf{Sample size}} & \multirow{1}{*}{3656} & \multirow{1}{*}{} & \multirow{1}{*}{} & \multirow{1}{*}{3655} & \multirow{1}{*}{} & \multirow{1}{*}{} \\
\multirow{1}{*}{\textbf{$\hat{m}_{1}$}} & \multirow{1}{*}{0.045} & \multirow{1}{*}{0.045} & \multirow{1}{*}{-0.5\%} & \multirow{1}{*}{0.054} & \multirow{1}{*}{0.054} & \multirow{1}{*}{0.0\%} \\
\multirow{1}{*}{\textbf{$\hat{m}_{2}$}} & \multirow{1}{*}{1.069} & \multirow{1}{*}{1.083} & \multirow{1}{*}{-1.3\%} & \multirow{1}{*}{1.053} & \multirow{1}{*}{1.044} & \multirow{1}{*}{0.8\%} \\
\multirow{1}{*}{\textbf{$\hat{m}_{3}$}} & \multirow{1}{*}{-0.447} & \multirow{1}{*}{-0.341} & \multirow{1}{*}{31.2\%} & \multirow{1}{*}{-0.214} & \multirow{1}{*}{-0.351} & \multirow{1}{*}{-39.0\%} \\
\multirow{1}{*}{\textbf{$\hat{m}_{4}$}} &\multirow{1}{*}{8.371} & \multirow{1}{*}{9.764} & \multirow{1}{*}{-14.3\%} & \multirow{1}{*}{8.197} & \multirow{1}{*}{7.691} & \multirow{1}{*}{6.6\%} \\
\multirow{1}{*}{\textbf{$\hat{m}_{5}$}} &\multirow{1}{*}{-16.386} & \multirow{1}{*}{-11.128} & \multirow{1}{*}{47.3\%} & \multirow{1}{*}{-3.969} & \multirow{1}{*}{-12.717} & \multirow{1}{*}{-68.8\%} \\
\multirow{1}{*}{\textbf{$\hat{m}_{6}$}} & \multirow{1}{*}{193.563} & \multirow{1}{*}{247.811} & \multirow{1}{*}{-21.9\%} & \multirow{1}{*}{157.645} & \multirow{1}{*}{162.048} & \multirow{1}{*}{-2.7\%} \\
\multirow{1}{*}{\textbf{$\hat{m}_{7}$}} & \multirow{1}{*}{-840.097} & \multirow{1}{*}{-547.882} & \multirow{1}{*}{53.3\%} & \multirow{1}{*}{-85.003} & \multirow{1}{*}{-602.447} & \multirow{1}{*}{-85.9\%} \\
\multirow{1}{*}{\textbf{Standard deviation \footnotemark[1]}} & \multirow{1}{*}{1.082} & \multirow{1}{*}{1.033} & \multirow{1}{*}{4.7\%} & \multirow{1}{*}{1.050} & \multirow{1}{*}{1.021} & \multirow{1}{*}{2.9\%} \\
\multirow{1}{*}{\textbf{Skewness \footnotemark[2]}} &  \multirow{1}{*}{-0.432} & \multirow{1}{*}{-0.535} & \multirow{1}{*}{-19.2\%} & \multirow{1}{*}{-0.358} & \multirow{1}{*}{-0.490} & \multirow{1}{*}{-26.9\%} \\
\multirow{1}{*}{\textbf{Kurtosis \footnotemark[3]}} & \multirow{1}{*}{8.413} & \multirow{1}{*}{7.435} & \multirow{1}{*}{13.1\%} & \multirow{1}{*}{7.495} & \multirow{1}{*}{7.177} & \multirow{1}{*}{4.4\%} \\
\multirow{1}{*}{\textbf{Max value}} & \multirow{1}{*}{6.797} & \multirow{1}{*}{} & \multirow{1}{*}{} & \multirow{1}{*}{6.501} & \multirow{1}{*}{} & \multirow{1}{*}{} \\
\multirow{1}{*}{\textbf{Min Value}} & \multirow{1}{*}{-7.901} & \multirow{1}{*}{} & \multirow{1}{*}{} & \multirow{1}{*}{-6.734} & \multirow{1}{*}{} & \multirow{1}{*}{} \\ \bottomrule
\end{tabular}%
\vspace{-0.3cm}
\end{table}
\footnotetext[1]{$\sigma=\sqrt{\kappa_{2}}$}
\footnotetext[2]{Skewness is estimated as $\frac{\kappa_{3}}{\kappa_{2}^{3/2}}$}
\footnotetext[3]{Kurtosis is estimated as $3+\frac{\kappa_{4}}{\kappa_{2}^{2}}$;  $\kappa_{1}$, $\kappa_{2}$ and $\kappa_{2}$ are defined in (\ref{eq:l12})}

\noindent
In addition to the moment estimations in Table \ref{tab7}, the lower relative errors show that the empirical and theoretical standard deviation ($\sigma$), skewness, and kurtosis are consistent for S\&P 500  and SPY ETF. The empirical \& theoretical statistics show that both assets are skewed to the left and also thicker than the Normal distribution.

\clearpage
\newpage

\section{Goodness-of-fit Analysis\label{test}}

\subsection{Kolmogorov-Smirnov (KS) Analysis}
\noindent
Given the sample of daily return $\{y_{1}, y_{2}\dots y_{m}\}$ of size $m$ and the empirical cumulative distribution function, $F_{m}(x)$, for each index, the Kolmogorov-Smirnov (KS) test is performed under the null hypothesis, $H_{0}$, that the sample $\{y_{1}, y_{2}\dots y_{m}\}$ comes from the GTS distribution, $F(x)$. The cumulative distribution function of the theoretical distribution, $F(x)$, needs to be computed. The density function, $f(x)$, does not have a closed form, the same for the cumulative function, $F(x)$, in (\ref{eq:l52}). However, we know the closed form of the Fourier of the density function, $\scrF[f]$, and the relationship in (\ref{eq:l53}) provides the Fourier of the cumulative distribution function, $\scrF[F]$. The GTS distribution function, $F(x)$, was computed from the inverse of the Fourier of the cumulative distribution, $\scrF[F]$, in (\ref{eq:l54}):
  \begin{align}
 Y &\sim GTS(\textbf{$\mu$}, \textbf{$\beta_{+}$}, \textbf{$\beta_{-}$}, \textbf{$\alpha_{+}$},\textbf{$\alpha_{-}$}, \textbf{$\lambda_{+}$}, f\textbf{$\lambda_{-}$}) \label{eq:l51}\\
   F(x)&= \int_{-\infty}^{x} f(t) \mathrm{d}t \hspace{5mm}  \hbox{$f$ is the density function of $Y$} \label{eq:l52}\\
 \scrF[F](x)&=\frac{\scrF[f](x)}{ix} + \pi\scrF[f](0)\delta (x) \label{eq:l53}\\
F(x) &= \frac{1} {2\pi}\int_{-\infty}^{+\infty}\! \frac{\scrF[f](y)}{iy}e^{ixy}\, \mathrm{d}y + \frac{1}{2}\label{eq:l54}
 \end{align}

 \noindent
See Appendix A in \cite{nzokem2021fitting} for (\ref{eq:l53}) proof.\\

\noindent
The two-sided KS goodness-of-fit statistic ($D_{m}$) is defined as follows:
 \begin{align}
D_{m} = \sup_{x}{|F(x)-F_{m}(x)|}, \label{eq:l55}
 \end{align}
\noindent
where $m$ is the sample size, $F_{m}(x)$ denotes the empirical cumulative distribution of $\{y_{1}, y_{2}\dots y_{m}\}$.\\

\noindent
The distribution of Kolmogorov’s goodness-of-fit measure $D_{m}$ has been studied extensively in the literature. It was shown \cite{massey1951kolmogorov} that the $D_{m}$ distribution is independent of the theoretical distribution, $F(x)$, under the null hypothesis, $H_{0}$. The discrete, mixed, and discontinuous distributions case has also been studied \cite{dimitrova2020computing}. Under the null hypothesis, $H_{0}$, that the sample $\{y_{1}, y_{2}\dots y_{m}\}$ of size $m$ comes from the hypothesized continuous distribution, it was shown \cite{an1933sulla} that the asymptotic statistic $\sqrt {n}D_{n}$ converges to the Kolmogorov distribution.\\
  \noindent
The limiting form for the distribution function of Kolmogorov’s goodness-of-fit measure $D_{m}$  is
 \begin{align}
\lim_{m \to +\infty}Pr(\sqrt{m}D_{m} \leq x) =1 - 2 \sum_{k=1}^{+\infty}(-1)^{k-1}e^{-2k^{2}x^{2}}= \frac{\sqrt{2\pi}}{x} \sum_{k=1}^{+\infty}e^{-\frac{(2k-1)^{2}\pi^{2}}{8x^{2}}}.\label{eq:l56}
 \end{align}
\noindent
The first representation was given in \cite{an1933sulla},  and the second came from a standard relation for theta functions \cite{marsaglia2003evaluating}. \\

\noindent
 As shown in Fig \ref{fig62}, the asymptotic statistic, $\sqrt {n}D_{n}$, is a positively skewed distribution with a mean and  a standard deviation \cite{marsaglia2003evaluating}
 \begin{align}
\mu= \sqrt{\frac{\pi}{2}} log(2)\sim 0.8687, \quad \quad \sigma=\sqrt{\frac{\pi^2}{12} - \mu^2} \sim 0.2603. \label{eq:l57}
 \end{align}

 \noindent
At $5\%$ risk level, the risk threshold is $d=1.3581$ and represents the area in the shaded area under the probability density function.

 \begin{figure}[ht]
 \vspace{-0.2cm}
     \centering
         \includegraphics[scale=0.45] {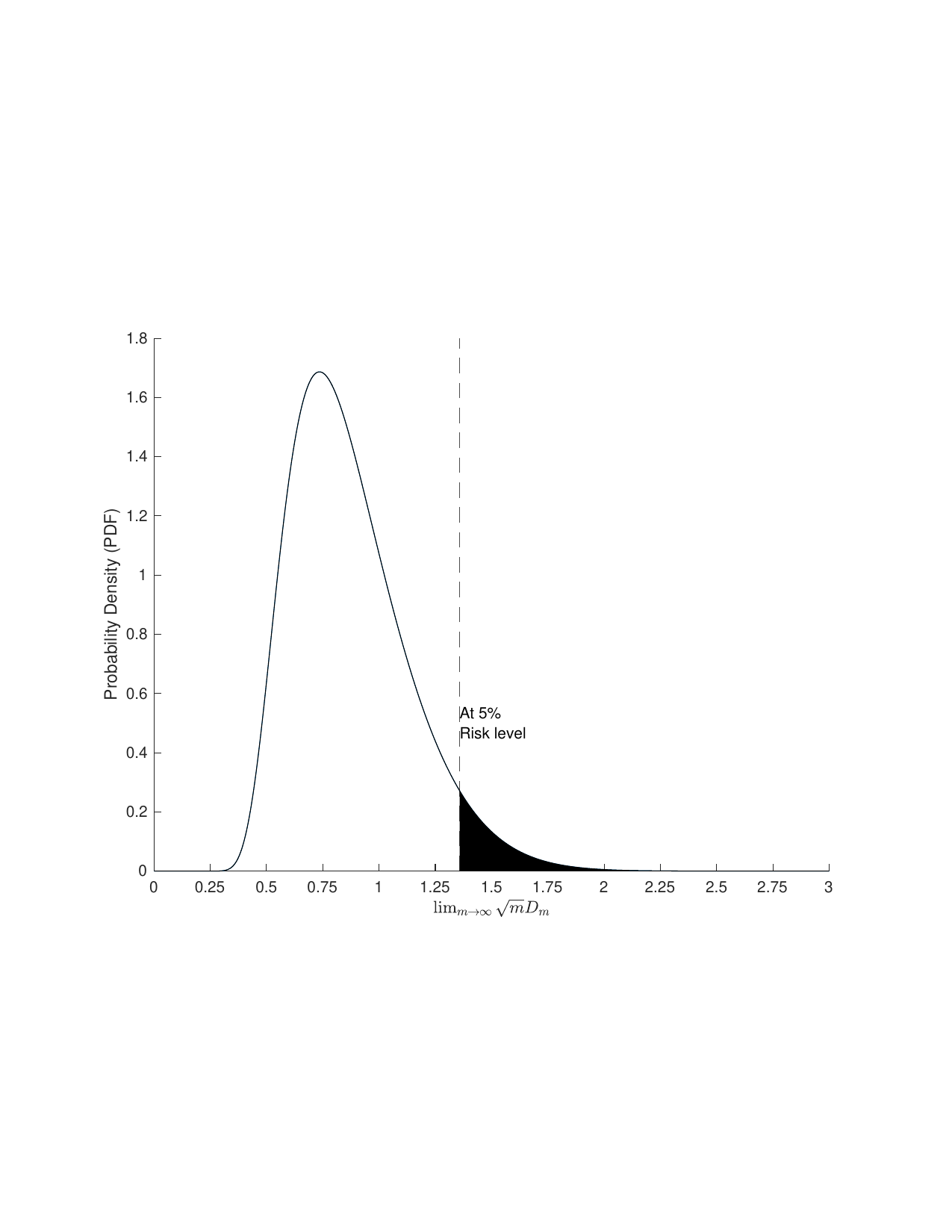}
 \vspace{-0.1cm}
        \caption{Asymptotic Statistic ($\sqrt{m}D_{m}$) Probability Density Function (PDF)}
        \label{fig62}
 \vspace{-0.3cm}
\end{figure}

 \noindent
 The p-value of the test statistic, $D_{m}$, is computed based on (\ref{eq:l56}) as follows:
 \begin{align}
P\_{value} = Pr(D_{m}> \hat{D}_{m} | H_{0})=1 - Pr(\sqrt{m}D_{m}\leq \sqrt{m}\hat{D}_{m}). \label{eq:l58}
 \end{align}

\noindent
A p-value is defined as the probability that values are even more extreme or more in the tail than our test statistic.  A small p-value leads to a rejection of the null hypothesis, $H_{0}$, because the test statistic, $D_{m}$, is already extreme. We reject the hypothesis if the p-value is less than our level of significance, which we take to be equal to 0.05.\\

  \noindent
$\hat{D}_{m}$ is a realization value of the KS estimator $D_{m}$ computed from the sample $\{y_{1}, y_{2}\dots y_{m}\}$. $\hat{D}_{m}$ is estimated \cite{krysicki1999rachunek} as follows:
\begin{align}
\hat{D}_{m}= Max(\sup_{0\leq j\leq P}{|F(x_{j})-F_{m}(x_{j})|}, \sup_{1\leq j\leq P}{|F(x_{j})-F_{m}(x_{j-1})|}). \label{eq:l59}
 \end{align}
The following computations were performed for Bitcoin BTC data, and the quantity $\hat{D}_{m}$ was obtained:
 \begin{equation}
 \begin{aligned}
&\sup_{0\leq j\leq P}{|F(x_{j})-F_{m}(x_{j})|}= 0.01300 \\
&\sup_{1\leq j\leq P}{|F(x_{j})-F_{m}(x_{j-1})|}=0.00538	\\  \label{eq:l59a}
&\hat{D}_{m}=0.01300 \\
&P\_{value} = prob(\sqrt{m}D_{m}>0.6903 | H_{0})=49.48\%.
 \end{aligned}
  \end{equation}
\noindent
For each index and model, KS-statistics ($\hat{D}_{m}$) and p-values associated were computed and summarized in Table \ref{tab8} along with the index sample size, $m$.
\begin{table}[ht]
 \caption{ Kolmogorov-Smirnov Statistic \& P-value }
\vspace{-0.3cm}
\label{tab8}
\centering
\setlength{\tabcolsep}{0.75mm}
\begin{tabular}{@{}l | c c c | c c c | c c c |c@{}}
\toprule
 & \multicolumn{3}{c|}{GTS} & \multicolumn{3}{c|}{GBN} & \multicolumn{3}{c|}{GTS variants} &\multicolumn{1}{c}{Sample size}\\ \toprule
 \multirow{1}{*}{Index} & \multirow{1}{*}{$\hat{D}_{m}$} & \multirow{1}{*}{$\sqrt{m}\hat{D}_{m}$} &  \multirow{1}{*}{P\_value} &  \multirow{1}{*}{$\hat{D}_{m}$} & \multirow{1}{*}{$\sqrt{m}\hat{D}_{m}$} &  \multirow{1}{*}{P\_value} &\multirow{1}{*}{$\hat{D}_{m}$} &  \multirow{1}{*}{$\sqrt{m}\hat{D}_{m}$} &  \multirow{1}{*}{P\_value} & \multirow{1}{*}{m}  \\ \toprule
Bicoin BTC & \multirow{1}{*}{0.013} & \multirow{1}{*}{0.830}&\multirow{1}{*}{0.494} & \multirow{1}{*}{0.106} & \multirow{1}{*}{6.803}& \multirow{1}{*}{0.000}& \multirow{1}{*}{0.014 \footnotemark[4]} &\multirow{1}{*}{0.863}& \multirow{1}{*}{0.445}& \multirow{1}{*}{4083}  \\
Ethereum & \multirow{1}{*}{0.012} & \multirow{1}{*}{0.721}&\multirow{1}{*}{0.674} & \multirow{1}{*}{0.092} & \multirow{1}{*}{5.249}& \multirow{1}{*}{0.000}& \multirow{1}{*}{0.013 \footnotemark[5]} &\multirow{1}{*}{0.749}& \multirow{1}{*}{0.627}& \multirow{1}{*}{3246} \\
S\&P 500  & \multirow{1}{*}{0.012} & \multirow{1}{*}{0.750}& \multirow{1}{*}{0.627} & \multirow{1}{*}{0.091} & \multirow{1}{*}{5.550}& \multirow{1}{*}{0.000} & \multirow{1}{*}{0.014 \footnotemark[6]} &\multirow{1}{*}{0.897}& \multirow{1}{*}{0.395} & \multirow{1}{*}{3656}  \\
SPY ETF & \multirow{1}{*}{0.014} & \multirow{1}{*}{0.869}& \multirow{1}{*}{0.436} & \multirow{1}{*}{0.089} & \multirow{1}{*}{5.438}& \multirow{1}{*}{0.000}& \multirow{1}{*}{0.016 \footnotemark[6]} &\multirow{1}{*}{1.010}& \multirow{1}{*}{0.258}& \multirow{1}{*}{3655} \\ \bottomrule
\end{tabular}%
\end{table}
\footnotetext[4]{Kobol distribution ($\beta=\beta_{-}=\beta_{+}$)}
\footnotetext[5]{Carr-Geman-Madan-Yor (CGMY) distributions ($\beta=\beta_{-}=\beta_{+}$; $\alpha=\alpha_{-}=\alpha_{+}$)}
\footnotetext[6]{bilateral Gamma distribution ($\beta_{-}=\beta_{+}=0$)}

\noindent
The asymptotic statistics, $\sqrt {n}D_{n}$, produced from the two-parameter geometric Brownian motion (GBM) hypothesis, have high values and show that the GBM hypothesis is always rejected. On the other hand, the high p-values generated by the asymptotic statistics suggest insufficient evidence to reject the assumption that the data were randomly sampled from a GTS.  The same observations work for the GTS variants: Kobol, CGMY, and Bilateral Gamma distributions. In addition, as shown the p-value indicator in Table \ref{tab8}, the GTS distribution outperforms the Bilateral Gamma distribution for the S\&P 500 and SPY ETF indexes. However, the Kobol and CGMY distributions respectively, for Bitcoin and Ethereum have almost the same performance as the GTS distribution.

\subsection{ Anderson-Darling Test Analysis}
\noindent
The Anderson-Darling test \cite{anderson} is a goodness-of-fit test that allows the control of the hypothesis that the distribution of a random variable
observed in a sample follows a certain theoretical distribution. The Anderson-Darling statistic belongs to the class of quadratic EDF statistics\cite{stephens1974edf} based on the empirical distribution function. The quadratic EDF statistics measure the distance between the hypothesized distribution ($F(x)$) and empirical distribution.  It is defined as
 \begin{align}
m\int_{-\infty}^{+\infty}\left(F_{m}(x) - F(x)\right)^2w(x) \,dF{x}, \label{eq:l71}
 \end{align}
where $m$ is the number of elements in the sample,$w(x)$ is a weighting function, and $F_{n}(x)$ is the empirical distribution function defined on the sample of size $n$.\\

\noindent
When the weighting function is $w(x)=1$, the statistic (\ref{eq:l71}) is the Cramér–von Mises statistic, while the Anderson–Darling statistic is obtained by choosing the weighting function
$w(x)=F(x)\left(1-F(x)\right)$.  Compared with the Cramér–von Mises statistic, the Anderson–Darling statistic places more weight on the tails of the distribution.\\

The Anderson–Darling statistic is
 \begin{align}
A_{m}^2=m\int_{-\infty}^{+\infty }\frac{\left(F_{m}(x) - F(x)\right)}{F(x)\left(1-F(x)\right)} \,dF(x). \label{eq:l73}
 \end{align}
It can be shown that the asymptotic distribution of the Anderson–Darling statistic, $A_{m}^2$, is independent of the theoretical distribution under the null hypothesis.  The asymptotic distribution \cite{lewis1961distribution,marsaglia2004evaluating} is defined as follows:
 \begin{equation}
 \begin{aligned}
G(x)&=\lim_{m\to\infty}Pr \left[A_{m}^2<x\right] =\sum_{j=0}^{+\infty } a_{j}(xb_{j})^{-\frac{1}{2}}exp(-\frac{b_{j}}{x}) \int_{0}^{+\infty }f_{j}(y)exp(-y^2) dy\\ \label{eq:l74}
f_{j}(y)&=exp\left(\frac{1}{8}\frac{x b_{j}}{y^2 x + b_{j}} \right), \quad \quad a_{j}= \frac{(-1)^{j}(2)^{\frac{1}{2}}(4j+1)\Gamma(j+\frac{1}{2})}{j!} \\ b_{j}&=\frac{1}{2}(4j+1)^2 \pi^2.
\end{aligned}
\end{equation}

  \noindent
 As shown in Fig \ref{fig71}, the asymptotic distribution of the Anderson–Darling statistic ($A_{m}^2$) is a positively skewed distribution with a mean and a standard deviation \cite{Anderson2011}
 \begin{align}
\mu= 1, \quad \quad \sigma=\sqrt{\frac{2}{3}(\pi^2 - 9)} \sim 0.761. \label{eq:l57}
 \end{align}
 
\newpage
 \begin{figure}[ht]
 \vspace{-0.5cm}
     \centering
         \includegraphics[scale=0.40] {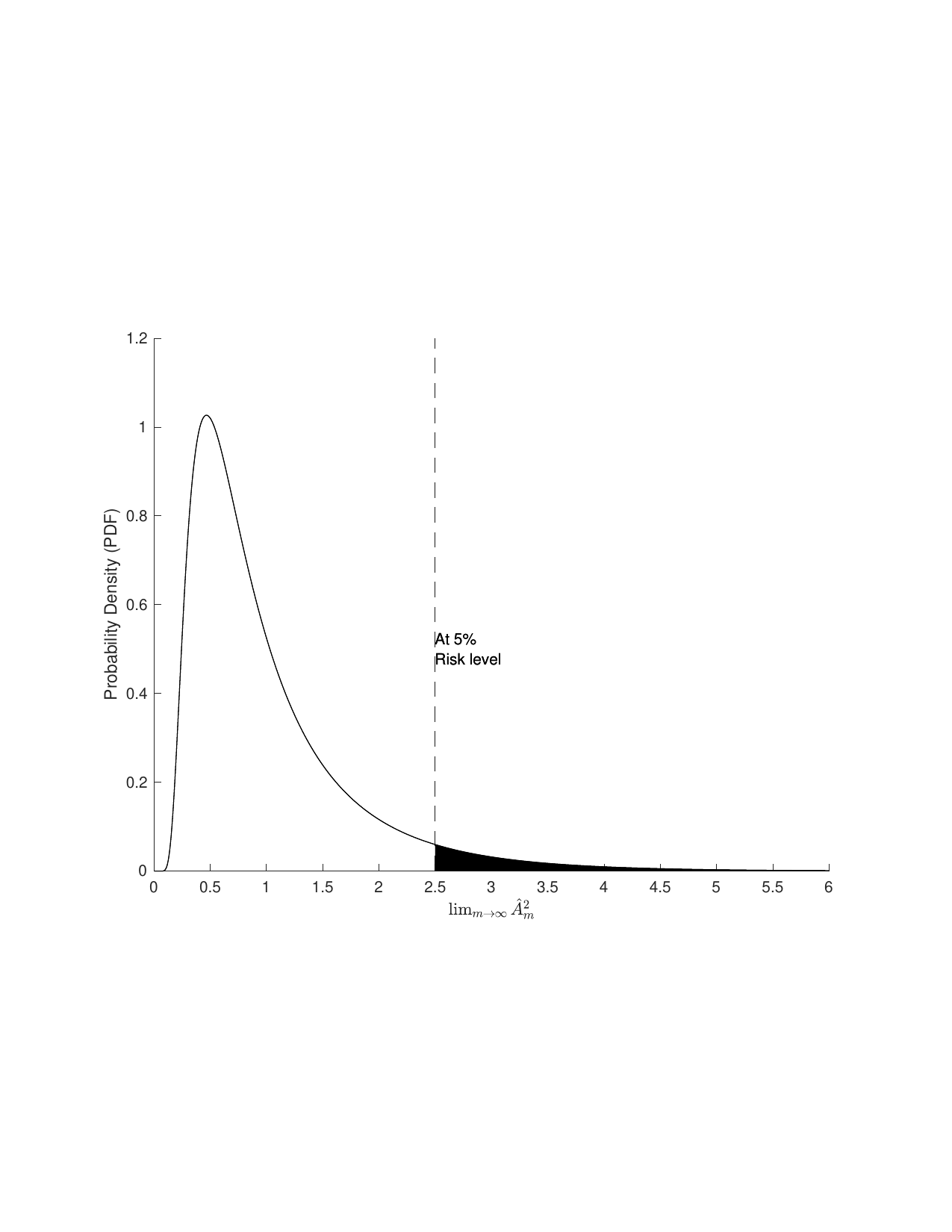}
        \caption{Asymptotic Anderson–Darling Statistic ($A_{m}^2$) Probability Density Function (PDF)}
        \label{fig71}
 \vspace{-0.8cm}
\end{figure}
 \noindent
At $5\%$ risk level, the risk threshold is $d=2.4941$ and represents the area in the shaded area under the probability density function.\\

\noindent
The p-value of the test statistic, $A_{m}^2$, is defined as follows:
 \begin{align}
P\_{value} = prob(A^2_{m} > \hat{A}^2_{m} | H_{0}) =1 - G(\hat{A}^2_{m}). \label{eq:l75}
 \end{align}

\noindent
In order to compute the Anderson–Darling statistic, $A_{m}^2$, in (\ref{eq:l73}), the sample of daily return $\{y_{1}, y_{2}\dots y_{m}\}$ of size $m$ is arranged in ascending order: $y_{(1)}< y_{(2)}<\dots <y_{(m)}$. The Anderson–Darling statistic \cite{lewis1961distribution} then becomes
 \begin{align}
A_{m}^2=-m - \frac{1}{m}\sum_{j=1}^{m}\left[(2j-1)log(F(y_{(j)}))+(2(n-j)+1)log(F(y_{(j)}))\right].\label{eq:l76}
 \end{align}
\noindent
For each index, the Anderson–Darling statistic (\ref{eq:l76}) is computed along with the p-value statistic. Table \ref{tab9} shows the KS-statistics ($A^{2}_{m}$) and $P\_values$ for the GTS, GBM, and the GTS variant distributions. While the two-parameter GBM hypothesis is always rejected, the GTS hypothesis is accepted and yields a very high p-value.
\begin{table}[ht]
\vspace{-0.3cm}
 \caption{Anderson–Darling Statistic \& P-value}
\label{tab9}
\centering
\begin{tabular}{@{}l | c c | c c | c c |c@{}}
\toprule
 & \multicolumn{2}{c|}{GTS} & \multicolumn{2}{c|}{GBN} & \multicolumn{2}{c|}{GTS variants} &\multicolumn{1}{c}{Sample size}\\ \toprule
 \multirow{1}{*}{Index} & \multirow{1}{*}{$\hat{A}_{m}^2$} &  \multirow{1}{*}{P\_value} &  \multirow{1}{*}{$\hat{A}_{m}^2$} &  \multirow{1}{*}{P\_value} &  \multirow{1}{*}{$\hat{A}_{m}^2$} &  \multirow{1}{*}{P\_value} &\multirow{1}{*}{m} \\ \toprule
Bicoin BTC & \multirow{1}{*}{0.1098} &\multirow{1}{*}{0.9999} & \multirow{1}{*}{99.706} & \multirow{1}{*}{0.0000}&  \multirow{1}{*}{0.1105 \footnotemark[4]} & \multirow{1}{*}{0.9999}&\multirow{1}{*}{4083}  \\
Ethereum & \multirow{1}{*}{0.1018} &\multirow{1}{*}{0.9999} & \multirow{1}{*}{59.157} & \multirow{1}{*}{0.0001}& \multirow{1}{*}{0.2123 \footnotemark[5]} & \multirow{1}{*}{0.9866}& \multirow{1}{*}{3246} \\
S\&P 500  & \multirow{1}{*}{0.3007} & \multirow{1}{*}{0.9376} & \multirow{1}{*}{54.304} & \multirow{1}{*}{0.0001}& \multirow{1}{*}{0.5010 \footnotemark[6]} & \multirow{1}{*}{0.7458}&\multirow{1}{*}{3656}  \\
SPY ETF & \multirow{1}{*}{0.3017} & \multirow{1}{*}{0.9368} & \multirow{1}{*}{51.516} & \multirow{1}{*}{0.0001}& \multirow{1}{*}{0.6684 \footnotemark[6]} & \multirow{1}{*}{0.5857}& \multirow{1}{*}{3655} \\ \bottomrule
\end{tabular}%
\end{table}
\footnotetext[4]{Kobol distribution ($\beta=\beta_{-}=\beta_{+}$)}
\footnotetext[5]{Carr-Geman-Madan-Yor (CGMY) distributions ($\beta=\beta_{-}=\beta_{+}$; $\alpha=\alpha_{-}=\alpha_{+}$)}
\footnotetext[6]{Bilateral Gamma distribution ($\beta_{-}=\beta_{+}=0$)}

\noindent
In addition, as shown by the p-value indicator in Table \ref{tab9}, the GTS distribution outperforms the Bilateral Gamma distribution for the S\&P 500 and SPY ETF indexes. However, the Kobol and CGMY distributions for Bitcoin and Ethereum, respectively, have almost the same performance as the GTS distribution.

\subsection{ Pearson's Chi-squared Test  Analysis}
\noindent
Pearson's chi-squared test \cite{schoutens2003levy} counts the number of sample points falling into certain intervals and compares them with the expected number under the null hypothesis. Under the null hypothesis, $H_{0}$, a random sample $\{y_{1}, y_{2}\dots y_{m}\}$ comes from the GTS distribution, which depends on the seven parameters estimated in Section \ref{estim}. Suppose that $m$ observations in the sample from a population are classified into $K$ mutually exclusive classes with respective observed numbers of observations $N_{j}$ (for $j = 1,2,…,K$), and a null hypothesis gives the probability $\Pi_{j}=F(x_{j}) - F(x_{j-1})$ (\ref{eq:l54}) that an observation falls into the $j^{th}$ class. \\

\noindent
The following Pearson statistic calculates the value of the chi-squared goodness-of-fit test:

 \begin{equation}
 \begin{aligned}
\chi^2(K-1-p) = \sum_{j=1}^{K}\frac{\left( N_{j} -  m\Pi_{j}\right)^2}{m\Pi_{j}}. \label{eq:l77}
\end{aligned}
\end{equation}

Under the null hypothesis assumption, as $m $ goes to $+\infty$,  the limiting distribution\cite{schoutens2003levy} of the  Pearson statistic (\ref{eq:l77}) follows the $\chi^2(K-1-p)$ distribution with $K-1-p$ degrees of freedom, p is the number of estimated parameters.\\
\noindent
Table \ref{tab10} shows the Pearson chi-squared statistics ($\hat{\chi}^2(K-1-p)$), $P-values$ and Class Number for the GTS, GBM and the GTS variant distributions. While the two-parameter GBM hypothesis is always rejected, the GTS hypothesis is accepted and yields a high p-value.
\begin{table}[ht]
\vspace{-0.2cm}
 \caption{ Pearson Statistics \& P-values }
\label{tab10}
\centering
\setlength{\tabcolsep}{0.7mm}
\begin{tabular}{@{}l | cc | cc | ccc | c@{}}
\toprule
 & \multicolumn{2}{c|}{GTS} & \multicolumn{2}{c|}{GBN} &\multicolumn{3}{c|}{GTS variants}& \multicolumn{1}{c}{Class Number} \\ \toprule
 \multirow{1}{*}{Index} & \multirow{1}{*}{$\hat{\chi}^2(K-8)$} &  \multirow{1}{*}{P\_value} &  \multirow{1}{*}{$\hat{\chi}^2(K-3)$} &  \multirow{1}{*}{P\_value} &  \multirow{1}{*}{$\hat{\chi}^2(K-p-1)$}&  \multirow{1}{*}{$p$} &  \multirow{1}{*}{P\_value}& \multirow{1}{*}{K}  \\ \toprule
Bicoin BTC & \multirow{1}{*}{12.234} &\multirow{1}{*}{0.508} & \multirow{1}{*}{1375} & \multirow{1}{*}{0.000} & \multirow{1}{*}{12.549 \footnotemark[4]} &  \multirow{1}{*}{6} & \multirow{1}{*}{0.562} & \multirow{1}{*}{21} \\
Ethereum & \multirow{1}{*}{6.910} &\multirow{1}{*}{0.863} & \multirow{1}{*}{805} & \multirow{1}{*}{0.000} & \multirow{1}{*}{8.618 \footnotemark[5]} &  \multirow{1}{*}{5} & \multirow{1}{*}{0.854} & \multirow{1}{*}{20} \\
S\&P 500  & \multirow{1}{*}{9.886} & \multirow{1}{*}{0.703} & \multirow{1}{*}{574} & \multirow{1}{*}{0.000} & \multirow{1}{*}{12.844 \footnotemark[6]} &  \multirow{1}{*}{5} & \multirow{1}{*}{0.614} & \multirow{1}{*}{21}  \\
SPY ETF & \multirow{1}{*}{13.955} & \multirow{1}{*}{0.377} & \multirow{1}{*}{605} & \multirow{1}{*}{0.000} &  \multirow{1}{*}{18.228 \footnotemark[6]} &  \multirow{1}{*}{5} & \multirow{1}{*}{0.251} &\multirow{1}{*}{21}  \\ \bottomrule
\end{tabular}
\end{table}
\footnotetext[4]{Kobol distribution ($\beta=\beta_{-}=\beta_{+}$)}
\footnotetext[5]{Carr-Geman-Madan-Yor (CGMY) distributions ($\beta=\beta_{-}=\beta_{+}$; $\alpha=\alpha_{-}=\alpha_{+}$)}
\footnotetext[6]{Bilateral Gamma distribution ($\beta_{-}=\beta_{+}=0$)}

\noindent
In addition, as shown by the p-value indicator in Table \ref{tab10}, the GTS distribution outperforms the Bilateral Gamma distribution for the S\&P 500 and SPY ETF indexes. However, the Kobol and CGMY distributions for Bitcoin and Ethereum, respectively, have almost the same performance as the GTS distribution. For more details on the estimation of the Pearson-statistic inputs under the GTS distribution, refer to Table \ref{tabb1} in Appendix \ref{eq:an2}.

\section {Conclusion}
The study provides a methodology for fitting the rich class of the seven-parameter GTS distribution to financial data. Four historical prices were considered in the methodology application: two heavily tailed data (Bitcoin and Ethereum returns) and two peaked data (S\&P 500 and SPY ETF returns). In the study, each historical data was used to fit the seven-parameter GTS distribution to the underlying data return distribution. The advanced fast FRFT scheme, which is based on the classic fast FRFT algorithm and the 11-point composite Newton–Cotes rule, was used to perform the maximum likelihood estimation of seven parameters of the GTS distribution. The maximum likelihood estimate results show that, for each index, the location parameter, $\mu$, is negative while others are positive, as expected in the literature. The statistical significance of the parameters was analyzed. The non-statistical significance of the index of stability parameters ($\beta_{+}$, $\beta_{-}$) has led to the fitting of the Kobol, CGMY, and the Bilateral Gamma distributions. The goodness-of-fit was assessed through Kolmogorov-Smirnov, Anderson-Darling, and Pearson's chi-squared statistics. While the two-parameter GBM hypothesis is always rejected, the goodness-of-fit analysis shows that the GTS distribution fits significantly the four historical data with a very high p-value.\\
   As a main limitation of the study, The applied methodology is compute-intensive, and the researchers need good skills in computer programming. In future work, the estimated parameter of the GTS distribution will be used in the Ornstein-Uhlenbeck type process to simulate the daily cumulative returns of the financial asset.

\bibliography{fmodelArxiv}

\clearpage
\appendix
\renewcommand{\thesection}{\Alph{section}}
\setcounter{section}{0}
\setcounter{equation}{0} 
\setcounter{table}{0} 

\begin{appendices}
\section{Iterative Maximum Likelihood Estimation (MLE) Procedure}\label{eq:an1}
\begin{table*}[ht]
\vspace{-0.3cm}
\centering
\caption{Convergence of the GTS parameter for Bitcoin return data}
\label{taba1}
\resizebox{13cm}{!}{
\begin{tabular}{ccccccccccc}
\hline
\textbf{$Iterations$} & \textbf{$\mu$} & \textbf{$\beta_{+}$} & \textbf{$\beta_{-}$} & \textbf{$\alpha_{+}$} & \textbf{$\alpha_{-}$} & \textbf{$\lambda_{+}$} & \textbf{$\lambda_{-}$} & \textbf{$Log(ML)$} & \textbf{$||\frac{dLog(ML)}{dV}||$} & \textbf{$Max Eigen Value$} \\ \hline
1 & -0.7369246 & 0.4613783 & 0.2671787 & 0.8100173 & 0.5173470 & 0.2156289 & 0.1919378 & -10609.058 & 282.6765666 & 3.6240151 \\
2 & -0.7977019 & 0.4654390 & 0.2169392 & 0.7846817 & 0.4905332 & 0.2164395 & 0.2049523 & -10607.253 & 26.7522215 & -1.6194299 \\
3 & -0.4455841 & 0.3884721 & 0.3213867 & 0.7758150 & 0.5193395 & 0.2340187 & 0.1883953 & -10607.001 & 50.1355291 & 3.0916011 \\
4 & -0.7634445 & 0.4521878 & 0.2217702 & 0.7935129 & 0.4959371 & 0.2218253 & 0.2055181 & -10607.210 & 4.8235882 & -2.6390063 \\
5 & -0.4906746 & 0.4146531 & 0.3404176 & 0.7722729 & 0.5222110 & 0.2269202 & 0.1846457 & -10607.059 & 67.6646338 & 9.0971871 \\
6 & -0.5515834 & 0.4434827 & 0.3335905 & 0.7724484 & 0.5190619 & 0.2197566 & 0.1853686 & -10607.022 & 17.4476962 & -0.4021102 \\
7 & -0.4914586 & 0.4327714 & 0.3503012 & 0.7686883 & 0.5235361 & 0.2216450 & 0.1826269 & -10606.991 & 16.2838831 & -0.1781480 \\
8 & -0.2900908 & 0.3885350 & 0.3956186 & 0.7563357 & 0.5370260 & 0.2300772 & 0.1754994 & -10606.864 & 12.0116477 & -2.4090216 \\
9 & -0.2752698 & 0.3832660 & 0.3969704 & 0.7555456 & 0.5377367 & 0.2312224 & 0.1753571 & -10606.847 & 11.4457840 & -2.5487401 \\
10 & -0.2609339 & 0.3780400 & 0.3982456 & 0.7547812 & 0.5384209 & 0.2323632 & 0.1752258 & -10606.832 & 10.8628213 & -2.6874876 \\
11 & -0.2085409 & 0.3576927 & 0.4025762 & 0.7519966 & 0.5408864 & 0.2368544 & 0.1748113 & -10606.782 & 8.3600783 & -3.4356438 \\
12 & -0.1970109 & 0.3528575 & 0.4034002 & 0.7513923 & 0.5414138 & 0.2379362 & 0.1747455 & -10606.772 & 7.6818408 & -3.6954428 \\
13 & -0.1761733 & 0.3436416 & 0.4046818 & 0.7503191 & 0.5423414 & 0.2400174 & 0.1746675 & -10606.756 & 6.2516380 & -4.2766527 \\
14 & -0.1668421 & 0.3392794 & 0.4051522 & 0.7498492 & 0.5427438 & 0.2410120 & 0.1746529 & -10606.750 & 5.5002876 & -4.5807361 \\
15 & -0.1581860 & 0.3350854 & 0.4055256 & 0.7494209 & 0.5431090 & 0.2419740 & 0.1746517 & -10606.745 & 4.7262048 & -4.8824015 \\
16 & -0.1501600 & 0.3310612 & 0.4058166 & 0.7490311 & 0.5434404 & 0.2429024 & 0.1746615 & -10606.741 & 3.9306487 & -5.1742197 \\
17 & -0.1209376 & 0.3159301 & 0.4066945 & 0.7476393 & 0.5446224 & 0.2464122 & 0.1747326 & -10606.734 & 2.8592342 & -6.2251311 \\
18 & -0.1216487 & 0.3155707 & 0.4064438 & 0.7477179 & 0.5445608 & 0.2465247 & 0.1747753 & -10606.734 & 0.0014787 & -6.2014232 \\
19 & -0.1215714 & 0.3155483 & 0.4064635 & 0.7477142 & 0.5445652 & 0.2465296 & 0.1747719 & -10606.734 & 1.82E-06 & -6.2026532 \\
20 & -0.1215714 & 0.3155483 & 0.4064635 & 0.7477142 & 0.5445652 & 0.2465296 & 0.1747719 & -10606.734 & 9.80E-10 & -6.2026530\\ \hline
\end{tabular}
}
\vspace{-0.3cm}
\end{table*}

\begin{table*}[ht]
\vspace{-0.3cm}
\centering
\caption{Convergence of the GTS parameter for Ethereum return data}
\label{taba2}
\resizebox{13cm}{!}{
\begin{tabular}{ccccccccccc}
\hline
\textbf{$Iterations$} & \textbf{$\mu$} & \textbf{$\beta_{+}$} & \textbf{$\beta_{-}$} & \textbf{$\alpha_{+}$} & \textbf{$\alpha_{-}$} & \textbf{$\lambda_{+}$} & \textbf{$\lambda_{-}$} & \textbf{$Log(ML)$} & \textbf{$||\frac{dLog(ML)}{dV}||$} & \textbf{$Max Eigen Value$} \\ \hline
1 & -0.1215714 & 0.3155483 & 0.4064635 & 0.7477142 & 0.5445652 & 0.2465296 & 0.1747719 & -9745.171 & 2673.428257 & 206.013602 \\
2 & -0.1724835 & 0.3319505 & 0.4091022 & 0.7364129 & 0.5479934 & 0.2227870 & 0.1684568 & -9700.715 & 2388.609394 & 180.884105 \\
3 & -0.2041418 & 0.3384742 & 0.4118929 & 0.7338794 & 0.5531083 & 0.2083203 & 0.1632896 & -9669.986 & 2139.267660 & 157.699659 \\
4 & -0.4006157 & 0.3530035 & 0.4393474 & 0.7513784 & 0.6172425 & 0.1135743 & 0.1221930 & -9586.115 & 1471.570475 & 32.410140 \\
5 & -0.6485551 & 0.4493817 & 0.4404508 & 0.9247887 & 0.7210031 & 0.1412949 & 0.1482307 & -9556.026 & 380.605737 & 56.584055 \\
6 & -0.6290525 & 0.4371402 & 0.4359516 & 0.9780784 & 0.7824777 & 0.1582340 & 0.1608694 & -9553.005 & 24.905322 & -0.719221 \\
7 & -0.5545412 & 0.3994778 & 0.3918188 & 0.9627486 & 0.7936571 & 0.1652438 & 0.1724287 & -9552.866 & 5.834338 & -0.847574 \\
8 & -0.4744837 & 0.3913982 & 0.4093404 & 0.9582366 & 0.8022858 & 0.1665103 & 0.1699928 & -9552.862 & 2.963350 & -0.933466 \\
9 & -0.4825586 & 0.3902160 & 0.4051365 & 0.9580755 & 0.8007651 & 0.1667400 & 0.1706850 & -9552.862 & 0.214871 & -0.931142 \\
10 & -0.4853678 & 0.3904369 & 0.4044899 & 0.9582486 & 0.8004799 & 0.1667119 & 0.1707853 & -9552.862 & 0.004754 & -0.931872 \\
11 & -0.4853800 & 0.3904362 & 0.4044846 & 0.9582487 & 0.8004779 & 0.1667121 & 0.1707862 & -9552.862 & 2.96E-07 & -0.931836 \\
12 & -0.4853800 & 0.3904362 & 0.4044846 & 0.9582487 & 0.8004779 & 0.1667121 & 0.1707862 & -9552.862 & 1.18E-10 & -0.931836 \\
13 & -0.4853800 & 0.3904362 & 0.4044846 & 0.9582487 & 0.8004779 & 0.1667121 & 0.1707862 & -9552.862 & 1.27E-11 & -0.931836\\ \hline
\end{tabular}
}
\vspace{-0.3cm}
\end{table*}

\begin{table*}[ht]
\vspace{-0.3cm}
\centering
\caption{Convergence of the GTS parameter for S\&P 500 return data}
\label{taba3}
\resizebox{13cm}{!}{%
\begin{tabular}{ccccccccccc}
\hline
\textbf{$Iterations$} & \textbf{$\mu$} & \textbf{$\beta_{+}$} & \textbf{$\beta_{-}$} & \textbf{$\alpha_{+}$} & \textbf{$\alpha_{-}$} & \textbf{$\lambda_{+}$} & \textbf{$\lambda_{-}$} & \textbf{$Log(ML)$} & \textbf{$||\frac{dLog(ML)}{dV}||$} & \textbf{$Max Eigen Value$} \\ \hline
1 & -0.2606426 & 0.34087979 & 0.02221141 & 0.78775729 & 0.59711061 & 1.28855513 & 1.01435308 & -4921.0858 & 147.214541 & -0.476265 \\
2 & -0.2747887 & 0.37848567 & 0.02517846 & 0.72538248 & 0.594628 & 1.22107935 & 1.01081205 & -4920.9765 & 107.910271 & -12.169518 \\
3 & -0.2852743 & 0.34562742 & 0.01628972 & 0.78353361 & 0.58024658 & 1.27423544 & 0.9888729 & -4920.6236 & 23.70873 & 11.9588258 \\
4 & -0.2971254 & 0.37985815 & 0.05392593 & 0.74068472 & 0.55972179 & 1.22737986 & 0.96278568 & -4920.5493 & 4.21443356 & 0.29705471 \\
5 & -0.3415082 & 0.42600675 & 0.0432239 & 0.69783497 & 0.56106365 & 1.18286494 & 0.966753 & -4920.5722 & 37.0642417 & -1.7903876 \\
6 & -0.2995817 & 0.40315129 & 0.12236507 & 0.7168274 & 0.522172 & 1.20383351 & 0.9117451 & -4920.574 & 3.07232514 & -0.7101089 \\
7 & -0.2944623 & 0.3977257 & 0.12218751 & 0.72174351 & 0.52260032 & 1.20899714 & 0.9121201 & -4920.5701 & 2.63567879 & -1.0187469 \\
8 & -0.2767429 & 0.37561063 & 0.11561097 & 0.7427615 & 0.52696799 & 1.23067384 & 0.91742165 & -4920.5511 & 1.83311761 & -2.1103436 \\
9 & -0.274204 & 0.37177939 & 0.11355883 & 0.74659763 & 0.52814335 & 1.2345524 & 0.91893546 & -4920.5477 & 1.75839181 & -2.177405 \\
10 & -0.2559812 & 0.34147926 & 0.09643312 & 0.77815581 & 0.53784221 & 1.26594249 & 0.93144448 & -4920.5308 & 1.33811298 & -2.6954121 \\
11 & -0.2496977 & 0.32954013 & 0.08928069 & 0.79125494 & 0.54186044 & 1.27868642 & 0.93662846 & -4920.5291 & 0.79520373 & -2.8166517 \\
12 & -0.2494237 & 0.32866495 & 0.08869445 & 0.79238161 & 0.54221561 & 1.27970094 & 0.93708759 & -4920.5291 & 0.00166731 & -2.6765739 \\
13 & -0.2494072 & 0.32862462 & 0.08864569 & 0.79242579 & 0.54224632 & 1.27974278 & 0.93712865 & -4920.5291 & 0.00013552 & -2.6768326 \\
14 & -0.2494082 & 0.32862428 & 0.08864047 & 0.79242619 & 0.54224944 & 1.27974312 & 0.93713293 & -4920.5291 & 1.47E-05 & -2.6766945 \\
15 & -0.2494083 & 0.32862424 & 0.08863992 & 0.79242624 & 0.54224977 & 1.27974315 & 0.93713338 & -4920.5291 & 1.57E-06 & -2.67668 \\
16 & -0.2494083 & 0.32862424 & 0.08863985 & 0.79242624 & 0.54224981 & 1.27974316 & 0.93713344 & -4920.5291 & 1.89E-09 & -2.6766783 \\
17 & -0.2494083 & 0.32862424 & 0.08863985 & 0.79242624 & 0.54224981 & 1.27974316 & 0.93713344 & -4920.5291 & 2.09E-10 & -2.6766783 \\ \hline
\end{tabular}
}
\end{table*}

\begin{table*}[ht]
\vspace{-0.3cm}
\centering
\caption{Convergence of the GTS parameter for SPY EFT return data}
\label{taba4}
\resizebox{13cm}{!}{
\begin{tabular}{ccccccccccc}
\hline
\textbf{$Iterations$} & \textbf{$\mu$} & \textbf{$\beta_{+}$} & \textbf{$\beta_{-}$} & \textbf{$\alpha_{+}$} & \textbf{$\alpha_{-}$} & \textbf{$\lambda_{+}$} & \textbf{$\lambda_{-}$} & \textbf{$Log(ML)$} & \textbf{$||\frac{dLog(ML)}{dV}||$} & \textbf{$Max Eigen Value$} \\ \hline
1 & -0.0518661 & 0.1161846 & 0.2186548 & 1.04269292 & 0.52712574 & 1.52244991 & 0.91168779 & -4894.2279 & 14.7801725 & -6.5141947 \\
2 & -0.1102477 & 0.18491276 & 0.17478472 & 0.94756655 & 0.52844271 & 1.4399315 & 0.91415148 & -4893.8278 & 29.8166141 & -1.9290981 \\
3 & -0.2094204 & 0.29377592 & 0.0891446 & 0.84029122 & 0.56054563 & 1.34797271 & 0.96500981 & -4893.3554 & 16.9940095 & 4.33892902 \\
4 & -0.1985564 & 0.29758208 & 0.13230013 & 0.83156167 & 0.53656079 & 1.33833078 & 0.93230856 & -4893.4206 & 10.9048744 & 1.04588745 \\
5 & -0.078883 & 0.25939922 & 0.39611543 & 0.84865673 & 0.40365522 & 1.35595932 & 0.7410936 & -4895.8806 & 241.028178 & 94.6293224 \\
6 & -0.0753571 & 0.26704857 & 0.33754158 & 0.84120908 & 0.45446164 & 1.3452823 & 0.80751063 & -4894.3899 & 25.1995505 & -2.805571 \\
7 & -0.196642 & 0.31624372 & 0.20068543 & 0.80509106 & 0.50322368 & 1.30837612 & 0.88967028 & -4893.888 & 140.257551 & 34.770691 \\
8 & -0.1898283 & 0.3045047 & 0.15900291 & 0.81380451 & 0.52672075 & 1.31341912 & 0.91775259 & -4893.4694 & 6.29433991 & -4.6080872 \\
9 & -0.2275214 & 0.32940996 & 0.10770535 & 0.79340215 & 0.55020025 & 1.29360449 & 0.95260474 & -4893.3049 & 7.34361008 & -8.1891832 \\
10 & -0.2726283 & 0.34972465 & 0.01601222 & 0.78061523 & 0.59736004 & 1.28153304 & 1.01757433 & -4893.2192 & 14.0784211 & -3.6408772 \\
11 & -0.2499816 & 0.32645286 & 0.01851524 & 0.80217301 & 0.60018154 & 1.30243672 & 1.01792703 & -4893.2125 & 6.27794755 & -4.6455546 \\
12 & -0.2575953 & 0.33832596 & 0.02643321 & 0.79008383 & 0.59450637 & 1.29085215 & 1.01101001 & -4893.208 & 1.23298227 & -6.8035318 \\
13 & -0.2607071 & 0.34052644 & 0.02075252 & 0.78817075 & 0.59805376 & 1.28895161 & 1.01555438 & -4893.2076 & 0.07363298 & -6.71708 \\
14 & -0.2606368 & 0.34088815 & 0.02227012 & 0.78774693 & 0.59707082 & 1.28854532 & 1.01430383 & -4893.2076 & 0.00156771 & -6.6908109 \\
15 & -0.2606432 & 0.34087911 & 0.02220633 & 0.78775813 & 0.59711397 & 1.28855593 & 1.01435731 & -4893.2076 & 0.00010164 & -6.6915902 \\
16 & -0.2606426 & 0.34087985 & 0.02221188 & 0.78775721 & 0.5971103 & 1.28855506 & 1.01435268 & -4893.2076 & 8.45E-06 & -6.6915177 \\
17 & -0.2606426 & 0.34087979 & 0.02221141 & 0.78775729 & 0.59711061 & 1.28855513 & 1.01435308 & -4893.2076 & 7.21E-07 & -6.6915239\\ \hline
\end{tabular}
}
\end{table*}
\newpage
\setcounter{equation}{0} 
\setcounter{table}{0} 
\section{Pearson Statistic Inputs}\label{eq:an2}
\begin{table}[ht]
\vspace{-0.9cm}
\centering
\caption{ Observed versus Expected  statistics under GTS distribution}
\label{tabb1}
\setlength\tabcolsep{3pt}  
\begin{tabular}{@{}c|ccc|ccc|ccc|ccc@{}}
\toprule
& \multicolumn{3}{c|}{Bitcoin} & \multicolumn{3}{c|}{Ethereum} & \multicolumn{3}{c|}{sp500} & \multicolumn{3}{c}{SPY   EFT} \\ \toprule
 \multicolumn{1}{c|}{k} & \multicolumn{1}{c|}{x(k)} & \multicolumn{1}{c|}{n*$\Pi_{k}$} & \multicolumn{1}{c|}{n(k)} & \multicolumn{1}{c|}{x(k)} & \multicolumn{1}{c|}{n*$\Pi_{k}$} & \multicolumn{1}{c|}{n(k)} & \multicolumn{1}{c|}{x(k)} & \multicolumn{1}{c|}{n*$\Pi_{k}$} & \multicolumn{1}{c|}{n(k)} & \multicolumn{1}{c|}{x(k)} & \multicolumn{1}{c|}{n*$\Pi_{k}$} & \multicolumn{1}{c}{n(k)} \\ \midrule
 
\multirow{1}{*}{1} & \multirow{1}{*}{-18.988} & \multirow{1}{*}{7.512} & \multirow{1}{*}{8} & \multirow{1}{*}{-20.861} & \multirow{1}{*}{7.531} & \multirow{1}{*}{6} & \multirow{1}{*}{-4.341} & \multirow{1}{*}{10.264} & \multirow{1}{*}{12} & \multirow{1}{*}{-4.405} & \multirow{1}{*}{8.327} & \multirow{1}{*}{11} \\ \midrule
\multirow{1}{*}{2} & \multirow{1}{*}{-17.080} & \multirow{1}{*}{4.144} & \multirow{1}{*}{7} & \multirow{1}{*}{-18.321} & \multirow{1}{*}{5.583} & \multirow{1}{*}{6} & \multirow{1}{*}{-3.935} & \multirow{1}{*}{5.456} & \multirow{1}{*}{5} & \multirow{1}{*}{-4.007} & \multirow{1}{*}{4.562} & \multirow{1}{*}{3} \\ \midrule
\multirow{1}{*}{3} & \multirow{1}{*}{-15.172} & \multirow{1}{*}{6.603} & \multirow{1}{*}{7} & \multirow{1}{*}{-15.781} & \multirow{1}{*}{10.018} & \multirow{1}{*}{15} & \multirow{1}{*}{-3.529} & \multirow{1}{*}{8.442} & \multirow{1}{*}{7} & \multirow{1}{*}{-3.608} & \multirow{1}{*}{7.116} & \multirow{1}{*}{7} \\ \midrule
\multirow{1}{*}{4} & \multirow{1}{*}{-13.264} & \multirow{1}{*}{10.678} & \multirow{1}{*}{9} & \multirow{1}{*}{-13.241} & \multirow{1}{*}{18.331} & \multirow{1}{*}{20} & \multirow{1}{*}{-3.123} & \multirow{1}{*}{13.138} & \multirow{1}{*}{15} & \multirow{1}{*}{-3.210} & \multirow{1}{*}{11.144} & \multirow{1}{*}{12} \\ \midrule
\multirow{1}{*}{5} & \multirow{1}{*}{-11.356} & \multirow{1}{*}{17.586} & \multirow{1}{*}{13} & \multirow{1}{*}{-10.700} & \multirow{1}{*}{34.424} & \multirow{1}{*}{29} & \multirow{1}{*}{-2.717} & \multirow{1}{*}{20.588} & \multirow{1}{*}{20} & \multirow{1}{*}{-2.811} & \multirow{1}{*}{17.538} & \multirow{1}{*}{18} \\ \midrule
\multirow{1}{*}{6} & \multirow{1}{*}{-9.448} & \multirow{1}{*}{29.661} & \multirow{1}{*}{32} & \multirow{1}{*}{-8.160} & \multirow{1}{*}{66.980} & \multirow{1}{*}{68} & \multirow{1}{*}{-2.311} & \multirow{1}{*}{32.543} & \multirow{1}{*}{30} & \multirow{1}{*}{-2.413} & \multirow{1}{*}{27.775} & \multirow{1}{*}{27} \\ \midrule
\multirow{1}{*}{7} & \multirow{1}{*}{-7.540} & \multirow{1}{*}{51.657} & \multirow{1}{*}{47} & \multirow{1}{*}{-5.620} & \multirow{1}{*}{137.268} & \multirow{1}{*}{134} & \multirow{1}{*}{-1.905} & \multirow{1}{*}{52.023} & \multirow{1}{*}{48} & \multirow{1}{*}{-2.015} & \multirow{1}{*}{44.350} & \multirow{1}{*}{41} \\ \midrule
\multirow{1}{*}{8} & \multirow{1}{*}{-5.632} & \multirow{1}{*}{94.188} & \multirow{1}{*}{107} & \multirow{1}{*}{-3.080} & \multirow{1}{*}{305.591} & \multirow{1}{*}{305} & \multirow{1}{*}{-1.499} & \multirow{1}{*}{84.479} & \multirow{1}{*}{89} & \multirow{1}{*}{-1.616} & \multirow{1}{*}{71.617} & \multirow{1}{*}{73} \\ \midrule
\multirow{1}{*}{9} & \multirow{1}{*}{-3.724} & \multirow{1}{*}{184.486} & \multirow{1}{*}{168} & \multirow{1}{*}{-0.540} & \multirow{1}{*}{769.951} & \multirow{1}{*}{769} & \multirow{1}{*}{-1.093} & \multirow{1}{*}{140.406} & \multirow{1}{*}{147} & \multirow{1}{*}{-1.218} & \multirow{1}{*}{117.552} & \multirow{1}{*}{122} \\ \midrule

\multirow{1}{*}{10} & \multirow{1}{*}{-1.816} & \multirow{1}{*}{411.503} & \multirow{1}{*}{419} & \multirow{1}{*}{2.000} & \multirow{1}{*}{965.210} & \multirow{1}{*}{966} & \multirow{1}{*}{-0.687} & \multirow{1}{*}{242.564} & \multirow{1}{*}{244} & \multirow{1}{*}{-0.819} & \multirow{1}{*}{198.101} & \multirow{1}{*}{186} \\ \midrule

\multirow{1}{*}{11} & \multirow{1}{*}{0.092} & \multirow{1}{*}{1195.725} & \multirow{1}{*}{1186} & \multirow{1}{*}{4.540} & \multirow{1}{*}{458.955} & \multirow{1}{*}{466} & \multirow{1}{*}{-0.281} & \multirow{1}{*}{455.971} & \multirow{1}{*}{456} & \multirow{1}{*}{-0.421} & \multirow{1}{*}{351.660} & \multirow{1}{*}{348} \\ \midrule
\multirow{1}{*}{12} & \multirow{1}{*}{2.000} & \multirow{1}{*}{1150.470} & \multirow{1}{*}{1159} & \multirow{1}{*}{7.080} & \multirow{1}{*}{219.873} & \multirow{1}{*}{222} & \multirow{1}{*}{0.126} & \multirow{1}{*}{896.809} & \multirow{1}{*}{896} & \multirow{1}{*}{-0.023} & \multirow{1}{*}{725.476} & \multirow{1}{*}{730} \\ \midrule
\multirow{1}{*}{13} & \multirow{1}{*}{3.908} & \multirow{1}{*}{473.177} & \multirow{1}{*}{469} & \multirow{1}{*}{9.620} & \multirow{1}{*}{111.760} & \multirow{1}{*}{101} & \multirow{1}{*}{0.532} & \multirow{1}{*}{749.106} & \multirow{1}{*}{733} & \multirow{1}{*}{0.376} & \multirow{1}{*}{867.735} & \multirow{1}{*}{867} \\ \midrule
\multirow{1}{*}{14} & \multirow{1}{*}{5.816} & \multirow{1}{*}{217.783} & \multirow{1}{*}{227} & \multirow{1}{*}{12.160} & \multirow{1}{*}{59.253} & \multirow{1}{*}{60} & \multirow{1}{*}{0.938} & \multirow{1}{*}{430.841} & \multirow{1}{*}{426} & \multirow{1}{*}{0.774} & \multirow{1}{*}{541.022} & \multirow{1}{*}{522} \\ \midrule
\multirow{1}{*}{15} & \multirow{1}{*}{7.724} & \multirow{1}{*}{107.387} & \multirow{1}{*}{102} & \multirow{1}{*}{14.700} & \multirow{1}{*}{32.379} & \multirow{1}{*}{32} & \multirow{1}{*}{1.344} & \multirow{1}{*}{234.692} & \multirow{1}{*}{260} & \multirow{1}{*}{1.173} & \multirow{1}{*}{300.464} & \multirow{1}{*}{325} \\ \midrule
\multirow{1}{*}{16} & \multirow{1}{*}{9.632} & \multirow{1}{*}{55.272} & \multirow{1}{*}{51} & \multirow{1}{*}{17.241} & \multirow{1}{*}{18.099} & \multirow{1}{*}{21} & \multirow{1}{*}{1.750} & \multirow{1}{*}{126.820} & \multirow{1}{*}{137} & \multirow{1}{*}{1.571} & \multirow{1}{*}{163.491} & \multirow{1}{*}{189} \\ \midrule
\multirow{1}{*}{17} & \multirow{1}{*}{11.540} & \multirow{1}{*}{29.294} & \multirow{1}{*}{39} & \multirow{1}{*}{19.781} & \multirow{1}{*}{10.296} & \multirow{1}{*}{12} & \multirow{1}{*}{2.156} & \multirow{1}{*}{68.688} & \multirow{1}{*}{57} & \multirow{1}{*}{1.969} & \multirow{1}{*}{88.862} & \multirow{1}{*}{82} \\ \midrule
\multirow{1}{*}{18} & \multirow{1}{*}{13.448} & \multirow{1}{*}{15.861} & \multirow{1}{*}{14} & \multirow{1}{*}{22.321} & \multirow{1}{*}{5.939} & \multirow{1}{*}{5} & \multirow{1}{*}{2.562} & \multirow{1}{*}{37.387} & \multirow{1}{*}{33} & \multirow{1}{*}{2.368} & \multirow{1}{*}{48.491} & \multirow{1}{*}{36} \\ \midrule
\multirow{1}{*}{19} & \multirow{1}{*}{15.356} & \multirow{1}{*}{8.729} & \multirow{1}{*}{9} & \multirow{1}{*}{24.861} & \multirow{1}{*}{3.465} & \multirow{1}{*}{5} & \multirow{1}{*}{2.968} & \multirow{1}{*}{20.460} & \multirow{1}{*}{15} & \multirow{1}{*}{2.766} & \multirow{1}{*}{26.600} & \multirow{1}{*}{23} \\ \midrule
\multirow{1}{*}{20} & \multirow{1}{*}{17.264} & \multirow{1}{*}{4.866} & \multirow{1}{*}{4} & \multirow{1}{*}{} & \multirow{1}{*}{5.091} & \multirow{1}{*}{4} & \multirow{1}{*}{3.374} & \multirow{1}{*}{11.256} & \multirow{1}{*}{12} & \multirow{1}{*}{3.165} & \multirow{1}{*}{14.669} & \multirow{1}{*}{13} \\ \midrule
\multirow{1}{*}{21} & \multirow{1}{*}{} & \multirow{1}{*}{6.419} & \multirow{1}{*}{6} & \multirow{1}{*}{} & \multirow{1}{*}{} & \multirow{1}{*}{} & \multirow{1}{*}{} & \multirow{1}{*}{14.067} & \multirow{1}{*}{14} & \multirow{1}{*}{} & \multirow{1}{*}{18.450} & \multirow{1}{*}{20} \\ \bottomrule
\end{tabular}%
\vspace{-0.5cm}
\end{table}

\clearpage
\newpage
\setcounter{equation}{0}
\setcounter{table}{0}
\section{Bitcoin BTC : Kobol distribution ($\beta=\beta_{-}=\beta_{+}$)}\label{eq:an3}
\begin{equation}
 \begin{aligned}
V(dx) =\left(\alpha_{+}\frac{e^{-\lambda_{+}x}}{x^{1+\beta}} \boldsymbol{1}_{x>0} + \alpha_{-}\frac{e^{-\lambda_{-}|x|}}{|x|^{1+\beta}} \boldsymbol{1}_{x<0}\right) dx
 \end{aligned}
 \end{equation}

  \begin{equation}
\begin{aligned}
\resizebox{0.9\hsize}{!}{$\Psi(\xi)=\mu\xi i+\Gamma(-\beta)\left[\alpha_{+}((\lambda_{+} - i\xi)^{\beta}-{\lambda_{+}}^{\beta})+\alpha_{-}((\lambda_{-}+i\xi)^{\beta}-{\lambda_{-}}^{\beta})\right]$} \label{eq:l27}
 \end{aligned}
 \end{equation}

  \begin{table}[ht]
\centering
\caption{ Kobol maximum-likelihood estimation for Bitcoin return data}
\label{tabc1}
\begin{tabular}{@{} c|c|c|ccccc @{}}
\toprule
\multicolumn{1}{c|}{\textbf{Model}} & \multicolumn{1}{c|}{\textbf{Parameter}} & \multicolumn{1}{c|}{\textbf{Estimate}} & \multicolumn{1}{c|}{\textbf{Std Err}} & \multicolumn{1}{c|}{\textbf{z}} & \multicolumn{1}{c|}{\textbf{$Pr(Z >|z|)$}} & \multicolumn{2}{c}{\textbf{[95\% Conf.Interval]}}  \\ \toprule

\multirow{10}{*}{\textbf{GTS}} & \multirow{1}{*}{\textbf{$\mu$}} &\multirow{1}{*}{-0.292833} & \multirow{1}{*}{(0.126)} & \multirow{1}{*}{-2.32} & \multirow{1}{*}{2.1E-02} & \multirow{1}{*}{-0.541} & \multirow{1}{*}{-0.045}  \\

 & \multirow{1}{*}{\textbf{$\beta$}} & \multirow{1}{*}{0.367074} & \multirow{1}{*}{(0.086)} & \multirow{1}{*}{4.27} & \multirow{1}{*}{1.9E-05} & \multirow{1}{*}{0.199} & \multirow{1}{*}{0.535}  \\

 & \multirow{1}{*}{\textbf{$\alpha_{+}$}} & \multirow{1}{*}{0.755914} & \multirow{1}{*}{(0.047)} & \multirow{1}{*}{16.02} & \multirow{1}{*}{4.7E-58} & \multirow{1}{*}{0.663} & \multirow{1}{*}{0.848}  \\

 & \multirow{1}{*}{\textbf{$\alpha_{-}$}} & \multirow{1}{*}{0.535121} & \multirow{1}{*}{(0.034)} & \multirow{1}{*}{15.68} & \multirow{1}{*}{9.6E-56} & \multirow{1}{*}{0.468} & \multirow{1}{*}{0.602}  \\

 &\multirow{1}{*}{\textbf{$\lambda_{+}$}} & \multirow{1}{*}{0.235266} & \multirow{1}{*}{(0.027)} & \multirow{1}{*}{8.87} & \multirow{1}{*}{3.6E-19} & \multirow{1}{*}{0.183} & \multirow{1}{*}{0.287}  \\

 & \multirow{1}{*}{\textbf{$\lambda_{-}$}} & \multirow{1}{*}{0.181602} & \multirow{1}{*}{(0.023)} & \multirow{1}{*}{7.94} & \multirow{1}{*}{9.8E-16} & \multirow{1}{*}{0.137} & \multirow{1}{*}{0.226}  \\ \cmidrule(l){2-8}
 & \multirow{1}{*}{\textbf{Log(ML)}} & \multirow{1}{*}{-10607} & \multirow{1}{*}{} & \multirow{1}{*}{} & \multirow{1}{*}{} & \multirow{1}{*}{} & \multirow{1}{*}{}  \\
 & \multirow{1}{*}{\textbf{AIC}} & \multirow{1}{*}{21226} & \multirow{1}{*}{} & \multirow{1}{*}{} & \multirow{1}{*}{} & \multirow{1}{*}{} &  \multirow{1}{*}{}  \\
 & \multirow{1}{*}{\textbf{BIK}} & \multirow{1}{*}{21264} & \multirow{1}{*}{} & \multirow{1}{*}{} & \multirow{1}{*}{} & \multirow{1}{*}{} &  \multirow{1}{*}{} \\  \bottomrule
\end{tabular}
\end{table}
\begin{table*}[ht]
\vspace{-0.3cm}
\centering
\caption{Convergence of the Kobol parameter for Bitcoin return data}
\label{tabc2}
\resizebox{13cm}{!}{%
\begin{tabular}{cccccccccc}
\hline
\textbf{$Iterations$} & \textbf{$\mu$} & \textbf{$\beta$} & \textbf{$\alpha_{+}$} & \textbf{$\alpha_{-}$} & \textbf{$\lambda_{+}$} & \textbf{$\lambda_{-}$} & \textbf{$Log(ML)$} & \textbf{$||\frac{dLog(ML)}{dV}||$} & \textbf{$Max Eigen Value$} \\ \hline
1 & -0.1215714 & 0.3155483 & 0.7477142 & 0.5445652 & 0.2465296 & 0.17477186 & -10614.93879 & 450.0556974 & 25.6678081 \\
2 & -0.255172 & 0.36516958 & 0.73215119 & 0.53194253 & 0.22955186 & 0.17909072 & -10607.01058 & 51.74982347 & -46.893383 \\
3 & -0.2912276 & 0.37096854 & 0.75410108 & 0.53529439 & 0.23387716 & 0.18070591 & -10606.81236 & 1.484798964 & -53.728563 \\
4 & -0.2922408 & 0.36574333 & 0.7559582 & 0.53508403 & 0.23560819 & 0.18189913 & -10606.81041 & 0.258928464 & -53.391237 \\
5 & -0.2928641 & 0.36714311 & 0.75591147 & 0.53512239 & 0.23524801 & 0.18158588 & -10606.81025 & 0.01286122 & -53.406734 \\
6 & -0.292837 & 0.36708319 & 0.75591382 & 0.53512107 & 0.23526357 & 0.18159941 & -10606.81025 & 0.00174219 & -53.40643 \\
7 & -0.2928328 & 0.36707373 & 0.75591419 & 0.53512086 & 0.23526603 & 0.18160154 & -10606.81025 & 1.18E-05 & -53.406384 \\
8 & -0.2928328 & 0.36707379 & 0.75591419 & 0.53512086 & 0.23526602 & 0.18160153 & -10606.81025 & 1.60E-06 & -53.406384 \\
9 & -0.2928328 & 0.36707379 & 0.75591419 & 0.53512086 & 0.23526601 & 0.18160153 & -10606.81025 & 2.18E-07 & -53.406384 \\
10 & -0.2928328 & 0.36707379 & 0.75591419 & 0.53512086 & 0.23526601 & 0.18160153 & -10606.81025 & 1.09E-08 & -53.406384\\ \hline
\end{tabular}
}
\end{table*}

\clearpage
   \newpage
 \setcounter{equation}{0}
\setcounter{table}{0}
\section{Ethereum: Carr-Geman-Madan-Yor (CGMY) distributions}\label{eq:an4}
\begin{equation}
 \begin{aligned}
V(dx) =\left(\alpha\frac{e^{-\lambda_{+} x}}{x^{1+\beta}} \boldsymbol{1}_{x>0} + \alpha\frac{e^{-\lambda_{-}|x|}}{|x|^{1+\beta}} \boldsymbol{1}_{x<0}\right) dx
 \end{aligned}
 \end{equation}

  \begin{equation}
\begin{aligned}
\resizebox{0.9\hsize}{!}{$\Psi(\xi)=\mu\xi i+\alpha\Gamma(-\beta)\left[((\lambda_{+} - i\xi)^{\beta}-{\lambda_{+}}^{\beta})+((\lambda_{-}+i\xi)^{\beta}-{\lambda_{-}}^{\beta})\right]$} \label{eq:l27}
 \end{aligned}
 \end{equation}

  \begin{table}[ht]
\centering
\caption{ CGMY maximum-likelihood estimation for Ethereum return data}
\label{tabd1}
\begin{tabular}{@{} c|c|c|ccccc @{}}
\toprule
\multicolumn{1}{c|}{\textbf{Model}} & \multicolumn{1}{c|}{\textbf{Parameter}} & \multicolumn{1}{c|}{\textbf{Estimate}} & \multicolumn{1}{c|}{\textbf{Std Err}} & \multicolumn{1}{c|}{\textbf{z}} & \multicolumn{1}{c|}{\textbf{$Pr(Z >|z|)$}} & \multicolumn{2}{c}{\textbf{[95\% Conf.Interval]}}  \\ \toprule

\multirow{10}{*}{\textbf{GTS}} & \multirow{1}{*}{\textbf{$\mu$}} &\multirow{1}{*}{-0.147089} & \multirow{1}{*}{(0.079)} & \multirow{1}{*}{-1.86} & \multirow{1}{*}{6.3E-02} & \multirow{1}{*}{-0.302} & \multirow{1}{*}{-0.008}  \\

 & \multirow{1}{*}{\textbf{$\beta$}} & \multirow{1}{*}{0.398418} & \multirow{1}{*}{(0.127)} & \multirow{1}{*}{3.12} & \multirow{1}{*}{1.8E-03} & \multirow{1}{*}{0.148} & \multirow{1}{*}{0.649}  \\

 &\multirow{1}{*}{ \textbf{$\alpha$}} & \multirow{1}{*}{0.887161} & \multirow{1}{*}{(0.058)} & \multirow{1}{*}{15.22} & \multirow{1}{*}{1.2E-52} & \multirow{1}{*}{0.773} & \multirow{1}{*}{1.001}  \\

 & \multirow{1}{*}{\textbf{$\lambda_{+}$}} & \multirow{1}{*}{0.155369} & \multirow{1}{*}{(0.023)} & \multirow{1}{*}{6.56} & \multirow{1}{*}{5.2E-11} & \multirow{1}{*}{0.109} & \multirow{1}{*}{0.202}  \\

 & \multirow{1}{*}{\textbf{$\lambda_{-}$}} & \multirow{1}{*}{0.185991} & \multirow{1}{*}{(0.025)} & \multirow{1}{*}{7.29} & \multirow{1}{*}{2.9E-13} & \multirow{1}{*}{0.136} & \multirow{1}{*}{0.236}  \\ \cmidrule(l){2-8}

  & \multirow{1}{*}{\textbf{Log(ML)}} & \multirow{1}{*}{-9554} & \multirow{1}{*}{} & \multirow{1}{*}{} & \multirow{1}{*}{} & \multirow{1}{*}{} & \multirow{1}{*}{}  \\
 & \multirow{1}{*}{\textbf{AIC}} & \multirow{1}{*}{19118} & \multirow{1}{*}{} & \multirow{1}{*}{} & \multirow{1}{*}{} & \multirow{1}{*}{} &  \multirow{1}{*}{}  \\
 & \multirow{1}{*}{\textbf{BIK}} & \multirow{1}{*}{19149} & \multirow{1}{*}{} & \multirow{1}{*}{} & \multirow{1}{*}{} & \multirow{1}{*}{} &  \multirow{1}{*}{} \\  \bottomrule
\end{tabular}
\end{table}

\begin{table*}[ht]
\vspace{-0.3cm}
\centering
\caption{Convergence of the CGMY parameter for Ethereum return data}
\label{tabd2}
\resizebox{13cm}{!}{%
\begin{tabular}{ccccccccc}
\hline
\textbf{$Iterations$} & \textbf{$\mu$} & \textbf{$\beta$}  & \textbf{$\alpha$} &\textbf{$\lambda_{+}$} & \textbf{$\lambda_{-}$} & \textbf{$Log(ML)$} & \textbf{$||\frac{dLog(ML)}{dV}||$} & \textbf{$Max Eigen Value$} \\ \hline
1 & -0.48538 & 0.39043616 & 0.95824875 & 0.16671208 & 0.17078617 & -9596.2658 & 1653.57149 & -140.21456 \\
2 & -0.0545131 & 0.40148247 & 0.88205674 & 0.15875317 & 0.18060554 & -9554.6834 & 80.0921993 & -19.637869 \\
3 & -0.1479632 & 0.39049434 & 0.88271998 & 0.15631408 & 0.18704084 & -9553.9065 & 3.84112398 & -44.76325 \\
4 & -0.1465893 & 0.40345482 & 0.88868927 & 0.15450383 & 0.18507239 & -9553.9036 & 0.4436942 & -55.029934 \\
5 & -0.1472464 & 0.39683622 & 0.88667597 & 0.15564017 & 0.18628001 & -9553.9026 & 0.14094418 & -51.274906 \\
6 & -0.1470247 & 0.39907668 & 0.88736036 & 0.15525581 & 0.18587143 & -9553.9025 & 0.05563606 & -52.523819 \\
7 & -0.1471148 & 0.39816841 & 0.88708569 & 0.15541227 & 0.18603772 & -9553.9025 & 0.02143506 & -52.017698 \\
8 & -0.1470898 & 0.39842098 & 0.88716234 & 0.15536883 & 0.18599155 & -9553.9025 & 0.00019543 & -52.158334 \\
9 & -0.14709 & 0.39841855 & 0.88716161 & 0.15536924 & 0.18599199 & -9553.9025 & 1.16E-05 & -52.156981 \\
10 & -0.14709 & 0.39841867 & 0.88716164 & 0.15536922 & 0.18599197 & -9553.9025 & 1.78E-06 & -52.157046 \\
11 & -0.14709 & 0.39841869 & 0.88716165 & 0.15536922 & 0.18599197 & -9553.9025 & 2.71E-07 & -52.157055 \\
12 & -0.14709 & 0.39841869 & 0.88716165 & 0.15536922 & 0.18599197 & -9553.9025 & 4.14E-08 & -52.157057\\ \hline
\end{tabular}
}
\end{table*}
\clearpage
  \newpage
 \setcounter{equation}{0}
\setcounter{table}{0}
 \section{S\&P 500 index:  Bilateral Gamma (BG) distribution ($\beta_{-}=\beta_{+}=0$)} \label{eq:an5}
\begin{equation}
 \begin{aligned}
V(dx) =\left(\alpha_{+} \frac{e^{-\lambda_{+} x}}{x} \boldsymbol{1}_{x>0} + \alpha_{-} \frac{e^{-\lambda_{-} |x|}}{|x|} \boldsymbol{1}_{x<0}\right) dx 
 \end{aligned}
 \end{equation}

 \begin{equation}
\begin{aligned}
\Psi(\xi)=\mu\xi i - \alpha_{+}\log\left(1-\frac{1}{\lambda_{+}}i\xi\right)- \alpha_{-}\log\left(1+\frac{1}{\lambda_{-}}i\xi\right) \label {eq:l27b}
 \end{aligned}
 \end{equation}

  \begin{table}[ht]
\centering
\caption{ BG maximum-likelihood estimation for S\&P 500 return data}
\label{tabe1}
\begin{tabular}{@{} c|c|c|ccccc @{}}
\toprule
\multicolumn{1}{c|}{\textbf{Model}} & \multicolumn{1}{c|}{\textbf{Parameter}} & \multicolumn{1}{c|}{\textbf{Estimate}} & \multicolumn{1}{c|}{\textbf{Std Err}} & \multicolumn{1}{c|}{\textbf{z}} & \multicolumn{1}{c|}{\textbf{$Pr(Z >|z|)$}} & \multicolumn{2}{c}{\textbf{[95\% Conf.Interval]}}  \\ \toprule

\multirow{10}{*}{\textbf{GTS}} & \multirow{1}{*}{\textbf{$\mu$}} &\multirow{1}{*}{-0.031467} & \multirow{1}{*}{(0.010)} & \multirow{1}{*}{-3.07} & \multirow{1}{*}{2.1E-03} & \multirow{1}{*}{-0.052} & \multirow{1}{*}{-0.011}  \\

 & \multirow{1}{*}{\textbf{$\alpha_{+}$}} & \multirow{1}{*}{1.092741} & \multirow{1}{*}{(0.058)} & \multirow{1}{*}{18.98} & \multirow{1}{*}{2.6E-80} & \multirow{1}{*}{0.980} & \multirow{1}{*}{1.206}  \\

 & \multirow{1}{*}{\textbf{$\alpha_{-}$}} & \multirow{1}{*}{0.701784} & \multirow{1}{*}{(0.042)} & \multirow{1}{*}{16.80} & \multirow{1}{*}{2.3E-63} & \multirow{1}{*}{0.620} & \multirow{1}{*}{0.784}  \\

 &\multirow{1}{*}{\textbf{$\lambda_{+}$}} & \multirow{1}{*}{1.539690} & \multirow{1}{*}{(0.064)} & \multirow{1}{*}{22.82} & \multirow{1}{*}{3.1E-115} & \multirow{1}{*}{1.407} & \multirow{1}{*}{1.672}  \\

 & \multirow{1}{*}{\textbf{$\lambda_{-}$}} & \multirow{1}{*}{1.110737} & \multirow{1}{*}{(0.050)} & \multirow{1}{*}{22.07} & \multirow{1}{*}{6.6E-108} & \multirow{1}{*}{1.012} & \multirow{1}{*}{1.209}  \\ \cmidrule(l){2-8}

  & \multirow{1}{*}{\textbf{Log(ML)}} & \multirow{1}{*}{-4925} & \multirow{1}{*}{} & \multirow{1}{*}{} & \multirow{1}{*}{} & \multirow{1}{*}{} & \multirow{1}{*}{}  \\
 & \multirow{1}{*}{\textbf{AIC}} & \multirow{1}{*}{9859} & \multirow{1}{*}{} & \multirow{1}{*}{} & \multirow{1}{*}{} & \multirow{1}{*}{} &  \multirow{1}{*}{}  \\
 & \multirow{1}{*}{\textbf{BIK}} & \multirow{1}{*}{9890} & \multirow{1}{*}{} & \multirow{1}{*}{} & \multirow{1}{*}{} & \multirow{1}{*}{} &  \multirow{1}{*}{} \\  \bottomrule
\end{tabular}
\end{table}

\begin{table*}[ht]
\vspace{-0.3cm}
\centering
\caption{ Convergence of the BG parameter for S\&P 500 return data}
\label{tabe2}
\resizebox{13cm}{!}{%
\begin{tabular}{ccccccccc}
\hline
\textbf{$Iterations$} & \textbf{$\mu$} & \textbf{$\alpha_{+}$} & \textbf{$\alpha_{-}$} & \textbf{$\lambda_{+}$} & \textbf{$\lambda_{-}$} & \textbf{$Log(ML)$} & \textbf{$||\frac{dLog(ML)}{dV}||$} & \textbf{$Max Eigen Value$} \\ \hline
1 & 0 & 0.79242624 & 0.54224981 & 1.27974316 & 0.93713344 & -4951.1439 & 1138.53458 & -265.251 \\
2 & -0.0038447 & 0.93153413 & 0.64461254 & 1.41132138 & 1.05504868 & -4931.7583 & 549.025405 & -171.22804 \\
3 & -0.0103214 & 1.03062846 & 0.70198868 & 1.49426667 & 1.10555794 & -4926.8412 & 286.215345 & -126.18156 \\
4 & -0.0186317 & 1.07922912 & 0.71421996 & 1.53377391 & 1.11392285 & -4925.393 & 135.694287 & -113.12071 \\
5 & -0.0279475 & 1.09450205 & 0.70493092 & 1.54418172 & 1.10795103 & -4924.7065 & 38.0551545 & -116.58686 \\
6 & -0.0313951 & 1.09325663 & 0.70162581 & 1.54038766 & 1.10996346 & -4924.621 & 1.54417452 & -120.06271 \\
7 & -0.0314671 & 1.09274119 & 0.70178365 & 1.53969027 & 1.11073682 & -4924.6205 & 0.02788236 & -120.35435 \\
8 & -0.0314664 & 1.09276971 & 0.70182788 & 1.53971127 & 1.11079928 & -4924.6205 & 0.00198685 & -120.34482 \\
9 & -0.0314663 & 1.09277213 & 0.70183158 & 1.5397131 & 1.11080431 & -4924.6205 & 0.00016039 & -120.34394 \\
10 & -0.0314662 & 1.09277232 & 0.70183188 & 1.53971325 & 1.11080472 & -4924.6205 & 1.29E-05 & -120.34387 \\
11 & -0.0314662 & 1.09277234 & 0.7018319 & 1.53971326 & 1.11080475 & -4924.6205 & 1.04E-06 & -120.34387 \\
12 & -0.0314662 & 1.09277234 & 0.7018319 & 1.53971326 & 1.11080476 & -4924.6205 & 8.43E-08 & -120.34387 \\
13 & -0.0314662 & 1.09277234 & 0.7018319 & 1.53971326 & 1.11080476 & -4924.6205 & 6.80E-09 & -120.34387 \\
14 & -0.0314662 & 1.09277234 & 0.7018319 & 1.53971326 & 1.11080476 & -4924.6205 & 5.63E-10 & -120.34387 \\
15 & -0.0314662 & 1.09277234 & 0.7018319 & 1.53971326 & 1.11080476 & -4924.6205 & 5.73E-11 & -120.34387 \\ \hline
\end{tabular}
}
\end{table*}

\newpage

\setcounter{equation}{0}
\setcounter{table}{0}
\section{SPY ETF:  Bilateral Gamma (BG) distribution ($\beta_{-}=\beta_{+}=0$)}\label{eq:an6}
\begin{equation}
 \begin{aligned}
V(dx) =\left(\alpha_{+} \frac{e^{-\lambda_{+} x}}{x} \boldsymbol{1}_{x>0} + \alpha_{-} \frac{e^{-\lambda_{-} |x|}}{|x|} \boldsymbol{1}_{x<0}\right) dx 
 \end{aligned}
 \end{equation}

 \begin{equation}
\begin{aligned}
\Psi(\xi)=\mu\xi i - \alpha_{+}\log\left(1-\frac{1}{\lambda_{+}}i\xi\right)- \alpha_{-}\log\left(1+\frac{1}{\lambda_{-}}i\xi\right) \label {eq:l27b}
 \end{aligned}
 \end{equation}

  \begin{table}[ht]
\centering
\caption{ BG maximum-likelihood estimation for SPY EFT return data}
\label{tabf1}
\begin{tabular}{@{} c|c|c|ccccc @{}}
\toprule
\multicolumn{1}{c|}{\textbf{Model}} & \multicolumn{1}{c|}{\textbf{Parameter}} & \multicolumn{1}{c|}{\textbf{Estimate}} & \multicolumn{1}{c|}{\textbf{Std Err}} & \multicolumn{1}{c|}{\textbf{z}} & \multicolumn{1}{c|}{\textbf{$Pr(Z >|z|)$}} & \multicolumn{2}{c}{\textbf{[95\% Conf.Interval]}}  \\ \toprule

\multirow{10}{*}{\textbf{GTS}} & \multirow{1}{*}{\textbf{$\mu$}} &\multirow{1}{*}{0.015048} & \multirow{1}{*}{(0.012)} & \multirow{1}{*}{1.28} & \multirow{1}{*}{2.0E-01} & \multirow{1}{*}{-0.008} & \multirow{1}{*}{0.038}  \\

 & \multirow{1}{*}{\textbf{$\alpha_{+}$}} & \multirow{1}{*}{1.068239} & \multirow{1}{*}{(0.067)} & \multirow{1}{*}{16.02} & \multirow{1}{*}{8.6E-58} & \multirow{1}{*}{0.938} & \multirow{1}{*}{1.199}  \\

 & \multirow{1}{*}{\textbf{$\alpha_{-}$}} & \multirow{1}{*}{0.764449} & \multirow{1}{*}{(0.044)} & \multirow{1}{*}{17.33} & \multirow{1}{*}{3.0E-67} & \multirow{1}{*}{0.678} & \multirow{1}{*}{0.851}  \\

 &\multirow{1}{*}{\textbf{$\lambda_{+}$}} & \multirow{1}{*}{1.525718} & \multirow{1}{*}{(0.073)} & \multirow{1}{*}{20.98} & \multirow{1}{*}{1.1E-97} & \multirow{1}{*}{1.383} & \multirow{1}{*}{1.668}  \\

 & \multirow{1}{*}{\textbf{$\lambda_{-}$}} & \multirow{1}{*}{1.156439} & \multirow{1}{*}{(0.052)} & \multirow{1}{*}{22.15} & \multirow{1}{*}{1.1E-108} & \multirow{1}{*}{1.054} & \multirow{1}{*}{1.259}  \\ \cmidrule(l){2-8}

  & \multirow{1}{*}{\textbf{Log(ML)}} & \multirow{1}{*}{-4899} & \multirow{1}{*}{} & \multirow{1}{*}{} & \multirow{1}{*}{} & \multirow{1}{*}{} & \multirow{1}{*}{}  \\
 & \multirow{1}{*}{\textbf{AIC}} & \multirow{1}{*}{9807} & \multirow{1}{*}{} & \multirow{1}{*}{} & \multirow{1}{*}{} & \multirow{1}{*}{} &  \multirow{1}{*}{}  \\
 & \multirow{1}{*}{\textbf{BIK}} & \multirow{1}{*}{9838} & \multirow{1}{*}{} & \multirow{1}{*}{} & \multirow{1}{*}{} & \multirow{1}{*}{} &  \multirow{1}{*}{} \\  \bottomrule
\end{tabular}
\end{table}

\begin{table*}[ht]
\vspace{-0.3cm}
\centering
\caption{Convergence of the BG parameter for SPY EFT return data}
\label{tabf2}
\resizebox{13cm}{!}{%
\begin{tabular}{ccccccccccc}
\hline
\textbf{$Iterations$} & \textbf{$\mu$} & \textbf{$\alpha_{+}$} & \textbf{$\alpha_{-}$} & \textbf{$\lambda_{+}$} & \textbf{$\lambda_{-}$} & \textbf{$Log(ML)$} & \textbf{$||\frac{dLog(ML)}{dV}||$} & \textbf{$Max Eigen Value$} \\ \hline
1 & 0 & 0.78775729 & 0.59711061 & 1.28855513 & 1.01435308 & -4918.7331 & 406.35365 & -252.28104 \\
2 & 0.02867773 & 0.97562263 & 0.67572827 & 1.46822249 & 0.97596762 & -4908.5992 & 226.190986 & -116.11753 \\
3 & 0.02727407 & 1.05127517 & 0.78618306 & 1.51846501 & 1.17883595 & -4899.0798 & 45.2281041 & -96.788275 \\
4 & 0.00834089 & 1.07692251 & 0.75226045 & 1.5348003 & 1.14577232 & -4898.9955 & 131.637516 & -107.77617 \\
5 & 0.01126962 & 1.07242358 & 0.7568497 & 1.53011913 & 1.1494212 & -4898.751 & 48.0005418 & -103.03258 \\
6 & 0.01386478 & 1.06933921 & 0.76167668 & 1.52688303 & 1.15363987 & -4898.6763 & 11.3246873 & -100.1483 \\
7 & 0.01492745 & 1.06823047 & 0.76397541 & 1.52573245 & 1.15588409 & -4898.6693 & 0.98802026 & -99.171136 \\
8 & 0.01504464 & 1.06821119 & 0.76439389 & 1.52569528 & 1.15636567 & -4898.6693 & 0.02529683 & -99.040575 \\
9 & 0.01504742 & 1.06823539 & 0.76444163 & 1.5257152 & 1.15642976 & -4898.6693 & 0.00300624 & -99.030178 \\
10 & 0.01504762 & 1.06823881 & 0.76444764 & 1.52571803 & 1.15643796 & -4898.6693 & 0.00038489 & -99.028926 \\
11 & 0.01504764 & 1.06823925 & 0.76444841 & 1.52571839 & 1.15643901 & -4898.6693 & 4.92E-05 & -99.028765 \\
12 & 0.01504765 & 1.06823931 & 0.76444851 & 1.52571844 & 1.15643915 & -4898.6693 & 6.30E-06 & -99.028745 \\
13 & 0.01504765 & 1.06823932 & 0.76444852 & 1.52571845 & 1.15643917 & -4898.6693 & 1.32E-08 & -99.028742 \\
14 & 0.01504765 & 1.06823932 & 0.76444852 & 1.52571845 & 1.15643917 & -4898.6693 & 1.69E-09 & -99.028742 \\
15 & 0.01504765 & 1.06823932 & 0.76444852 & 1.52571845 & 1.15643917 & -4898.6693 & 2.23E-10 & -99.028742 \\ \hline
\end{tabular}
}
\end{table*}
\end{appendices}

\end{document}